\numberwithin{theorem}{section}
\newcommand{\TheTitle}{Plasmonic interaction between nanospheres} 
\newcommand{\TheAuthors}{Sanghyeon Yu and Habib Ammari}
\headers{\TheTitle}{\TheAuthors}
\title{{\TheTitle}
}
\author{ Sanghyeon Yu\thanks{Department of Mathematics, ETH Z\"urich, R\"amistrasse 101, CH-8092 Z\"urich, Switzerland (\email{sanghyeon.yu@sam.math.ethz.ch},\email{habib.ammari@math.ethz.ch}).} \and Habib Ammari\footnotemark[1] }
\def\ep{\epsilon}
\newcommand{\Bc}{\mathbf{c}}
\newcommand{\ds}{\displaystyle}
\newcommand{\beq}{\begin{equation}}
\newcommand{\eeq}{\end{equation}}
\newcommand{\Br}{\mathbf{r}}
\begin{document}

\maketitle

\begin{abstract}
   When metallic (or plasmonic) nanospheres are nearly touching, strong concentration of light can occur in the narrow gap regions. This phenomenon has a potential application in nanophotonics, biosensing and spectroscopy. The understanding of the strong interaction between the plasmonic spheres turns out to be quite challenging. Indeed, an extremely high computational cost is required to compute the electromagnetic field. Also, the classical method of image charges, which is effective for dielectric spheres system, is not valid for plasmonic spheres because of their negative permittivities. Here we develop new analytical and numerical methods for the plasmonic spheres system by clarifying the connection between transformation optics and the method of image charges. We derive fully analytic solutions valid for two plasmonic spheres. We then develop a hybrid numerical scheme for computing the field distribution produced by an arbitrary number of spheres. Our method is highly efficient and accurate even in the nearly touching case and is valid for plasmonic spheres.
\end{abstract}

\begin{keywords}
   plasmon resonance, metallic nanospheres, transformation optics, method of image charges, analytic solution, hybrid numerical scheme
\end{keywords}

\begin{AMS}
   35J05, 65N80, 78A25
\end{AMS}

\section{Introduction}

Controlling light at the nanoscale is a challenging problem. By using conventional optical devices, one cannot focus light into a spot smaller than a micron-sized region due to the diffraction limit. To overcome this fundamental difficulty, new optical materials are required. Recently, noble metal nanoparticles such as gold and silver have been extensively studied and utilized due to their unique optical properties. When visible light is incident, the electromagnetic fields near the surfaces of the particles show strongly resonant and oscillating behavior. In other words, they can strongly interact with light. This phenomenon is called the {plasmon resonance}. So metallic nanoparticles are often called  plasmonic nanoparticles. For practical applications of the plasmon resonance in nanophotonics, we refer to \cite{Plasmon_Rev_Gramotnev, PH, Schuller, Pen2, Pen3}. Roughly speaking, the plasmon resonance originates from the negative permittivity of metals. Contrary to plasmonic nanoparticles, ordinary dielectric nanoparticles with positive permittivity cannot strongly interact with light.

Among various plasmonic structures, the system of metallic spheres is of fundamental importance. When the spheres get close to touching, their interaction is so strong that plasmonic resonant fields can be greatly squeezed into the narrow gap region between them \cite{Pen, Rom, Sweatlock, Stockman_PRL, Nordlander1, Ory1}. Moreover, they can support collective resonance modes such as Fano resonances \cite{Fano, PH_fano}. These phenomena can have great impact on the design of nanophotonic devices, biosensing and spectroscopy \cite{Pen2,Pen3,Schuller, Plasmon_Rev_Gramotnev}. However, the problem of strong plasmonic interaction between nearly touching spheres is quite challenging to investigate both analytically and numerically.

We first discuss the analytic difficulty. There are two approaches for understanding the interaction between two spheres: (i) Transformation Optics (TO) and (ii) the method of image charges. TO is a design method for novel optical devices which control electromagnetic waves in an unprecedented way, including invisibility cloaks \cite{Pendry2006, TO_Uhlmann, Dolin}. Recently, the TO approach has been applied to analyze various singular plasmonic structures. It provides a novel physical insight into light harvesting \cite{Pen2, Pen3}. In particular, TO gives exact analytical solutions for 2D systems. But, the 3D case is more complicated. For two 3D plasmonic spheres, Pendry et al. \cite{Pen} derived a quasi-analytic solution using a TO inversion mapping which transforms two spheres into a concentric shell. But their solution is not fully analytic and still requires a numerical computation.

Next we consider the method of image charges. The principle of the image method is to find fictitious sources which generate the desired field. For two 2D dielectric cylinders, an exact image series solution and its asymptotic properties were derived \cite{McP, McPhedran_Milton, AKLLL}. See also \cite{ah, BLY-ARMA-09,BLL-ARMA-15,ET-ARMA-13,Gorb-MMS-16,LY3,Yun-SIAP-07}. Although the 3D case is more difficult, Poladian succeeded in deriving an approximate but fully analytical image series solution for two 3D dielectric spheres \cite{Pol_thesis,Pol,Pol2}. Unfortunately, the image series solution is not convergent when the permittivity is negative. So it cannot describe the plasmonic interaction between two spheres. Therefore, both TO and the image method cannot provide a complete analytical description valid for two plasmonic spheres.

We now discuss the numerical difficulty. Let us consider an arbitrary number of spheres. If the spheres are well separated, computing the field distribution can be efficiently done. However, as the spheres get closer, the required computational cost dramatically increases. In this case, the field in the narrow gap between the spheres becomes nearly singular. So the multipole expansion method requires a large number of spherical harmonics and the finite element method (or boundary element method) requires a very fine mesh in the gap. Moreover, the linear systems to be solved are ill-conditioned. So conventional numerical methods are time consuming or inaccurate for this extreme case. Although TO approach provides an efficient numerical scheme, it cannot be applied when the number of spheres is greater than two. For an arbitrary number of 2D dielectric cylinders, Cheng and Greengard developed a hybrid numerical scheme combining the multipole expansion and the method of image charges \cite{CG}. They used the image source series to capture the close-to-touching interaction. Their scheme is extremely efficient and accurate even if the spheres are nearly touching. Their scheme has been generalized to 3D perfect conducting spheres \cite{C} and 3D dielectric spheres \cite{Gan}. However, as already mentioned, the image series is not convergent when the permittivity of the sphere is negative. Hence their hybrid scheme cannot be used for plasmonic sphere clusters. In short, there are currently no efficient numerical schemes for nearly touching plasmonic spheres system.

The purpose of this paper is to solve all these analytical and numerical difficulties. Specifically, our goal is twofold: (i) to derive a fully analytical solution for two plasmonic spheres, (ii) to develop a hybrid numerical scheme for computing the field generated by an arbitrary number of plasmonic spheres which are nearly touching.

The key idea of our approach is to establish a connection between TO and the method of image charges, which is interesting in itself. Indeed, we find explicit formulas which can convert image source series into TO-type solutions. As already mentioned, the image series cannot describe the plasmonic interaction due to the non-convergence. So we convert Poladian's image series solution into a TO-type series by using our connection formula, resulting in fully analytical solutions valid for two plasmonic spheres. 
Next we modify Cheng and Greengard's hybrid numerical scheme by replacing the image series with our new analytic TO-type solutions. We also show extreme efficiency and accuracy of the resulting scheme by presenting several numerical examples. Our proposed scheme is a result of the interplay between three analytical approaches: TO, the image method, and the multipole expansion. We expect that our results will play a fundamental role in studying the plasmonic interaction between nanospheres.

\section{Problem formulation}
We consider a system of $N$  spheres $B_j,j=1,2,...,N$ where each individual sphere $B_j$ has  permittivity $\epsilon_B$ and radius $R$. We assume that the background has the permittivity $\epsilon_0=1$. Since the plasmonic nanospheres are much smaller than the wavelength of the visible light, we can adopt the quasi-static approximation for the electromagnetic fields. Then the electric field $\mathbf{E}$ is represented as $\mathbf{E}(\Br,t) = \Re \{-\nabla V(\Br)e^{i\omega t}\}$  where $V$ is the (quasi-static) electric potential and $\omega$ is the frequency. We also assume a uniform incident field with intensity $E_0$ is applied in the $z$-direction. Then the potential $V$ satisfies
\beq\label{eqn:potentialV}
\begin{cases}
\nabla \cdot(\ep \nabla V)=0, &\quad \mbox{in }\mathbb{R}^3,
\\
V(\Br)= - E_0 z + O(|\Br|^{-2}), &\quad \mbox{as } |\Br|\rightarrow\infty,
\end{cases}
\eeq
where $\ep$ is the permtittivity distribution which takes the value $\ep_B$ (or $\ep_0=1$) on each sphere (or on the backround), respectively. It can be shown that the above equation is equivalent to the following transmission problem:
\beq\label{eqn:potentialV_trans}
\begin{cases}
\ds \Delta V =0, &\quad \mbox{in }\mathbb{R}^3 \setminus (\cup_{j=1}^N \partial B_j),
\\[0.2em]
\ds V|_-=V|_+, &\quad \mbox{on }\partial B_j, \quad j=1,2,...,N,
\\[0.5em]
\ds \ep_B\frac{\partial V}{\partial \mathbf{n}}\Big|_- = \frac{\partial V}{\partial \mathbf{n}}\Big|_+
, &\quad \mbox{on }\partial B_j, \quad j=1,2,...,N,
\\[0.5em]
V(\Br)= - E_0 z + O(|\Br|^{-2}), &\quad \mbox{as } |\Br|\rightarrow\infty,
\end{cases}
\eeq
where $\mathbf{n}$ is an outward unit normal vector and the subscript $+$ (or $-$) means the limit from outside (or inside), respectively.

The permittivity $\ep_B$ of each metallic sphere depends on the frequency $\omega$. According to the Drude model, $\ep_B$  is modeled as
\beq\label{Drude_model}
\ep_B=\ep_B(\omega) = 1-\frac{\omega_p^2}{\omega(\omega+i\gamma)},
\eeq
where $\omega_p>0$ is the plasma frequency and $\gamma>0$ is a small damping parameter. It is clear that $\Re\{\ep_B\}<0 $ for $\omega<\omega_p$. Note that, as $\omega\rightarrow 0$, the permittivity $|\ep_B|$ goes to infinity. In this paper, we assume a silver nanoparticle and fit Palik's data \cite{Palik} for silver by adding a few Lorentz terms to \cref{Drude_model}. For specific values of the fitting parameters, see \cite{vander_Pen}.

\section{Plasmon resonance}
Here we briefly discuss the mathematical structure of plasmonic resonant fields.
Recently, a rigorous and general theory of plasmon resonances for nanoparticles has been developed using the spectral analysis of the Neumann-Poincar\'e (NP) operator in \cite{pierre,AK,matias_scalar,matias}.
We emphasize that the theory is valid for arbitrary shaped particles with smooth boundary. 
There, it was shown that there is a sequence $\{\ep_n\}_{n=1}^\infty$ of negative real permittivities such that $\ep_1 \leq \ep_2 \leq \cdots <-1$ and there exists a nontrivial solution $V_n$ to the problem \cref{eqn:potentialV_trans} with $\ep_B = \ep_n$ and $E_0=0$. Moreover, the solution $V$ to the original problem \cref{eqn:potentialV_trans} can be written in the following spectral form (it is slightly modified for our purpose):
\beq\label{eqn:V_spectral}
V(\Br) = -E_0 z + \sum_{n=1}^\infty \frac{c_n}{\ep_B-\ep_n} V_n(\Br),
\eeq
where the $c_n$ are some constant coefficients. Typically, the solution $V_n$ shows oscillating behavior along the boundary of the particles. These solutions $V_n$ are called the plasmon resonance modes. We also call $\ep_n$ as the plasmonic resonant permittivities. It is also worth mentioning that $\lambda_n := (\ep_n+1)/(2(\ep_n-1))$ are eigenvalues of the NP operator and $V_n$ is a single layer potential of the  eigenfunction associated to $\lambda_n$.

The formula \cref{eqn:V_spectral} clearly shows how the plasmon resonance occurs. As already mentioned, the real part of $\ep_B(\omega)$ can take any negative values from $0$ to $-\infty$ for $\omega<\omega_p$. So $\ep_B$ can be close to some resonant permittivity $\ep_n$ since $\ep_n<-1$. Then, in view of \cref{eqn:V_spectral}, we see that the associated plasmon resonance mode $V_n$ will be amplified. As a result, the nanoparticles strongly couple with the incident light. On the contrary, if the permittivity $\ep_B$ is positive, the plasmon resonance never occurs.

\begin{figure*}
\begin{center}
\includegraphics[height=5.3cm]{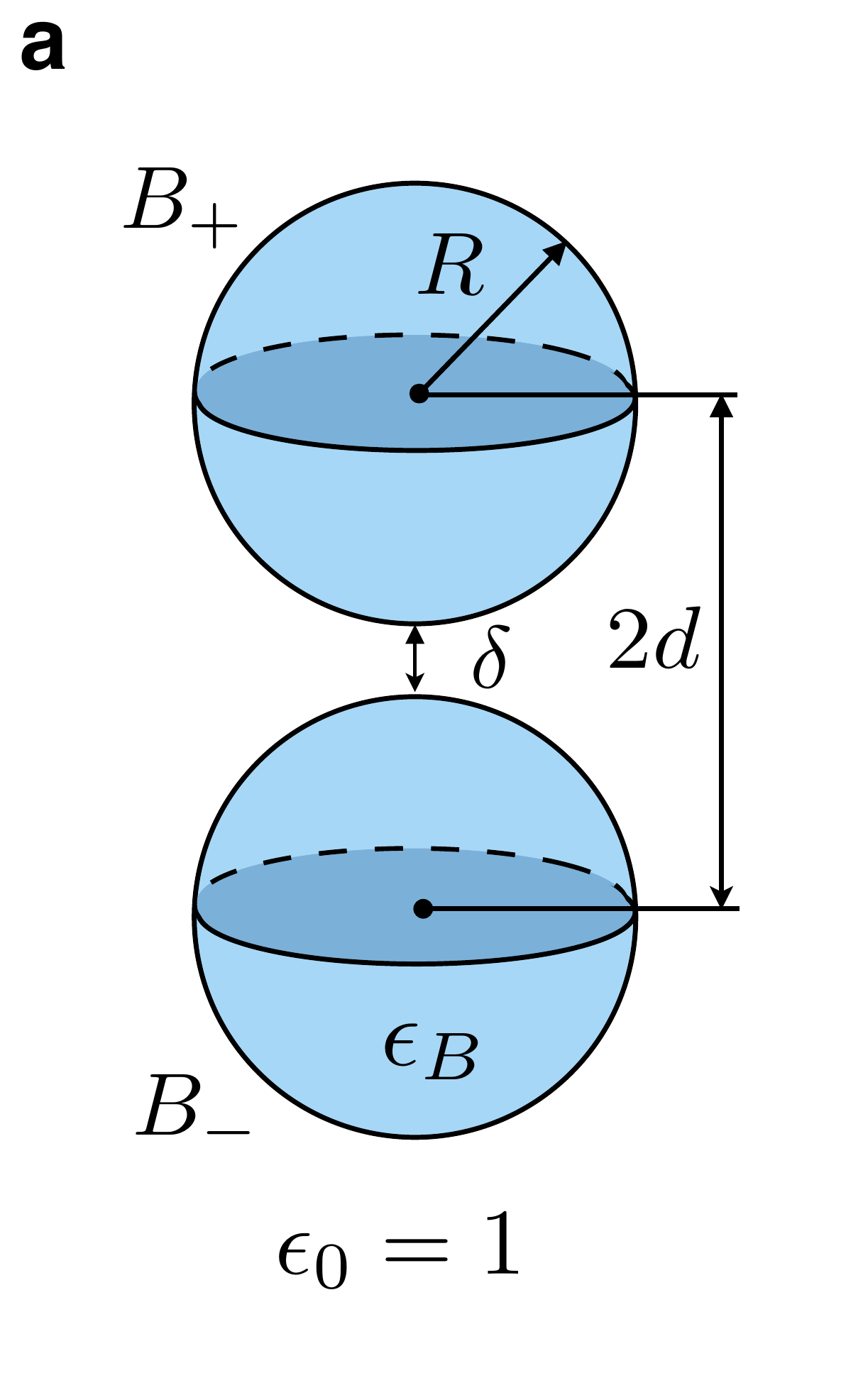}
\hskip 1cm
\includegraphics[height=5.3cm]{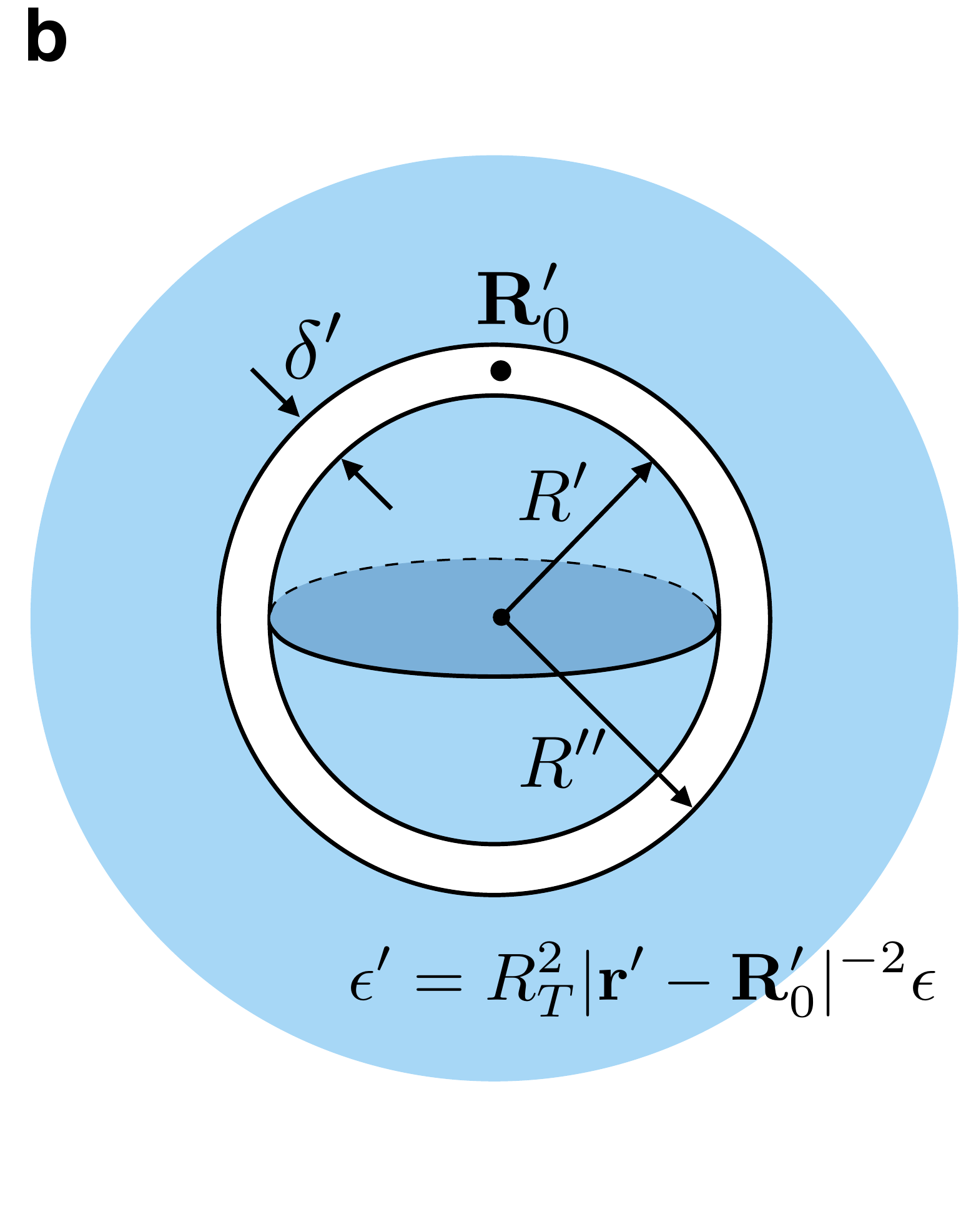}
\end{center}
\caption{Two spheres and the TO inversion mapping. (a) Two identical spheres, each of radius $R$ and the permittivity $\ep_B$, are separated by a distance  $\delta$. The distance between their centers is $2d$. The background permittivity is $\ep_0=1$. (b) The TO inverion mapping transforms the lower sphere $B_-$ (or the upper sphere $B_+$) into a sphere of radius $R'$ (or a hollow sphere of radius $R''$) centered at the origin, respectively.}
\label{fig1}
\end{figure*}

\section{Transformation Optics approach for two spheres}

Here we briefly review the TO approach by Pendry et al. \cite{Pen}. Let us assume there are only two spheres, {\it i.e.}, $N=2$. We first need to fix notations. Suppose that two spheres $B_1$ and $B_2$ are centered at $(0,0,+d)$ and $(0,0,-d)$, respectively. For convenience, let us denote $B_1=B_+$ and $B_2=B_-$. We also let $\delta$ be the gap distance between the two spheres. See Figure \ref{fig1}a.

To transform two spheres into a concentric shell, Pendry et al. \cite{Pen} introduced the inversion transformation $\Phi$ defined as
\begin{equation*}
\Br'=\Phi(\Br)= R_T^2 ({\Br-\mathbf{R}_0})/{|\Br-\mathbf{R}_0|^2} +\mathbf{R}_0',
\nonumber
\end{equation*}
where $\mathbf{R}_0, \mathbf{R}_0'$ and $R_T$ are given parameters (for the details, see Appendix \ref{sect1}). Then the transformed potential $V':=V\circ \Phi^{-1}$ satisfies $\nabla \cdot (\ep' \nabla V')=0$ where $\ep'$ is the transformed permittivity distribution defined by $\ep'(\Br') = R_T^2|\Br'-\mathbf{R}_0'|^{-2}\ep$.  Then, by taking advantage of the symmetry of the shell, they represented the potential $V$ in terms of the following basis functions:
\begin{equation*}
\mathcal{M}_{n,\pm}^m(\Br) = |\Br'-\mathbf{R}_0'|(r')^{\pm (n+\frac{1}{2})-\frac{1}{2}} Y_{n}^m(\theta',\phi'),
\end{equation*}
where $Y_{n}^m$ are the spherical harmonics. We will call $\mathcal{M}_{n,\pm}^m$ a TO basis.

Then the potential $V$ outside the two spheres in a uniform field $(0,0,E_0)$ can be represented as follows:
\beq\label{TO_V_eqn_main}
V(\Br) =  - E_0 z + \sum_{n=0}^\infty A_n \big(\mathcal{M}_{n,+}^0(\Br) -  \mathcal{M}_{n,-}^0(\Br)\big).
\eeq
Here, the coefficients $A_n$ can be determined by solving some tridiagonal system or recurrence relations (see Appendix \ref{app_sec:recur_An} for the details). Unfortunately, it cannot be solved analytically and a numerical computation is required to get the $A_n$. It is worth mentioning that Goyette and Navon \cite{GN} derived a similar solution using bispherical coordinates.

We will derive an approximate analytical expression for $A_n$ by establishing the explicit connection between TO and the method of image charges. We shall also see that our approximate expression captures the singular nature of the close-to-touching interaction completely.

\begin{figure*}
\begin{center}
\includegraphics[height=3.8cm]{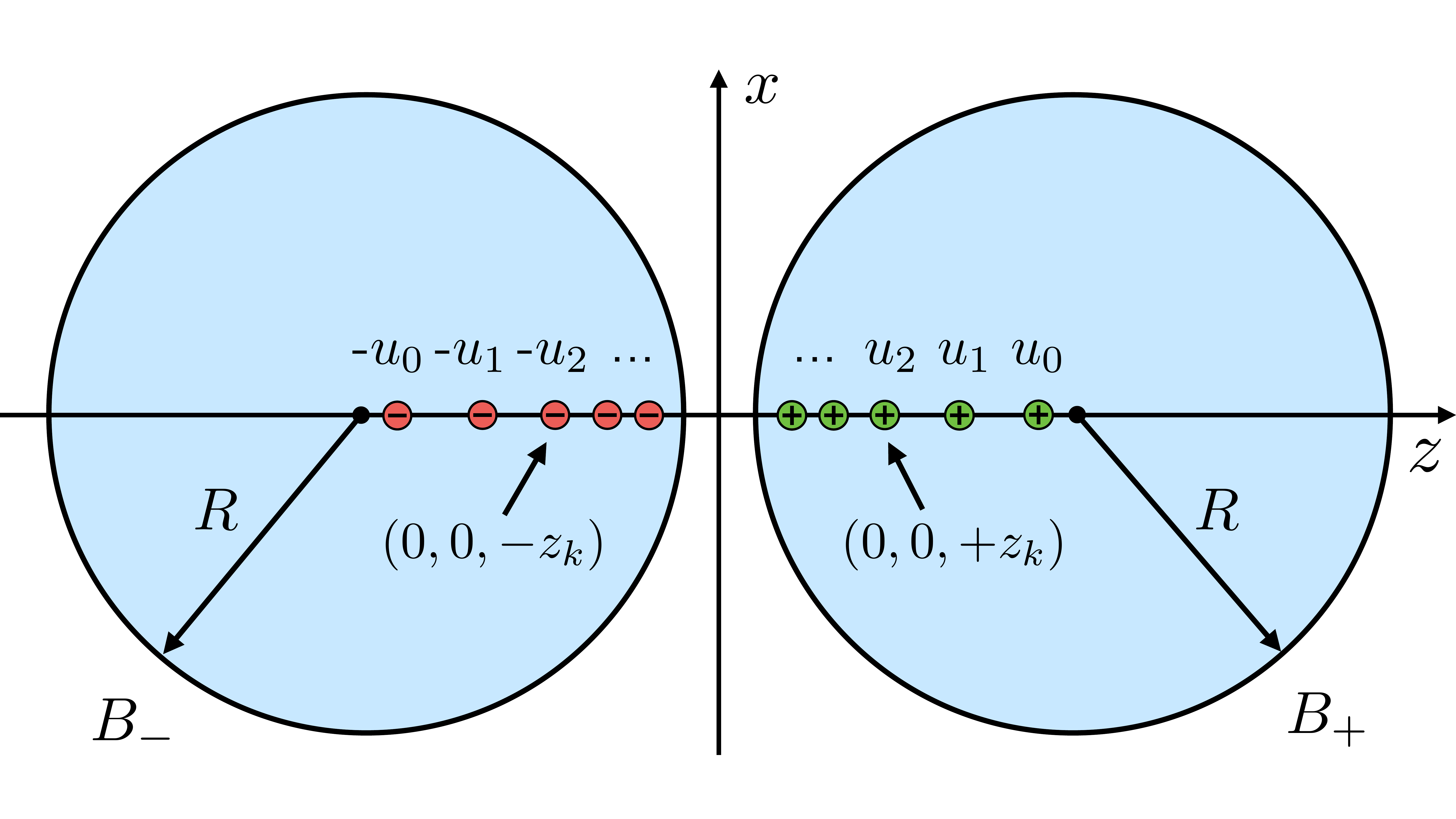}
\end{center}
\caption{Image charges for two spheres. Red and green circles represent image charges placed along the $z$-axis.}
\label{figure21}
\end{figure*}

\section{Method of image charges for two spheres}
Now we discuss the method of images. In the case of two 2D circular cylinders, the exact solution is represented as an infinite series of image point sources. McPhedran, Poladian and Milton \cite{McP} derived its asymptotic properties in the nearly touching case by approximating the sequence of image charges or dipoles. However, for two 3D dielectric spheres, an exact solution cannot be obtained due to the appearance of a continuous line image source \cite{Lindell1, Lindell2, Neumann}.  Poladian observed that the continuous source can be well approximated by a point charge and then the imaging rule becomes similar to the 2D case. He then derived an approximate but analytic image series solution and its asymptotic properties \cite{Pol_thesis,Pol,Pol2}. See also \cite{BLY-ARMA-09,Gorb-MMS-16,KLY-MA-15,KLY-SIAP-14,Lekner-JE-11,Lekner-PRSA-12,LY-CPDE-09,Yun-arXiv}.

Let us briefly state Poladian's solution for two 3D dielectric spheres (for the reader's convenience, we include the details of Poladian's image method in Appendix \ref{sec:Poladian}). Let $\tau = ({\ep_B-1})/({\ep_B+1})$, $s=\cosh^{-1}(d/R)$ and $\alpha=R\sinh s$. Suppose that two point charges of strength $\pm 1$ are located at  $(0,0,\pm  z_0)\in B_\pm$, respectively. By Poladian's imaging rule, they produce an infinite series of image charges of strength $\pm u_k$ at $(0,0,\pm z_k)$ for $k=0,1,2,\cdots$, where $z_k$ and $u_k$ are given by
\beq\label{zkuk}
{z_k} =  \alpha  \coth (ks+s+t_0),\quad u_k = \tau^{k}  \frac{\sinh (s+t_0)}{\sinh (ks+s+t_0)}.
\eeq
Here, the parameter $t_0$ is such that $z_0 =\alpha \coth (s+t_0)$. See Figure \ref{figure21}. The potential $U(\Br)$ generated by all the above image charges is given by
\beq\label{image_series_U}
U(\Br) = \sum_{k=0}^\infty u_k ( G(\Br-\mathbf{z}_k) - G(\Br+\mathbf{z}_k) ),
\eeq
where $\mathbf{z}_k=(0,0,z_k)$ and $G(\Br) = {1}/({4\pi|\Br|})$ is the potential generated by a unit point charge located at the origin.

Let us turn to the solution $V$ to the problem \cref{eqn:potentialV_trans}, which is the potential generated by the two spheres $B_+\cup B_-$ under a uniform incident field $(0,0,E_0)$. Let $p_0$ be the induced polarizability when a single sphere is subjected to the uniform incident field, that is, $p_0 =  E_0 R^3 {2\tau }/({3-\tau})$. Using the potential $U(\Br)$, we can represent the approximate solution for $V(\Br)$ as follows (see Appendix \ref{app_Poladian_uniform} for its derivation): for $|\tau|\approx 1$, we have
\beq\label{image_approximate_final}
V(\Br)\approx -E_0 z + 4\pi p_0 \frac{\partial  (U(\Br))}{\partial z_0}\Big|_{z_0=d}+ Q U(\Br)|_{z_0=d},
\eeq
where $Q$ is a constant chosen so that the right-hand side in equation (\ref{image_approximate_final}) has no net flux on the surface of each sphere. The accuracy of the approximate formula \cref{image_approximate_final} improves as {$|\ep_B|$} increases and it becomes exact when {$|\ep_B|=\infty$}. Moreover, its accuracy is pretty good even if the value of {$|\ep_B|$} is moderate.

We now explain the difficulty in applying the the image series solution \cref{image_approximate_final} to the plasmonic spheres. In view of the expressions \cref{zkuk} for $u_k$, we can see that the image series solution \cref{image_approximate_final} is not convergent when $|\tau|>e^s$. For plasmonic materials such as gold and silver, the real part of the permittivity $\ep_B$ is negative over optical frequencies and then the corresponding parameter $|\tau|$ can attain any value in the interval $(e^s,\infty)$. Moreover, it turns out that all the plasmonic resonant values for $\tau$ are contained in the set $\{\tau\in\mathbb{C}:|\tau|>e^s\}$. So, the image method solution \cref{image_approximate_final} cannot describe the plasmonic interaction between the spheres due to the non-convergence.

\section{Connection formula from image charges to TO}

Now we clarify the connection between TO and the method of image charges. We derive an explicit formula which converts an image charge to TO-type solutions as shown in the following lemma (see Appendix \ref{app_single_image_TO} for its proof).
\begin{lemma}\label{lem_connect_charge_TO}
(Converting an image charge to TO) The potential $u_k G(\Br\mp \mathbf{z}_k)$ generated by the image charge at $\pm\mathbf{z}_k$ can be rewritten using the TO basis as follows:
for $\Br\in\mathbb{R}^3\setminus(B_+ \cup B_-)$,
\beq\label{connect_charge_TO}
u_k G(\Br\mp\mathbf{z}_k) = \frac{\sinh(s+t_0)}{4\pi \alpha}\sum_{n=0}^\infty \big[\tau e^{-(2n+1)s}\big]^k 
e^{-(2n+1)(s+t_0)}
\mathcal{M}^0_{n,\pm}(\Br).
\eeq
\end{lemma}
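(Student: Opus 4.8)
The plan is to push the single image‑charge potential through the inversion $\Phi$ and then expand the result in spherical harmonics adapted to the concentric‑sphere geometry. I would rely on the one identity that makes inversions useful here: for any two points $\Br,\mathbf{z}$ with images $\Br'=\Phi(\Br)$ and $\mathbf{z}'=\Phi(\mathbf{z})$, the inversion distorts distances by
\[
|\Br'-\mathbf{z}'| = R_T^2\,\frac{|\Br-\mathbf{z}|}{|\Br-\mathbf{R}_0|\,|\mathbf{z}-\mathbf{R}_0|},\qquad
|\Br'-\mathbf{R}_0'| = \frac{R_T^2}{|\Br-\mathbf{R}_0|}.
\]
The second relation is exactly the prefactor appearing in the TO basis $\mathcal{M}^0_{n,\pm}$; recognizing it as the Kelvin conformal factor is the conceptual heart of the connection. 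Applying both with $\mathbf{z}=\pm\mathbf{z}_k$, I would rewrite
\[
u_k G(\Br\mp\mathbf{z}_k) = \frac{u_k}{4\pi\,|\mathbf{z}_k-\mathbf{R}_0|}\,|\Br'-\mathbf{R}_0'|\,\frac{1}{|\Br'\mp\mathbf{z}_k'|},
\]
so that the whole $\Br$‑dependence is now carried by $|\Br'\mp\mathbf{z}_k'|^{-1}$ together with the TO prefactor.

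Next I would expand $|\Br'\mp\mathbf{z}_k'|^{-1}$ in the Legendre (i.e.\ $m=0$ spherical‑harmonic) series centered at the common center of the concentric spheres, which is the image of the limiting point opposite to $\mathbf{R}_0$ and is the origin of the primed frame in the normalization of Appendix~\ref{sect1}. Since $\mathbf{z}_k$ lies on the axis, $\mathbf{z}_k'$ does too and only $m=0$ terms survive. The exterior region $\mathbb{R}^3\setminus(B_+\cup B_-)$ maps to the spherical shell between the two image spheres; the charge at $+\mathbf{z}_k$ sits in $B_+$, so its image $\mathbf{z}_k'$ lies outside the shell, forcing the branch $r'<|\mathbf{z}_k'|$ and the growing powers $(r')^{n}$, i.e.\ the family $\mathcal{M}^0_{n,+}$, whereas the charge at $-\mathbf{z}_k\in B_-$ has its image inside the inner sphere, forcing $r'>|\mathbf{z}_k'|$ and the decaying powers $\mathcal{M}^0_{n,-}$. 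This is precisely how the $\pm$ dichotomy of the statement arises, and it simultaneously pins down the region of validity.

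The crux is the coefficient bookkeeping, carried out on the axial radius $\rho_k := |\mathbf{z}_k'|$. Writing $\xi_k := ks+s+t_0$, so that $z_k=\alpha\coth\xi_k$, I would compute $\Phi(\mathbf{z}_k)$ explicitly using the parameters of Appendix~\ref{sect1} with $\mathbf{R}_0=(0,0,\alpha)$. The key simplification is that
\[
|\mathbf{z}_k-\mathbf{R}_0| = \frac{\alpha\,e^{-\xi_k}}{\sinh\xi_k},\qquad \rho_k = \frac{R_T^2}{2\alpha}\,e^{2\xi_k},
\]
where the constant $-R_T^2/(2\alpha)$ produced by $e^{\xi_k}\sinh\xi_k=(e^{2\xi_k}-1)/2$ cancels against the centering translation $\mathbf{R}_0'$, leaving a clean exponential $\rho_k\propto e^{2\xi_k}$. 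Substituting this into the expansion yields the coefficient
\[
\frac{u_k}{4\pi|\mathbf{z}_k-\mathbf{R}_0|}\,\rho_k^{-(n+1)} = \frac{u_k\sinh\xi_k}{4\pi\alpha}\,e^{-(2n+1)\xi_k},
\]
and the identity $u_k\sinh\xi_k=\tau^k\sinh(s+t_0)$ (immediate from \cref{zkuk}) together with $e^{-(2n+1)\xi_k}=[e^{-(2n+1)s}]^k e^{-(2n+1)(s+t_0)}$ collapses this to exactly $\tfrac{\sinh(s+t_0)}{4\pi\alpha}[\tau e^{-(2n+1)s}]^k e^{-(2n+1)(s+t_0)}$, as claimed.

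I expect the main obstacle to be this final step: getting $\rho_k$ and the factor $|\mathbf{z}_k-\mathbf{R}_0|$ right with the explicit inversion parameters, and verifying that the hyperbolic factors telescope so cleanly — in particular that the $\pm\tfrac12$ constants cancel and that no surviving $n$‑dependent scale remains, which forces the normalization $R_T^2=2\alpha$ (equivalently, a dimensionless primed radial coordinate) to be taken from Appendix~\ref{sect1}. A secondary point needing care is the harmonic convention: the clean coefficient appears only if $Y_n^0$ is read as the Legendre polynomial $P_n(\cos\theta')$ rather than the $L^2$‑normalized harmonic, so I would fix that convention before matching.
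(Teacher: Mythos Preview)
Your proposal is correct and is essentially the paper's own argument, recast in the inversion/Kelvin-transform language rather than in bispherical coordinates: the paper's key identity $\tfrac{1}{|\Br-\mathbf{z}(t)|}=\tfrac{\sinh|t|}{\alpha}\sqrt{\tfrac{\cosh\xi-\cos\eta}{\cosh(\xi-2t)-\cos\eta}}$ is exactly your inversion distance relation after identifying $r'=e^{\xi}$, $\theta'=\eta$, $\rho_k=e^{2\xi_k}$, and the subsequent Legendre generating-function expansion is your standard multipole expansion of $|\Br'-\mathbf{z}_k'|^{-1}$. Your coefficient bookkeeping ($|\mathbf{z}_k-\mathbf{R}_0|=\alpha e^{-\xi_k}/\sinh\xi_k$, $\rho_k=e^{2\xi_k}$ with $R_T^2=2\alpha$, and $u_k\sinh\xi_k=\tau^k\sinh(s+t_0)$) is correct, as is your remark that the convention $Y_n^0=P_n$ is needed; the only point the paper makes explicit that you leave implicit is the check $t_0>-s/2$ ensuring $r'<\rho_k$ on the whole exterior region, which guarantees the choice of the $\mathcal{M}^0_{n,+}$ branch even for $k=0$.
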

This identity plays a key role in our derivation of the approximate analytical  solution. As mentioned previously, the reason why the image charge series \cref{image_series_U} does not work for plasmonic spheres is because the factor $(\tau e^{-s})^k$ may not converge to zero as $k\rightarrow \infty$. But the above connection formula helps us overcome this difficulty. If we sum up all the image charges in equation \cref{connect_charge_TO}, we can see that the summation over $k$ can be evaluated analytically using the following identity:
\begin{equation*}
\sum_{k=0}^\infty \big[\tau e^{-(2n+1)s}\big]^k  = \frac{e^{(2n+1)s}}{e^{(2n+1)s}-\tau}.
\end{equation*}
Therefore, from \cref{image_series_U} and Lemma \ref{lem_connect_charge_TO}, we obtain the following result.

\begin{theorem}\label{thm_image_to_TO_U} (Converting image charge series to TO)
Let $U(\Br)$ be the image charge series defined as in \cref{image_series_U}. Then it can be rewritten using TO basis as follows: for $\Br\in\mathbb{R}^3\setminus(B_+ \cup B_-)$,
\begin{equation} \label{rhs}
U(\Br) = \frac{\sinh(s+t_0)}{4\pi \alpha}  \sum_{n=0}^\infty \frac{e^{-(2n+1)t_0}}{e^{(2n+1)s}-\tau}
\Big(\mathcal{M}_{n,+}^0(\Br) - \mathcal{M}_{n,-}^0(\Br) \Big). 
\end{equation}
\end{theorem}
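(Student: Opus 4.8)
The plan is to obtain the identity by inserting the single-charge expansion of Lemma~\ref{lem_connect_charge_TO} into the image series \cref{image_series_U} and then resumming the resulting double series in the opposite order. First I would split $U(\Br)$ into the contribution of the $+\mathbf{z}_k$ charges and that of the $-\mathbf{z}_k$ charges, and apply Lemma~\ref{lem_connect_charge_TO} to each term separately: the $+\mathbf{z}_k$ charge produces a series in $\mathcal{M}^0_{n,+}$ and the $-\mathbf{z}_k$ charge the corresponding series in $\mathcal{M}^0_{n,-}$. This rewrites $U(\Br)$ as a double series indexed by $(k,n)$ with summand
\[
\frac{\sinh(s+t_0)}{4\pi\alpha}\,\big[\tau e^{-(2n+1)s}\big]^{k}\, e^{-(2n+1)(s+t_0)}\,\big(\mathcal{M}^0_{n,+}(\Br)-\mathcal{M}^0_{n,-}(\Br)\big),
\]
so all of the algebraic content is now exposed.

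Next I would interchange the two summations and carry out the sum over $k$ first, with $n$ fixed. The only $k$-dependent factor is $[\tau e^{-(2n+1)s}]^{k}$, so the inner sum is precisely the geometric series recorded in the text, $\sum_{k\ge 0}[\tau e^{-(2n+1)s}]^{k}=e^{(2n+1)s}/(e^{(2n+1)s}-\tau)$. Multiplying by the surviving prefactor and simplifying the exponentials, $e^{-(2n+1)(s+t_0)}\cdot e^{(2n+1)s}/(e^{(2n+1)s}-\tau)=e^{-(2n+1)t_0}/(e^{(2n+1)s}-\tau)$, which is exactly the coefficient in \cref{rhs}. Collecting the $\mathcal{M}^0_{n,+}$ and $\mathcal{M}^0_{n,-}$ parts then yields the stated formula, so once the interchange is licensed the proof is essentially a bookkeeping exercise.

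The main obstacle will therefore be justifying the interchange of summation, which is where the analysis rather than the algebra lives. I would argue in the regime $|\tau|<e^{s}$, in which the image series \cref{image_series_U} converges (equivalently $u_k\to 0$) and in which $|\tau e^{-(2n+1)s}|<1$ for every $n\ge 0$, so each inner geometric series converges. To apply Fubini/Tonelli I would establish absolute convergence of the double series, uniformly on compact subsets of $\mathbb{R}^3\setminus(B_+\cup B_-)$, by controlling the $n$-growth of the TO basis: after the inversion $\Phi$ the image point $\Br'$ lies strictly inside the spherical shell, so $\mathcal{M}^0_{n,+}$ grows at most geometrically like $(r')^{n}$ with $r'$ bounded away from the outer radius, while $\mathcal{M}^0_{n,-}$ decays like $(r')^{-(n+1)}$; the factors $[\tau e^{-(2n+1)s}]^{k}$ and $e^{-(2n+1)(s+t_0)}$ then dominate this growth and furnish a convergent geometric majorant. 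Finally, since both sides of \cref{rhs} are meromorphic in $\tau$—the left via the image series, the right via its closed-form coefficients—the identity proved for $|\tau|<e^{s}$ persists by analytic continuation on the larger domain where the right-hand series converges, which is exactly what makes the TO representation applicable to plasmonic spheres.
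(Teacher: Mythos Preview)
Your argument is correct and follows exactly the paper's approach: apply Lemma~\ref{lem_connect_charge_TO} termwise to \cref{image_series_U}, swap the order of summation, and evaluate the resulting geometric series in $k$ via the identity displayed just before the theorem. Your additional care with Fubini and the analytic continuation in $\tau$ goes beyond what the paper spells out (the paper treats the computation formally), but is consistent with it and does not alter the underlying strategy.
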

Clearly, the right-hand side of (\ref{rhs}) does converge for any $|\tau|>e^s$ provided that $\tau\neq e^{(2n+1)s}$.

\section{Analytical solution for two plasmonic spheres}
Here we derive an analytic approximate solution $V$ for two plasmonic spheres in a uniform incident field $(0,0,E_0)$. Moreover, we shall see that our analytical approximation completely captures the singular behavior of the exact solution. This feature will be essentially used to develop our hybrid numerical scheme. We only consider the case when the incident field is in the direction of the $z$-axis. In the case of the $x$ or $y$-axis, a high field concentration  in the gap does not happen \cite{Pen, Rom}.

To derive the solution valid for two plasmonic spheres, we convert the image series \cref{image_approximate_final} into a TO-type solution by using the connection formula \cref{rhs}. The result is shown in the following theorem (see Appendix \ref{app_pf_main_thm1} for its proof).

\begin{theorem}\label{main_thm1}
 If {$|\tau|\approx 1$}, the following approximation for the electric potential $V(\Br)$ holds: for $\Br\in \mathbb{R}^3\setminus(B_+\cup B_-)$, 
\beq\label{approximate_final}
V(\Br)\approx-E_0 z +    \sum_{n=0}^\infty  \widetilde{A}_n \Big(\mathcal{M}_{n,+}^0(\Br)-\mathcal{M}_{n,-}^0(\Br) \Big),
\eeq
where the coefficient $\widetilde{A}_n$ is given by
\begin{align*}
\ds\widetilde{A}_n &= E_0\frac{2\tau  \alpha}{3-\tau}\cdot\frac{2n+1- K_0}{e^{(2n+1)s}-\tau},
\\
\ds K_0 &= \sum_{n=0}^\infty \frac{2n+1}{e^{(2n+1)s}-\tau}\bigg/ \sum_{n=0}^\infty \frac{1}{e^{(2n+1)s}-\tau}.
\end{align*}
\end{theorem}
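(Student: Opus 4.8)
The plan is to substitute the transformation‐optics representation of the image series $U$ from \cref{thm_image_to_TO_U} directly into the image approximation \cref{image_approximate_final}, evaluate the two pieces $U|_{z_0=d}$ and $\partial_{z_0}U|_{z_0=d}$ term by term, and then pin down the remaining constant $Q$ via the no–net–flux condition. The first observation is that the evaluation point $z_0=d$ corresponds to $t_0=0$: since $d=R\cosh s$ and $\alpha=R\sinh s$, the defining relation $z_0=\alpha\coth(s+t_0)$ reads $\coth(s+t_0)=\coth s$, hence $t_0=0$.

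Next I would compute the two ingredients. Differentiating the parameter relation gives $dz_0/dt_0=-\alpha/\sinh^2(s+t_0)$, so $dt_0/dz_0|_{z_0=d}=-\sinh^2 s/\alpha$. Writing $M_n:=\mathcal{M}_{n,+}^0-\mathcal{M}_{n,-}^0$ and $c_n:=1/(e^{(2n+1)s}-\tau)$, \cref{thm_image_to_TO_U} at $t_0=0$ yields $U|_{z_0=d}=(4\pi R)^{-1}\sum_n c_n M_n$, while differentiating in $t_0$ and applying the chain rule (using $\sinh^2 s/\alpha^2=R^{-2}$) yields $\partial_{z_0}U|_{z_0=d}=(4\pi R^2)^{-1}\sum_n[(2n+1)\sinh s-\cosh s]\,c_n M_n$. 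Inserting these into \cref{image_approximate_final} with $p_0=E_0R^3\,2\tau/(3-\tau)$ and collecting the coefficient of each $M_n$ produces $\widetilde A_n=\{E_0R\tfrac{2\tau}{3-\tau}[(2n+1)\sinh s-\cosh s]+\tfrac{Q}{4\pi R}\}c_n$. Factoring out $E_0\,2\tau\alpha/(3-\tau)$ and using $R\cosh s=\alpha\coth s$ casts this into the target shape $\widetilde A_n=E_0\tfrac{2\tau\alpha}{3-\tau}(2n+1-K_0)\,c_n$, where $K_0=\coth s-Q(3-\tau)/(8\pi R E_0\tau\alpha)$ still carries the unknown $Q$.

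The main obstacle, and the only genuinely non‑routine step, is to translate the no–net–flux condition into a constraint on the coefficients. I would show that the net flux of each mode $M_n$ through $\partial B_+$ is one and the same nonzero constant $f$, independent of $n$. This follows from \cref{lem_connect_charge_TO}: a single image charge $u_k$ inside $B_+$ has total flux $\propto u_k$ through $\partial B_+$ by Gauss's law, whereas a charge inside $B_-$ contributes zero flux through $\partial B_+$; so $\mathcal{M}_{n,-}^0$ is flux‑free through $\partial B_+$ and only $\mathcal{M}_{n,+}^0$ contributes. Denoting by $f_n$ the (a priori $n$‑dependent) flux of $\mathcal{M}_{n,+}^0$, expanding $u_k G(\Br-\mathbf{z}_k)$ via \cref{lem_connect_charge_TO} gives $\sum_n(\text{coeff}_{n,k})f_n\propto u_k=\tau^k\sinh(s+t_0)/\sinh((k+1)s+t_0)$ for every $k\ge 0$; inserting $1/(2\sinh a)=\sum_n e^{-(2n+1)a}$ and matching the resulting series in the variable $e^{-2((k+1)s+t_0)}$ (which takes infinitely many values) forces $f_n$ to be independent of $n$ by uniqueness of such power‑series expansions. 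Since $-E_0 z$ carries zero flux through $\partial B_+$, the no–net–flux condition collapses to $f\sum_n\widetilde A_n=0$, that is $\sum_n\widetilde A_n=0$.

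Finally, imposing $\sum_n\widetilde A_n=0$ on the factored form forces $\sum_n(2n+1)c_n=K_0\sum_n c_n$, which is exactly the stated ratio
\beq
K_0 = \sum_{n=0}^\infty \frac{2n+1}{e^{(2n+1)s}-\tau}\bigg/\sum_{n=0}^\infty \frac{1}{e^{(2n+1)s}-\tau};
\eeq
this simultaneously determines $Q$ and completes the identification of $\widetilde A_n$. Convergence of the series for $|\tau|>e^s$ with $\tau\neq e^{(2n+1)s}$ is inherited from \cref{thm_image_to_TO_U}, so no further work is needed there.
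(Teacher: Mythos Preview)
Your proposal is correct and follows essentially the same skeleton as the paper: substitute the TO representation of $U$ from Theorem~\ref{thm_image_to_TO_U} into Poladian's approximation \cref{image_approximate_final}, differentiate via the chain rule $\partial_{z_0}|_{z_0=d}=-(\sinh^2 s/\alpha)\,\partial_{t_0}|_{t_0=0}$, and then fix $Q$ (equivalently $K_0$) by the zero-flux condition. The one genuine difference lies in how you justify that the flux of each $\mathcal{M}_{n,+}^0$ through $\partial B_+$ is the same nonzero constant. The paper invokes the separately derived Theorem~\ref{thm_flux}, which comes from the TO-to-multipole conversion (Theorem~\ref{thm_conv_TO_to_Multi}) via explicit surface-charge integration. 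Your argument instead reads this off from Gauss's law applied to the single-charge identity in Lemma~\ref{lem_connect_charge_TO}, together with the expansion $1/(2\sinh a)=\sum_{n\ge 0}e^{-(2n+1)a}$ and uniqueness of power series; this is a self-contained shortcut that avoids the machinery of Appendix~\ref{app_pf_TO_Multiplole} and is arguably more transparent for this particular theorem, at the cost of not yielding the general conversion formulas that the paper needs later for the hybrid scheme.
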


As expected, the above approximate expression is valid even if $|\tau|>e^s$. Therefore, it can furnish useful information about the plasmonic interaction between the two spheres. As a first demonstration, let us investigate the (approximate) resonance condition, that is, the condition for $\tau$ at which the coefficients $\widetilde{A}_n$ diverge. One might conclude that the resonance condition is given by $\tau=e^{(2n+1)s}$. However, one can see that $\widetilde{A}_n$ has a removable singularity at each $\tau=e^{(2n+1)s}$. In fact, the (approximate) resonance condition turns out to be
\beq\label{resonance_cond}
\sum_{n=0}^\infty \frac{1}{e^{(2n+1)s}-\tau} =0.
\eeq
In other words, the plasmon resonance does happen when $\tau$ is one of zeros of equation \cref{resonance_cond}. It turns out that the zeros $\{\tau_n\}_{n=0}^\infty$ lie on the positive real axis and satisfy, for $n=0,1,2,\cdots$,
\beq\label{estim_tau_n}
e^{(2n+1)s}< \tau_n < e^{(2n+3)s}. 
\eeq
The above estimate help us understand the asymptotic behavior of the resonance when two spheres get closer. As the gap distance $\delta$ goes to zero, the parameter $s$ also goes to zero (in fact, $s=O(\delta^{1/2})$). Then, in view of \cref{estim_tau_n}, $\tau_n$ will converge to $1$ and the corresponding permittivity $|\ep_n|$ goes to infinity. Also, the corresponding frequency $\omega_n$ goes to zero according to Drude's model. This phenomenon is sometimes called the red-shift of the (bright) resonance modes \cite{Rom, Ory1}. Since our approximate analytical formula \cref{approximate_final} for $V$ becomes more accurate as $|\ep_B|$ increases, we can expect that accuracy of the plasmonic resonant field improves as the separation distance goes to zero. It also indicates that our formula captures the singular nature of the field distribution completely. Also, the difference between $\tau_n$ and $\tau_{n+1}$ decreases, which means that the spectrum becomes a nearly continuous one. It is worth  mentioning that $1/(2\tau_n)$ gives the approximate eigenvalues of the Neumann-Poincar\'e operator for two spheres.

\begin{figure*}
\begin{center}
\includegraphics[height=4.5cm]{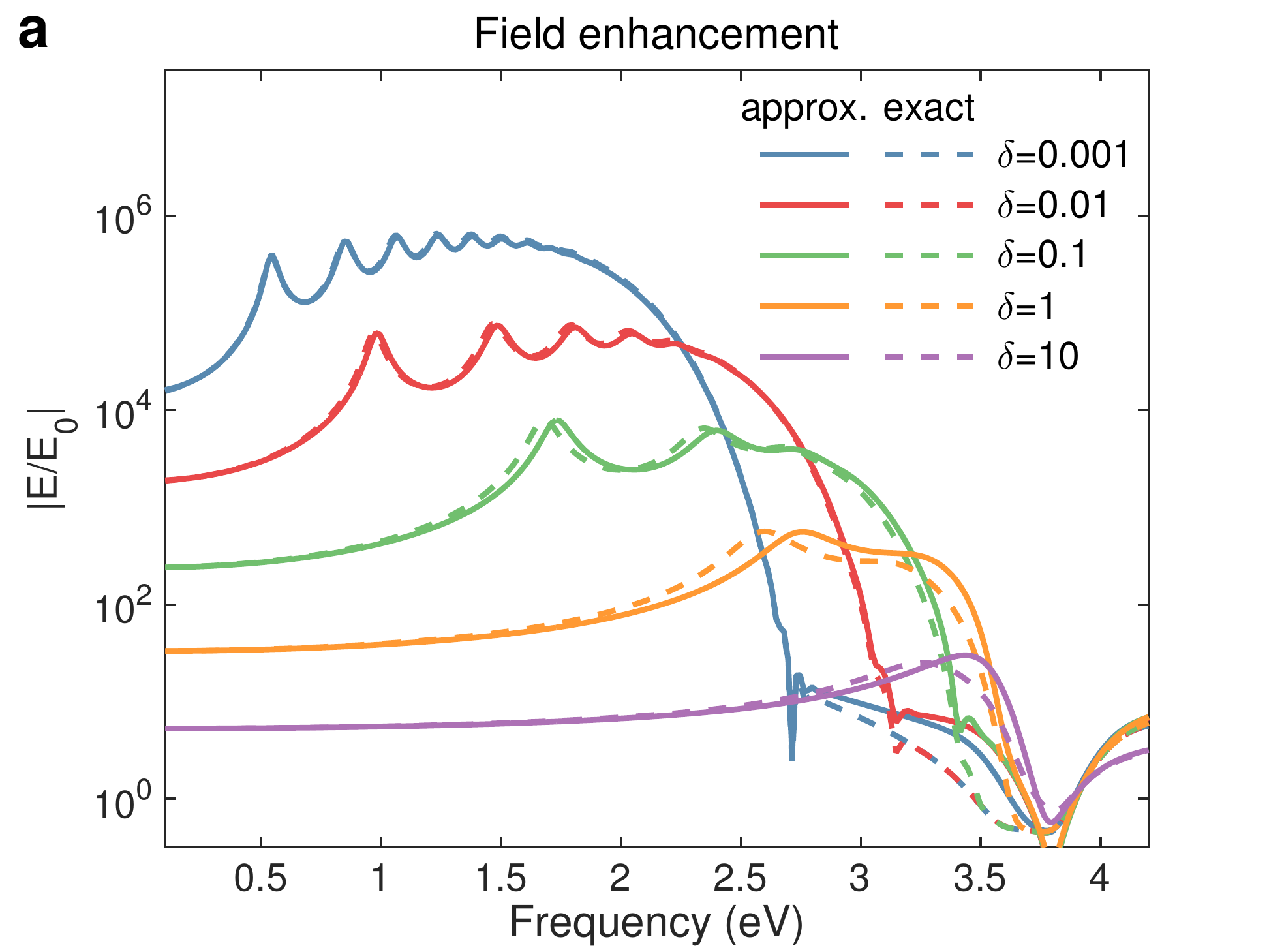}
\includegraphics[height=4.5cm]{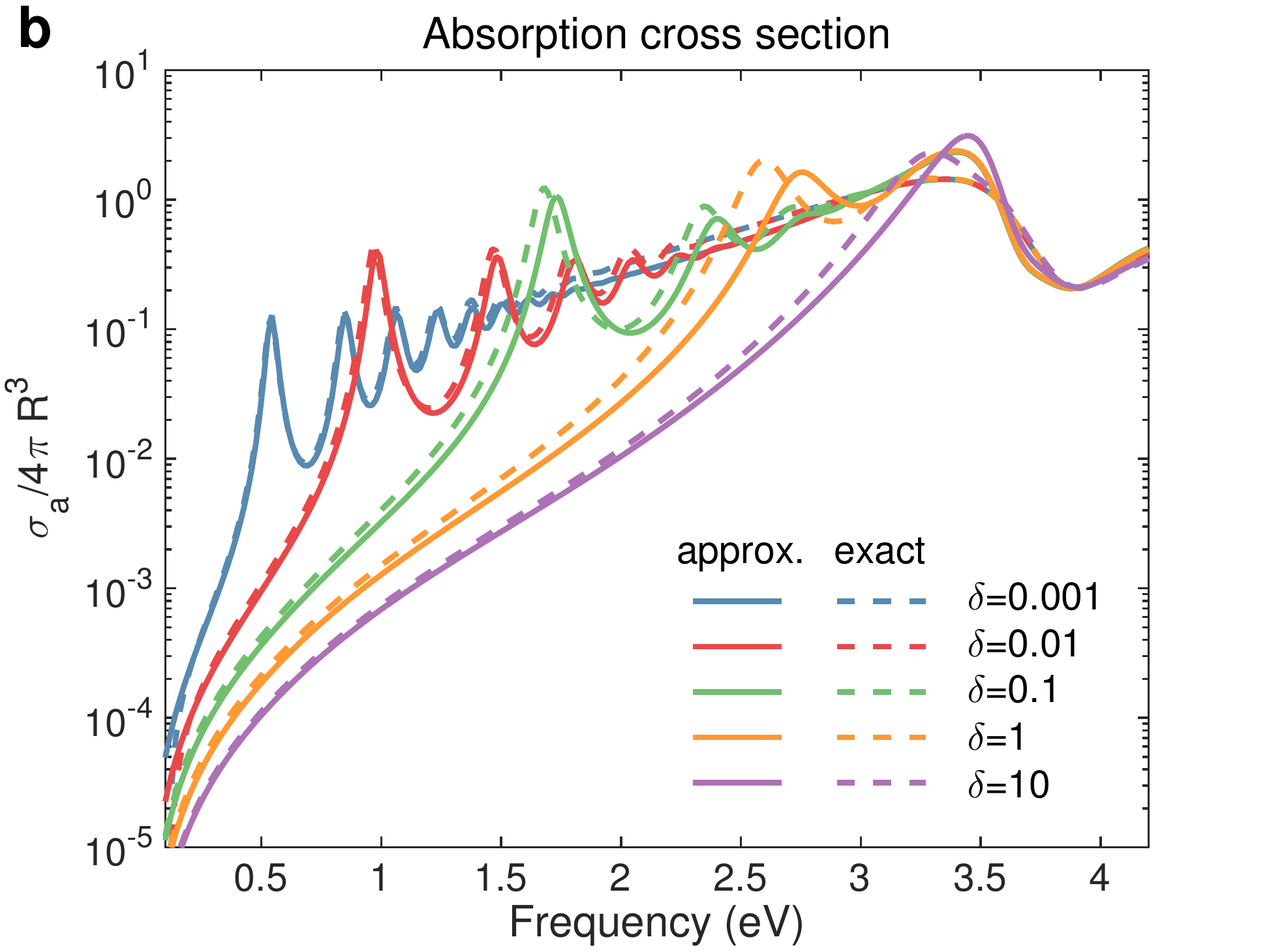}
\end{center}
\caption{Exact solution vs Analytic approximation. (a) Field enhancement plot as a function of the frequency $\omega$ for various separation distances $\delta$.
  The solid lines represent the approximate analytical solution and the dashed lines represent the exact solution.   Two identical silver spheres of radius $30$ nm are considered. 
(b) Same as (a) but for the absorption cross section.}
\label{fig3} 
\end{figure*}

We now derive approximate formulas for the field  at the gap center and for the absorption cross section. From Theorem \ref{main_thm1}, we obtain the following approximation (see Appendix \ref{app_field_gap_absorb} for the details):
\begin{align*}
\ds E(0,0,0) &\approx E_0 - E_0\frac{8\tau}{3-\tau}\bigg[\sum_{n=0}^\infty \frac{(2n+1)^2}{e^{(2n+1)s}-\tau}(-1)^n \\ 
&\quad\ds - K_0\sum_{n=0}^\infty \frac{2n+1}{e^{(2n+1)s}-\tau}(-1)^n\bigg].
\end{align*}
In the quasi-static approximation, the absorption cross section $\sigma_{a}$ is defined by
$\sigma_{a} = \omega \mbox{Im}\{ {p}\}, 
$
where $p$ is the polarizability of the system of two spheres. From Theorem \ref{main_thm1}, $\sigma_a$ is approximated as follows (see again Appendix \ref{app_field_gap_absorb}):
\begin{align*}
\ds \sigma_{a}  &\approx  \omega  E_0\frac{8\tau  \alpha^3}{3-\tau}\bigg[\sum_{n=0}^\infty \frac{(2n+1)^2}{e^{(2n+1)s}-\tau} 
\\ 
\ds &\quad -\bigg(\sum_{n=0}^\infty \frac{2n+1}{e^{(2n+1)s}-\tau}\bigg)^2\bigg/ \sum_{n=0}^\infty \frac{1}{e^{(2n+1)s}-\tau}\bigg].
\end{align*}

\begin{figure*}
\begin{center}
\includegraphics[height=3.8cm]{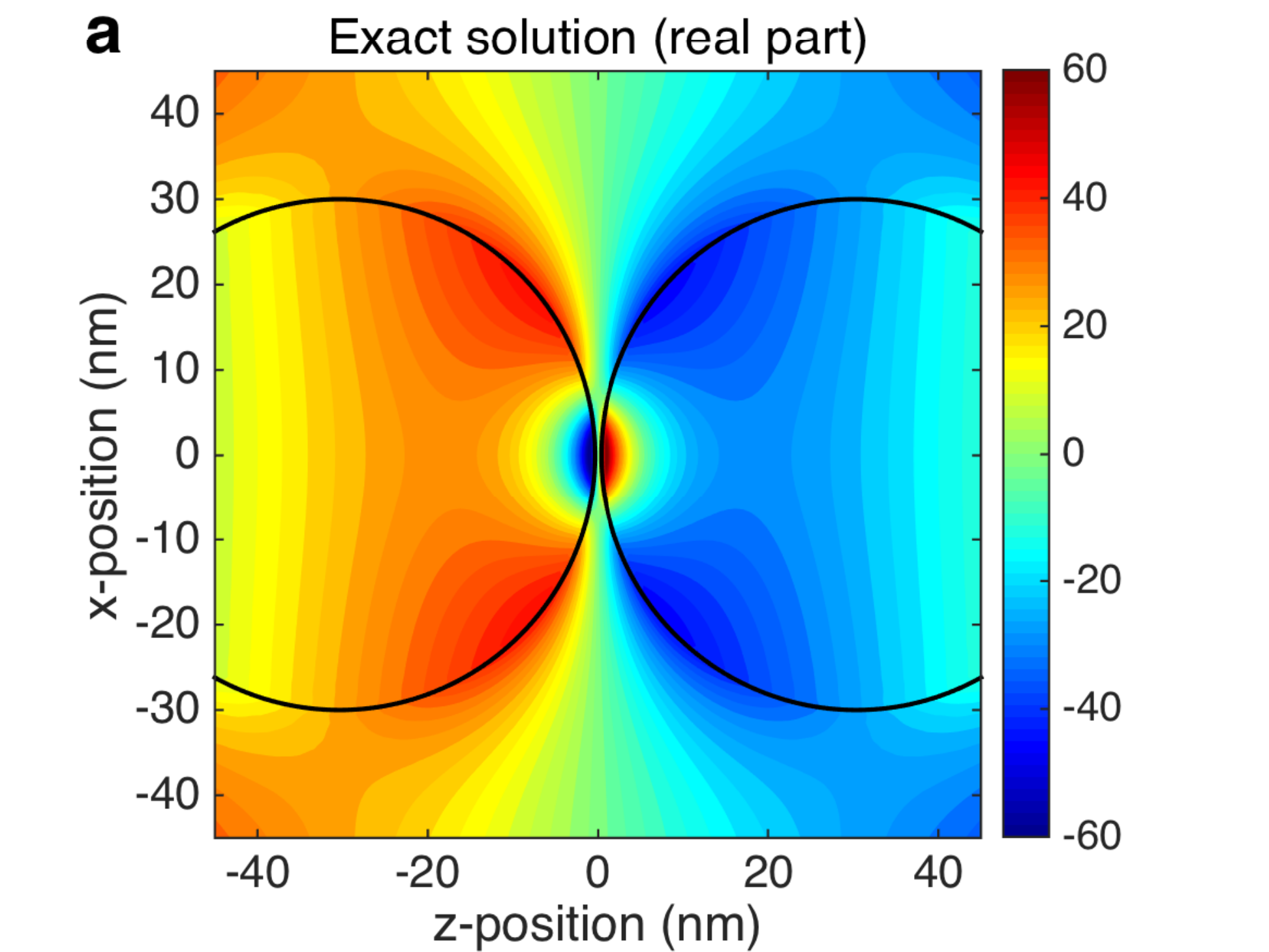}
\includegraphics[height=3.8cm]{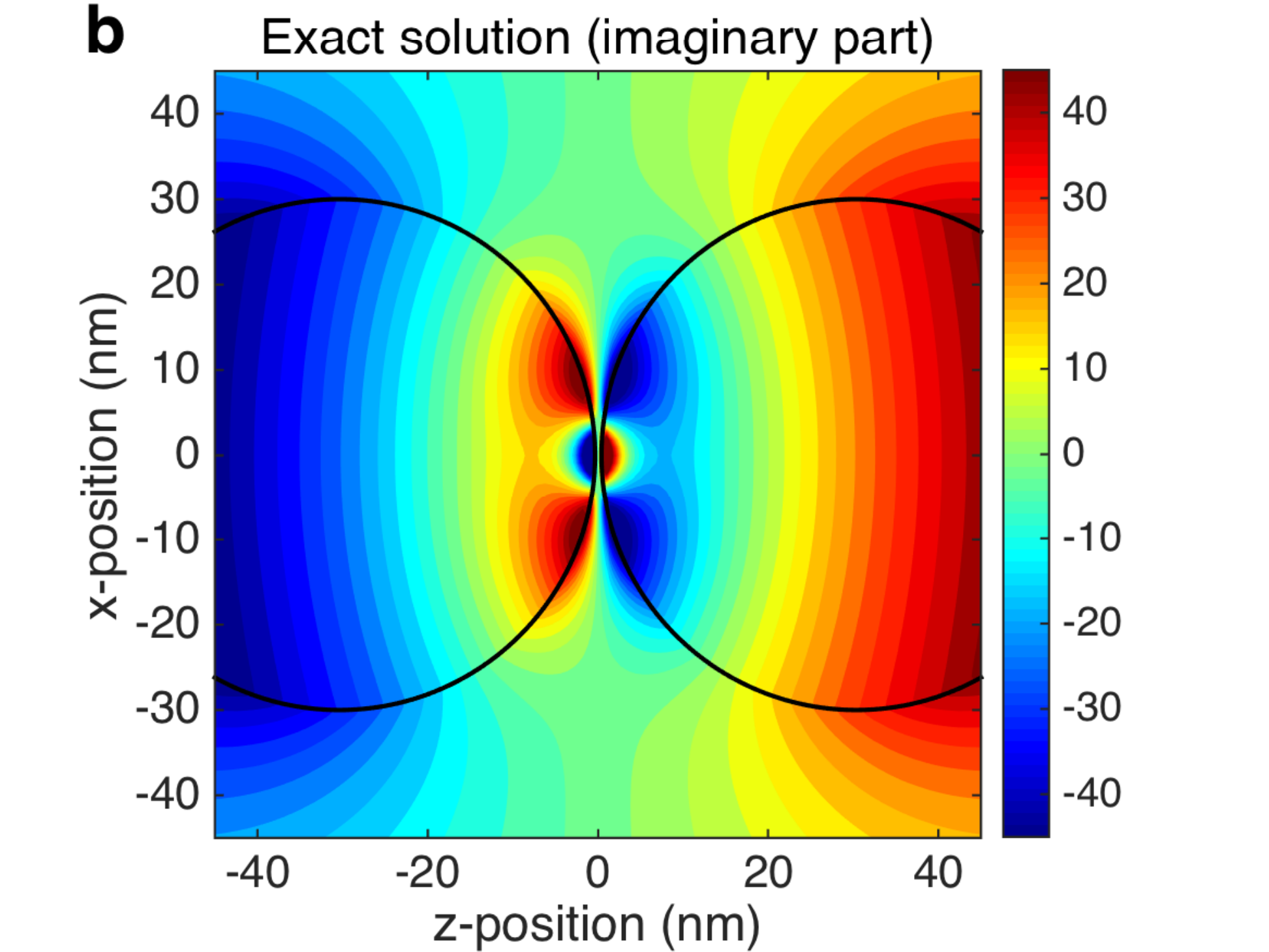}
\vskip.2cm
\includegraphics[height=3.8cm]{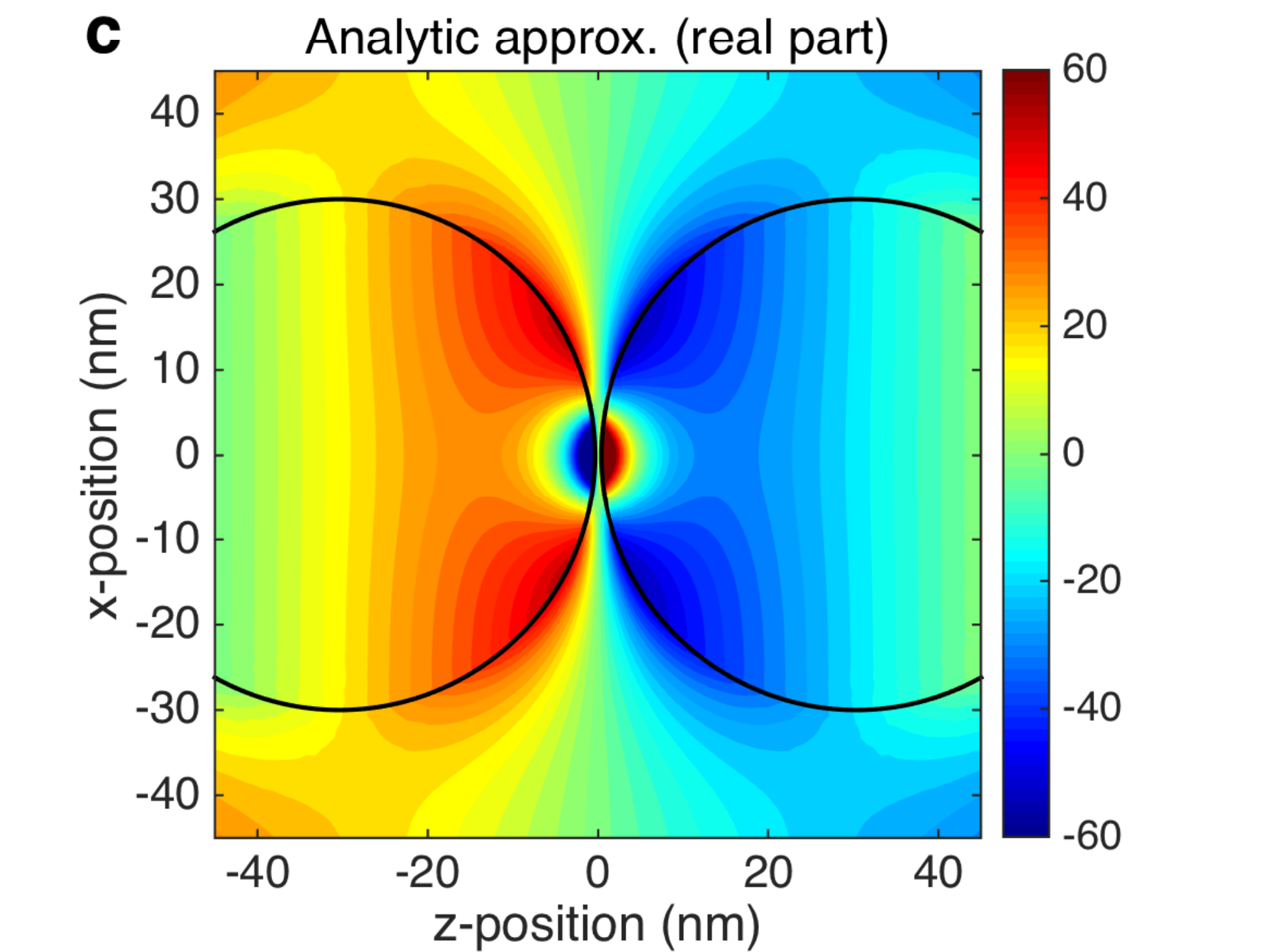}
\includegraphics[height=3.8cm]{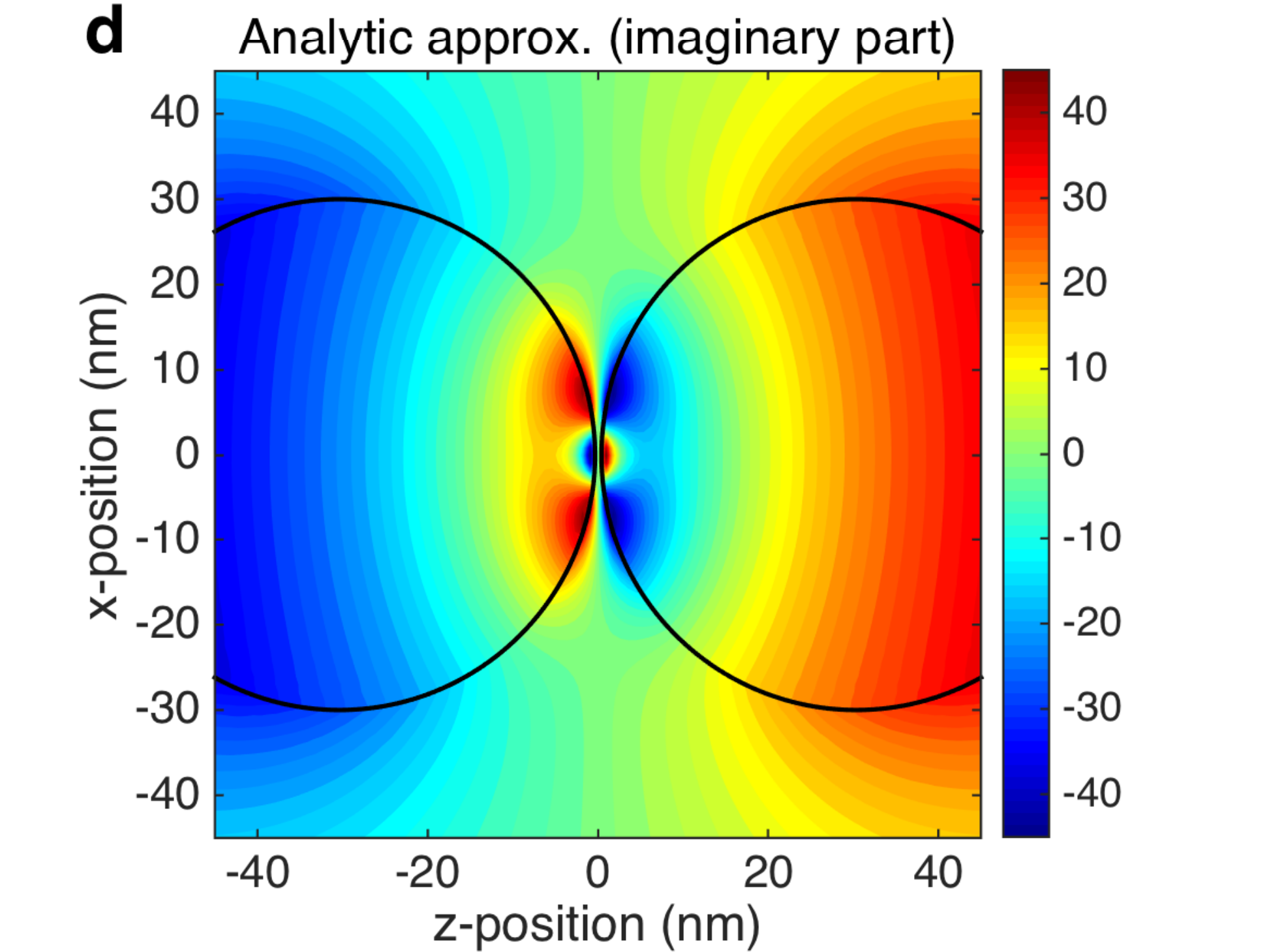}
\end{center}
\caption{Potential distributions for two identical silver spheres of radius $30$ nm separated by $\delta=0.25$ nm.  (a,b) Real and imaginary parts of the exact solution.   The uniform incident field $(0,0,\mbox{Re}\{ e^{i\omega t} \})$ is applied at the frequency $\omega=3.0$ eV in $z$-direction. (c,d) Same as (a,b) but for the  analytical approximate solution.}
\label{fig4}
\end{figure*}

We compare the above approximate formulas with the exact ones. Figure \ref{fig3} represents respectively the field enhancement and the absorption cross section $\sigma_{a}$ as functions of the frequency $\omega$ for various distances ranging from $0.001$ nm to $10$ nm. The good accuracy of our approximate formulas over broad ranges of frequencies and gap distances is clearly shown. As mentioned previously, the accuracy improves as the spheres get closer. The red-shift of the plasmon resonance modes is also shown. It is worth to mention that Schnitzer \cite{Ory1} performed an asymptotic analysis for  the field enhancement, the polairizability and their red-shift behavior. In Figure \ref{fig4}, we compare the exact and approximate electric potential distributions. They are also in good agreement and the field concentration in the gap region is observed.

\section{Hybrid numerical scheme for many-spheres system}

Now we consider a system of an arbitrary number of plasmonic spheres. If all the spheres are well separated, then the multipole expansion method is efficient and accurate for computing the field distribution (see Appendix \ref{app_multipole}). But, when the spheres are close to each other, the problem becomes very challenging  since the charge densities on each sphere are nearly singular. To overcome this difficulty, Cheng and Greengard developed a hybrid numerical scheme combining the multipole expansion and the method of images \cite{CG}. See also \cite{C,Gan}.

Let us briefly explain the main idea of Cheng and Greengard's method. In the standard multipole expansion method, the potential is represented as a sum of general multipole sources $\mathcal{Y}_{lm}(\Br) = {Y_{l}^m(\theta,\phi)}/{r^{l+1}}$ located at the center of each of spheres. Suppose that a pair of spheres is close to touching. For convenience, let us identify the pair as $B_+\cup B_-$.  A multipole source $\mathcal{Y}_{lm}$ located at the center of $B_+$ generates an infinite sequence of image multipole sources by Poladian's imaging rule. Let us denote the resulting image multipole potential by $U_{lm}^+$. We also define $U_{lm}^-$ in a similar way. The detailed image series representation for $U_{lm}^\pm$ can be found in Appendix \ref{app_Poladian_multipole}. Roughly speaking, Cheng and Greengard modified the multipole expansion by replacing a multipole source $\mathcal{Y}_{lm}$ with its corresponding image multipole series $U_{lm}^\pm$.

Since the image series $U_{lm}^\pm$ captures the close-to-touching interactions analytically, their scheme is extremely efficient and highly accurate even if the distance between the spheres is extremely small. However, the image mulipole series $U_{lm}^\pm$ are not convergent for $|\tau|>e^s$.
Hence it cannot be applied to plasmonic spheres clusters. Therefore, for extending Cheng and Greengard's method to the plasmonic case, it is essential to establish an explicit connection between the image multipole series $U_{lm}^\pm$ and TO. We derive the following formula for this connection (see Appendix \ref{app_pf_thm_multipole_TO} for its proof).

\begin{theorem}\label{thm_multipole_TO}
(Converting image multipole series to TO) Assume that the integers $l$ and $m$ are such that $l\geq 1$ and $-l\leq m\leq l$. The potential $U_{lm}^\pm$ can be rewritten in terms of TO basis as follows: for $\Br\in \mathbb{R}^3\setminus(B_+\cup B_-)$,
\begin{align}
\ds U^\pm_{lm}(\Br) &= \sum_{n=|m|}^\infty   \frac{ g_{n}^m \mathcal{D}_{lm}^\pm[\lambda_{n}^m] }{ e^{2(2n+1)s}-\tau^2 }(e^{(2n+1)s}\mathcal{M}_{n,\pm}^m(\Br)-\tau\mathcal{M}_{n,\mp}^m(\Br))\nonumber
\\
\ds &\quad 
- \delta_{0m}\frac{\widetilde{Q}^\pm_{l,1}}{2} \sum_{n=0}^\infty \frac{\mathcal{M}_{n,+}^0(\Br)+(-1)^l\mathcal{M}_{n,-}^0(\Br)}{ e^{(2n+1)s}+(-1)^l\tau }
\\
\ds &\quad 
\mp \delta_{0m}\frac{\widetilde{Q}^\pm_{l,2}}{2} \sum_{n=0}^\infty \frac{\mathcal{M}_{n,+}^0(\Br)-(-1)^l\mathcal{M}_{n,-}^0(\Br)}{ e^{(2n+1)s}-(-1)^l\tau },\nonumber
\end{align}
where $g_n^m,\lambda_n^m$ and ${Q}_l^\pm$ are given by
\begin{align}
\ds g_{n}^m&=\frac{1}{\alpha^{|m|+1}}\frac{2^{|m|}}{\sqrt{(2|m|)!}} \sqrt{\frac{(n+|m|)!}{(n-|m|)!}},
\nonumber
\\
\ds \lambda_{n}^m &=  [\sinh(s+t_0)]^{2|m|+1} \,e^{-(2n+1)t_0},
\label{def_gnm_lambdanm_Q}
\\
\ds \widetilde{Q}_{l,i}^\pm &=  \sum_{n=0}^\infty  \frac{ (\pm 1)^{l} g_{n}^0 \mathcal{D}_{l0}^\pm[\lambda_{n}^0] }{ e^{(2n+1)s} - (-1)^{l+i}\tau }\Bigg/\sum_{n=0}^\infty \frac{1}{ e^{(2n+1)s}-(-1)^{l+i}\tau }.
\nonumber
\end{align}
Here, $\mathcal{D}_{lm}^\pm[\cdot]$ is defined by \cref{eqn:D_lm_pm} and $\delta_{lm}$ is the Kronecker's delta function.
\end{theorem}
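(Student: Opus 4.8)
The plan is to follow the same three-stage route that produced Theorem \ref{thm_image_to_TO_U} from the monopole connection formula, now carried out at the level of an arbitrary multipole source. First I would establish a multipole analogue of Lemma \ref{lem_connect_charge_TO}, expanding each individual image multipole in the TO basis $\mathcal{M}_{n,\pm}^m$; then I would sum the resulting geometric series over the image index $k$; and finally, for the axisymmetric case $m=0$, I would add the monopole-type corrections needed to enforce vanishing net flux on each sphere. The presence of $\delta_{0m}$ in the statement already signals that these last corrections are needed only when $m=0$.

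For the multipole connection lemma, the key observation is that the azimuthal index $m$ is preserved by the axial inversion $\Phi$, since $\Phi$ commutes with rotations about the $z$-axis. I would therefore fix $m$ and expand the Kelvin transform of an image multipole source $\mathcal{Y}_{lm}$ centred at $\mathbf{z}_k$ in terms of solid harmonics about the origin of the shell geometry; multiplying by the conformal factor $|\Br'-\mathbf{R}_0'|$ then reproduces the TO basis. Concretely, I would write the irregular solid harmonic $\mathcal{Y}_{lm}$ as a differential operator applied to the monopole $G$ and transport that operator through the expansion of Lemma \ref{lem_connect_charge_TO}; this is exactly what $\mathcal{D}_{lm}^\pm[\cdot]$ in \cref{eqn:D_lm_pm} encodes. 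The source-location dependence that was $\sinh(s+t_0)\,e^{-(2n+1)(s+t_0)}$ in the monopole case is thereby promoted to $\lambda_n^m=[\sinh(s+t_0)]^{2|m|+1}e^{-(2n+1)t_0}$, while the addition-theorem normalisation constants collect into $g_n^m$, as in \cref{def_gnm_lambdanm_Q}. Each imaging step contributes a factor $\tau e^{-(2n+1)s}$ in the TO expansion, exactly as in \cref{connect_charge_TO}.

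Summing over $k$ is then a geometric series. The new feature relative to the passage from \cref{connect_charge_TO} to \cref{rhs} is that, within the single image family seeded on $B_+$ (or on $B_-$) as generated by the multipole imaging rule of Appendix \ref{app_Poladian_multipole}, each imaging step not only carries the factor $\tau$ but also reflects $\mathcal{M}_{n,\pm}^m$ into $\mathcal{M}_{n,\mp}^m$. Splitting the sum according to the parity of the image index $k$ therefore yields two sub-series, each with common ratio $\tau^2 e^{-2(2n+1)s}$; summing them and recombining produces the denominator $e^{2(2n+1)s}-\tau^2=(e^{(2n+1)s}-\tau)(e^{(2n+1)s}+\tau)$ together with the cross term $e^{(2n+1)s}\mathcal{M}_{n,\pm}^m-\tau\mathcal{M}_{n,\mp}^m$ displayed in the first line of the theorem.

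Finally, for $m=0$ the summed diagonal part does not by itself have zero net flux through each sphere---just as the bare image series in \cref{image_approximate_final} required the correction $Q$. Here I would add two monopole-type series, split according to the parity of the multipole under $z\mapsto-z$ (which sends $Y_l^0\mapsto(-1)^lY_l^0$), and fix their coefficients by demanding no net flux on $\partial B_\pm$; this yields the symmetric and antisymmetric correction terms $\widetilde{Q}_{l,1}^\pm$ and $\widetilde{Q}_{l,2}^\pm$ with their $(-1)^l$ factors. For $m\neq 0$ the azimuthal average of the source vanishes, so no correction is needed, which is why these terms carry $\delta_{0m}$. The hard part will be the multipole connection lemma itself: unlike the monopole case, generating the $m\neq 0$ components and tracking the full $l$-dependence through the inversion and the Kelvin transform requires careful bookkeeping with the addition theorem, and it is there that the precise forms of $\mathcal{D}_{lm}^\pm$, $g_n^m$ and $\lambda_n^m$ must be pinned down so that the geometric summation collapses to the stated closed form.
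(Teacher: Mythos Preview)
Your three-stage plan matches the paper's proof closely: establish a multipole-level connection lemma, sum the geometric series after splitting on the parity of the image index $k$ (giving common ratio $\tau^2 e^{-2(2n+1)s}$ and hence the denominator $e^{2(2n+1)s}-\tau^2$), and then fix the $m=0$ neutralising terms via the flux conditions. The only noteworthy divergence is in how the connection lemma is obtained. You propose to build $\mathcal{Y}_{lm}$ from the monopole $G$ by differentiation and push the operator through Lemma~\ref{lem_connect_charge_TO}; but $\mathcal{D}_{lm}^\pm$ as defined in \cref{eqn:D_lm_pm} differentiates only in $z_0$ and therefore raises $l$ while keeping $m$ fixed---it does not generate $m\neq 0$ from the monopole. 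The paper instead proves the connection formula directly for the \emph{sectoral} multipole $\mathcal{Y}_{|m|,m}$ (Lemma~\ref{lem_sectoral_to_TO}), using the generating function \cref{gen_Pnm} for $P_n^{|m|}$ in place of \cref{gen_P0}; this produces $g_n^m$ and $\lambda_n^m$ cleanly, and only then is $\mathcal{D}_{lm}^\pm$ applied to reach arbitrary $l$. Your route would work if you first raise $m$ by a separate transverse differentiation (or simply adopt the sectoral lemma), but as written the sentence ``this is exactly what $\mathcal{D}_{lm}^\pm$ encodes'' overstates what that operator does.
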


Clearly, the above TO representation for $U_{lm}^\pm$ does converge for $|\tau|>e^s$. Based on these analytic formulas, we develop a modified hybrid numerical scheme for the plasmonic spheres system. Specifically, we modify Cheng and Greengard's hybrid scheme by replacing the image multipole series $U_{lm}^\pm$ with its TO version using Theorem \ref{thm_multipole_TO}. The resulting hybrid scheme is valid for plasmonic spheres. Our new analytic TO-type solutions for $U_{lm}^\pm$ capture the singular behavior of the field distribution in the gap regions. So our modified hybrid scheme is extremely efficient and accurate even when the spheres are nearly touching. For a detailed description of the proposed scheme, we refer to Appendix \ref{sec:hybrid}.

\begin{figure*}
\begin{center}
\includegraphics[height=3.7cm]{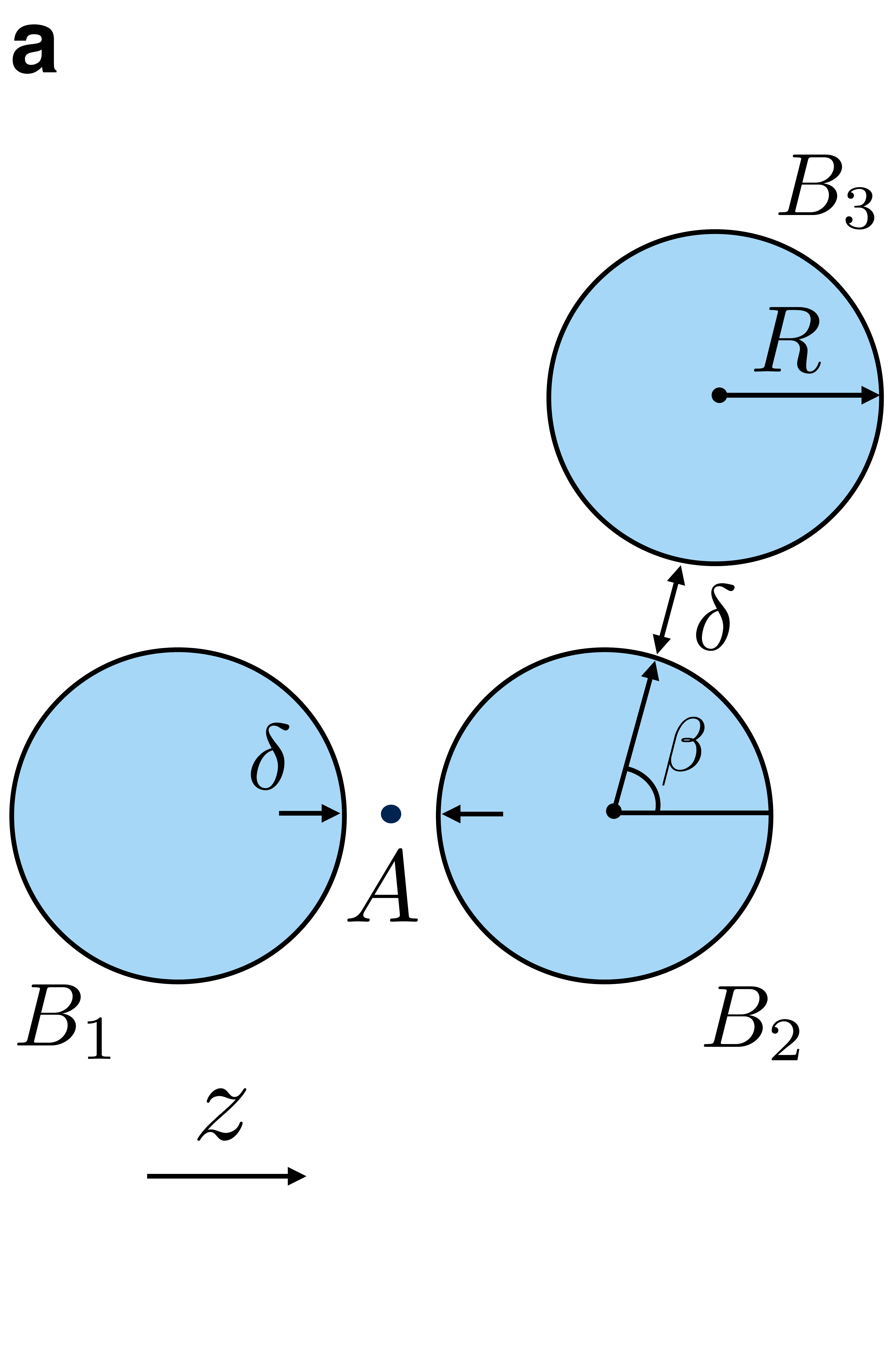}
\hskip.2cm
\includegraphics[height=3.7cm]{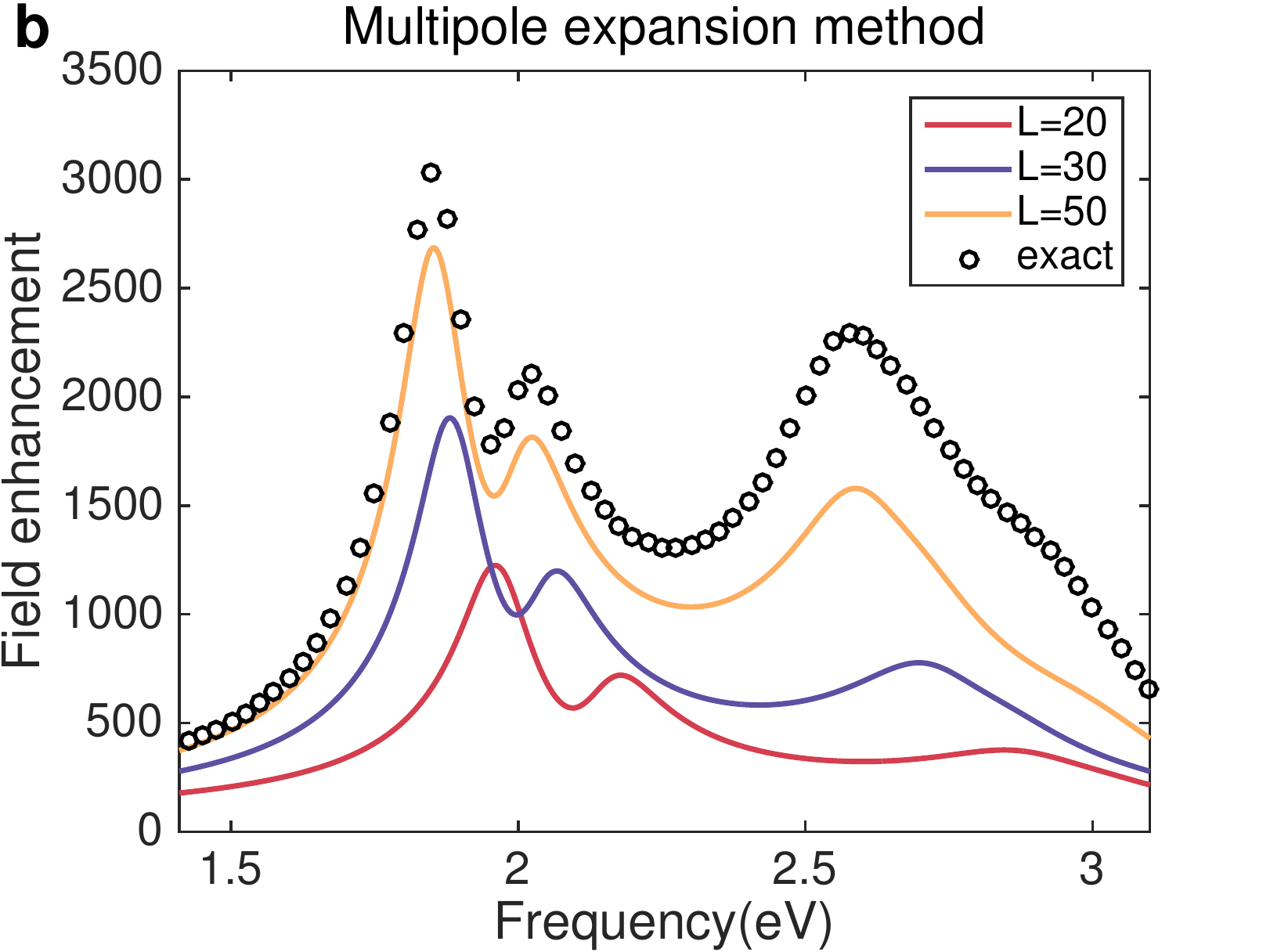}
\includegraphics[height=3.7cm]{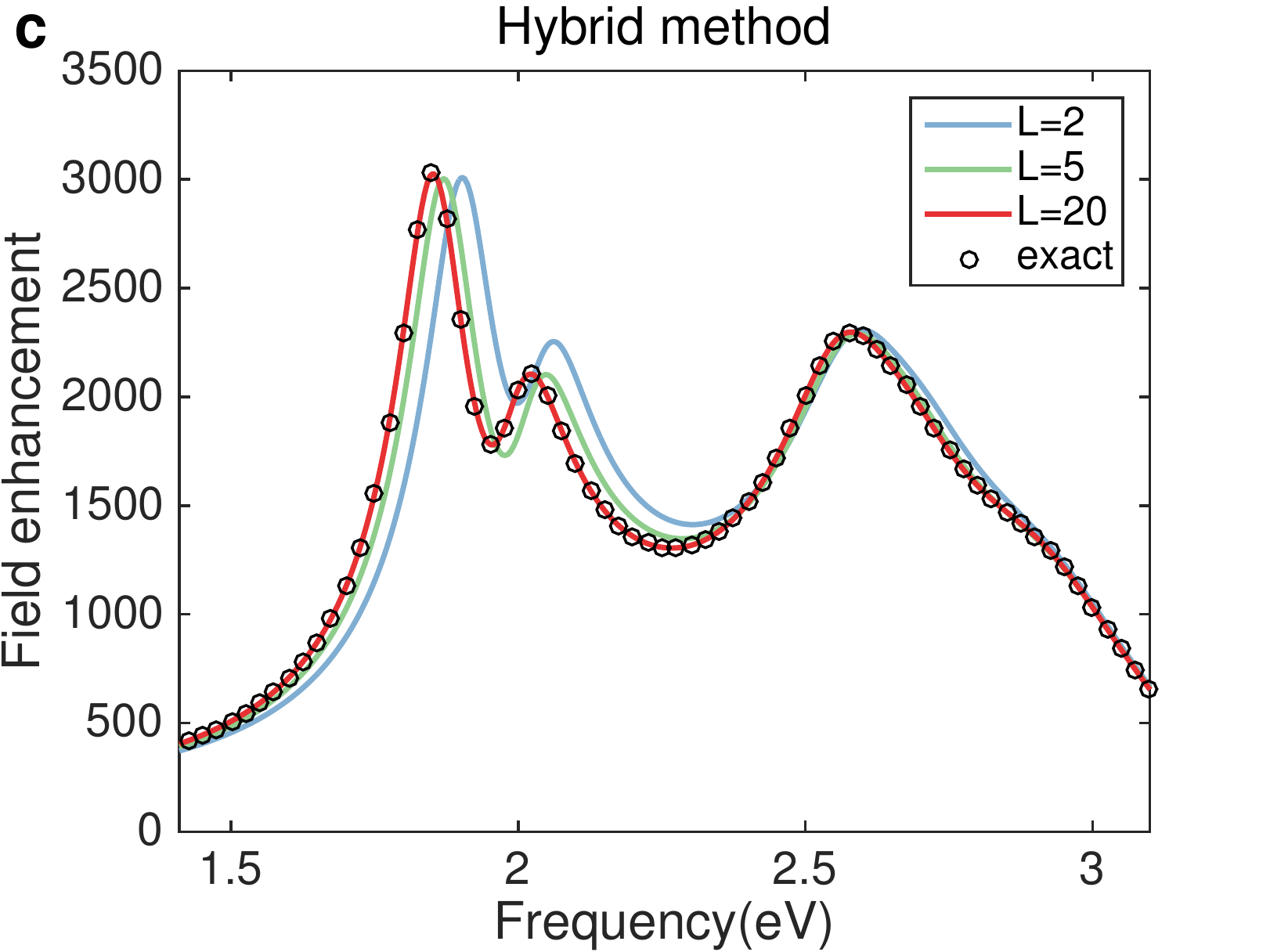}
\vskip.2cm
\includegraphics[height=3.7cm]{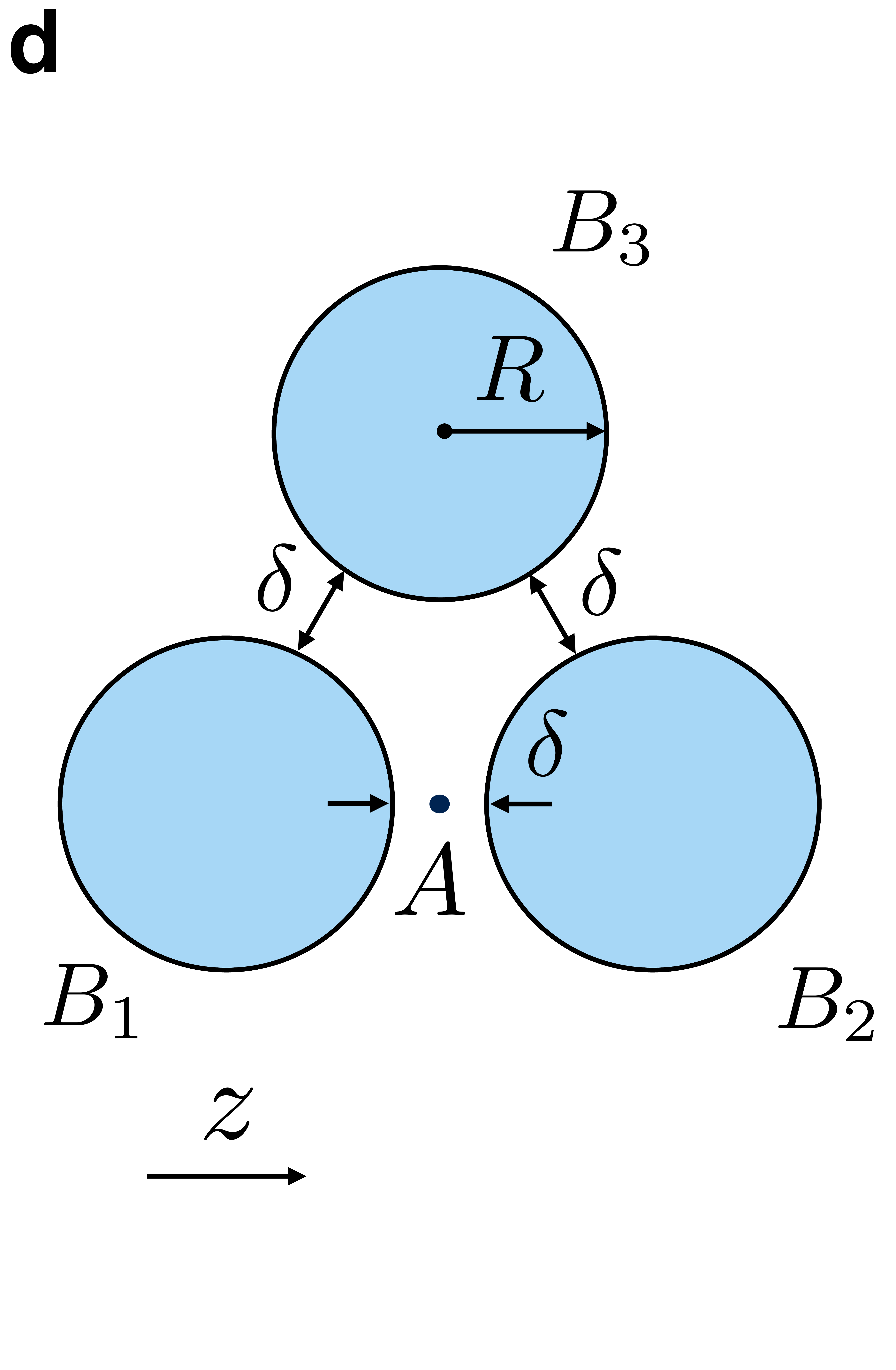}
\hskip.2cm
\includegraphics[height=3.7cm]{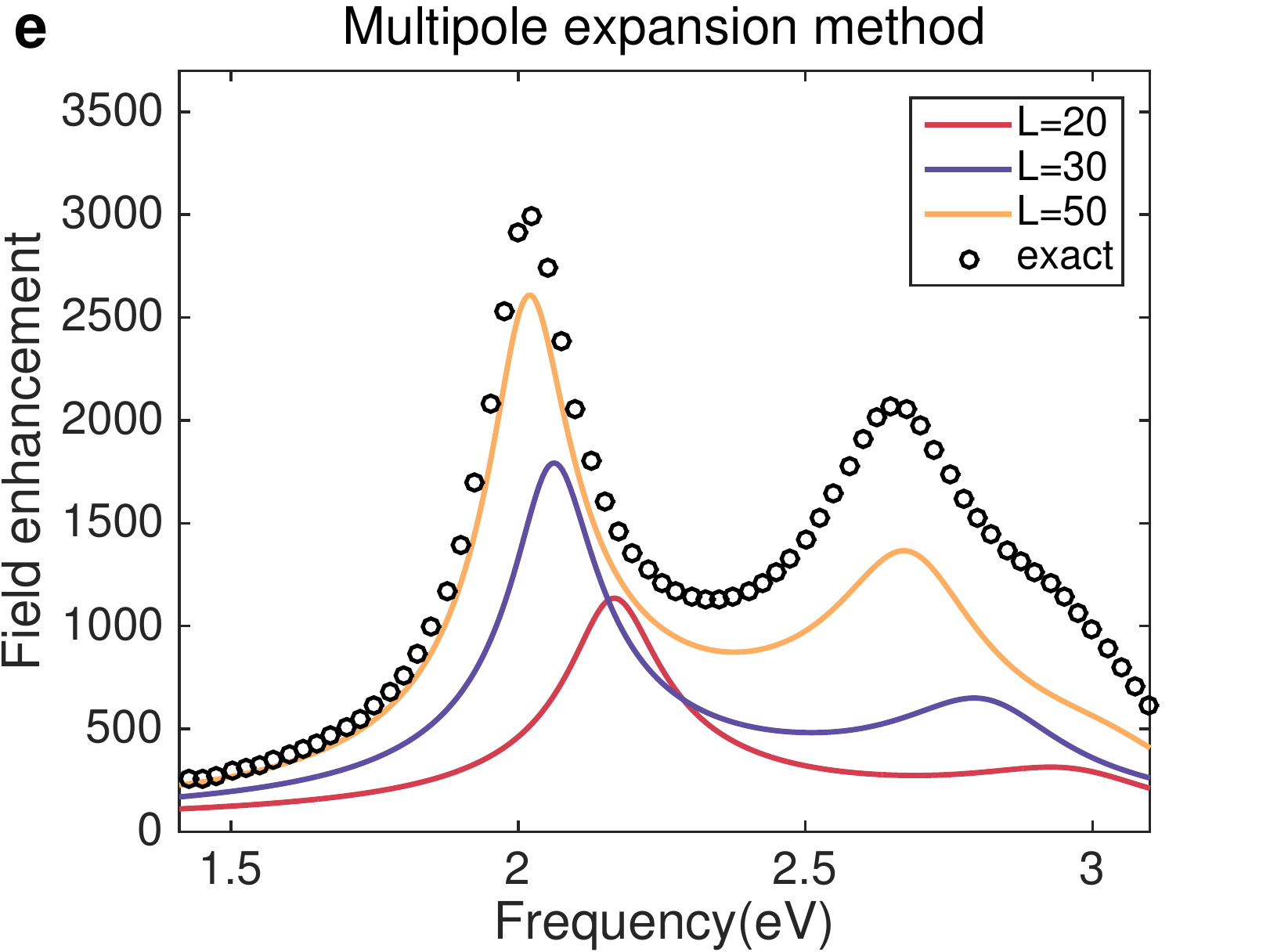}
\includegraphics[height=3.7cm]{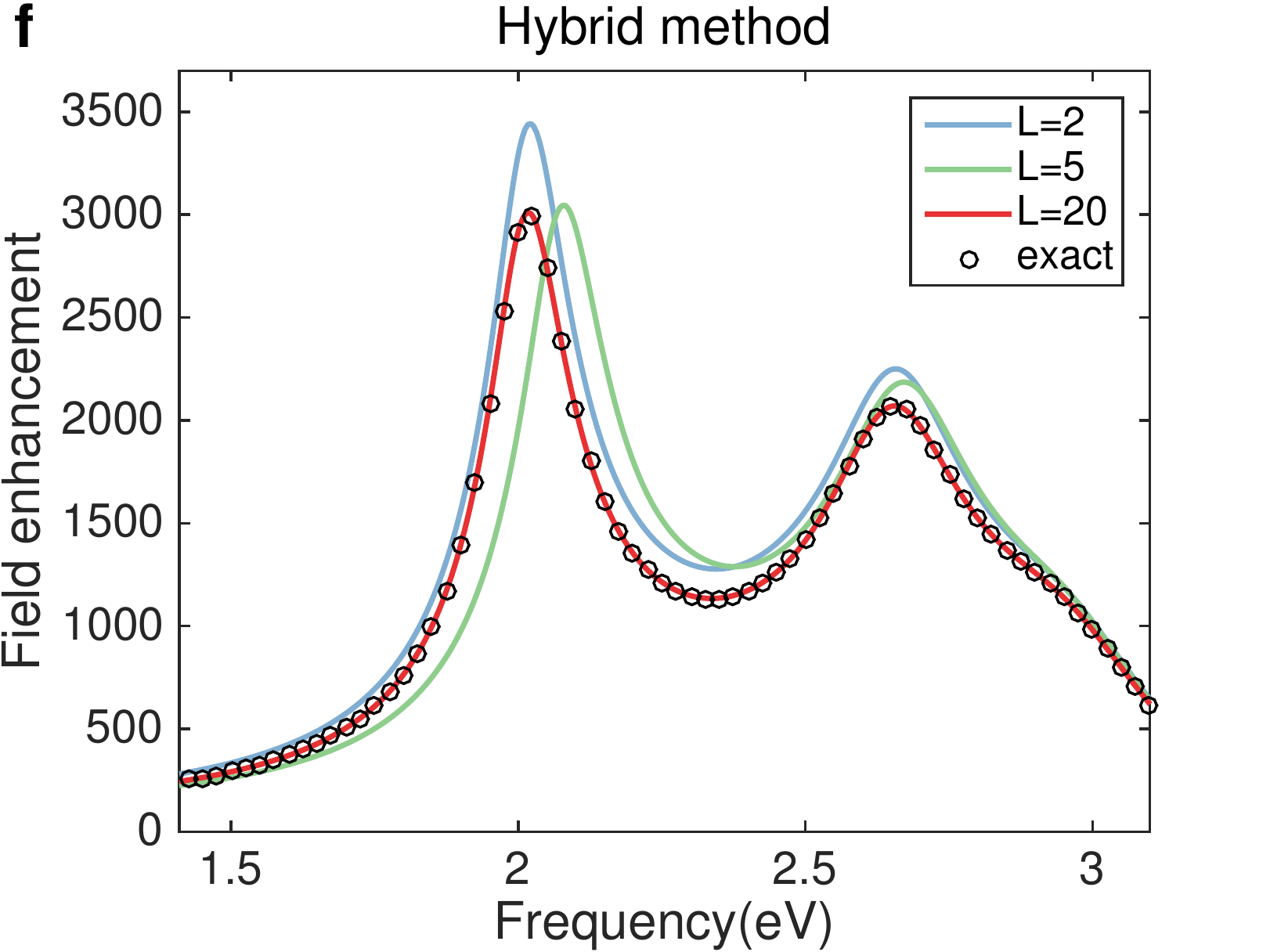}
\end{center}
\caption{ Multipole expansion method vs Hybrid scheme. (a,d) Two examples of three spheres configurations. (b,c) The field enhancement at point $A$ as a function of frequency for the configuration (a)  using the multipole expansion method and the hybrid method, respectively. The parameters
 are given as $R=30$ nm, $\delta=0.25$ nm and $\beta=80^{\circ}$. The uniform incident field $(0,0,\mbox{Re}\{ e^{i\omega t} \})$ is applied (e,f) Same as (b,c) but for the configuration (d).}
\label{fig5}
\end{figure*}

Next, we present numerical examples to illustrate the hybrid method. We consider two examples of the three-spheres configuration shown in Figures \ref{fig5}a and \ref{fig5}d. We show comparison between multipole expansion method and the hybrid method by plotting the field enhancement at the gap center $A$. For the numerical implementation, only a finite number of the multipoles $\mathcal{Y}_{lm}$ or hybrid multipoles $U_{lm}^\pm$ should be used. Let $L$ be the truncation number for the order $l$.
In Figures \ref{fig5}b and \ref{fig5}e, the field enhancement is computed using the standard multipole expansion method. The computations give inaccurate results even if we include a large number of multipole sources with $L=50$. On the contrary, the hybrid method gives pretty accurate results even for small values of $L$ such as $L=2$ and $5$; see Figures \ref{fig5}c and \ref{fig5}f. Also, $99\%$ accuracy can be achieved only with $L=20$. For each hybrid multipole $U_{lm}^\pm$, the TO harmonics are included up to order $n=300$ to ensure convergence and it can be evaluated very efficiently.

To achieve $99.9\%$ accuracy at the first resonant peak, it is required to set $L=150$ in the multipole expansion method and a $68,400 \times 68,400$ linear system needs to  be solved. However, the same accuracy can be achieved only with $L=23$ in the hybrid method. The corresponding linear system's size is $1,725 \times 1,725$ and it can be solved $2,000$ times faster than that of the multipole expansion method. In Figure \ref{fig6}, we also show the field distribution for the three-spheres examples. The high field concentration in the narrow gap regions between nanospheres is clearly shown.

\begin{figure*}
\begin{center}
\includegraphics[height=4.5cm]{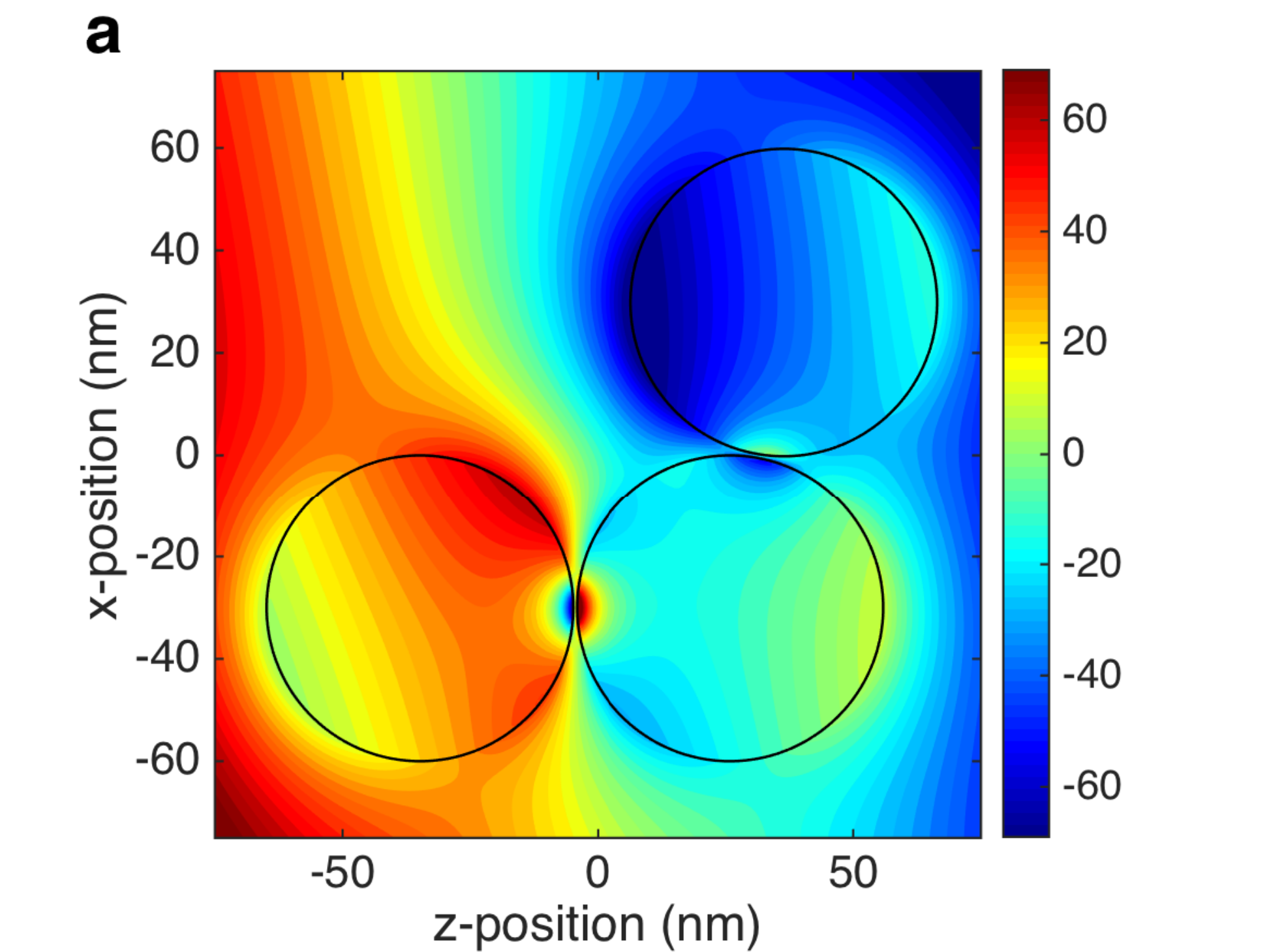}
\includegraphics[height=4.5cm]{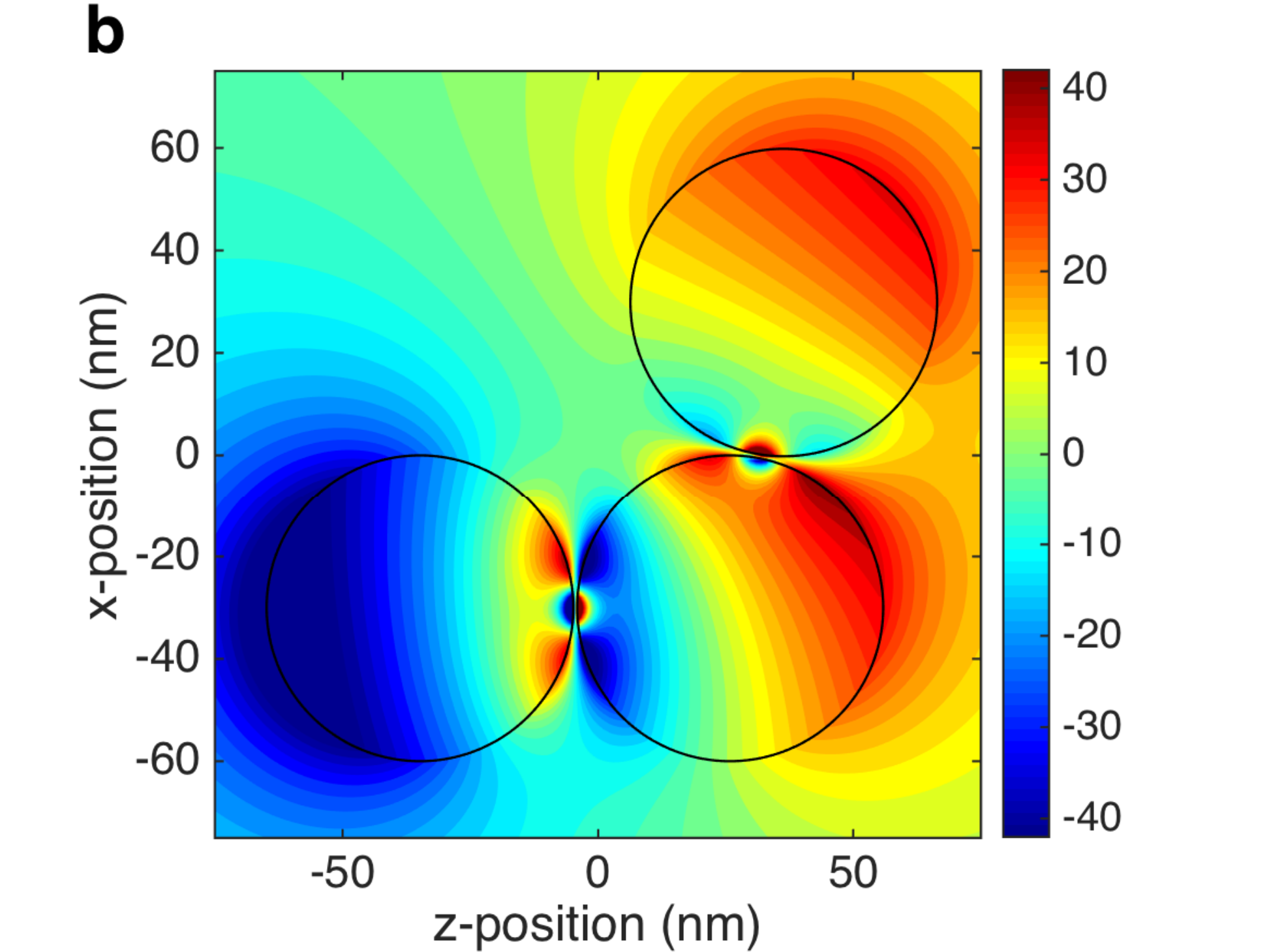}
\vskip.3cm
\includegraphics[height=4.5cm]{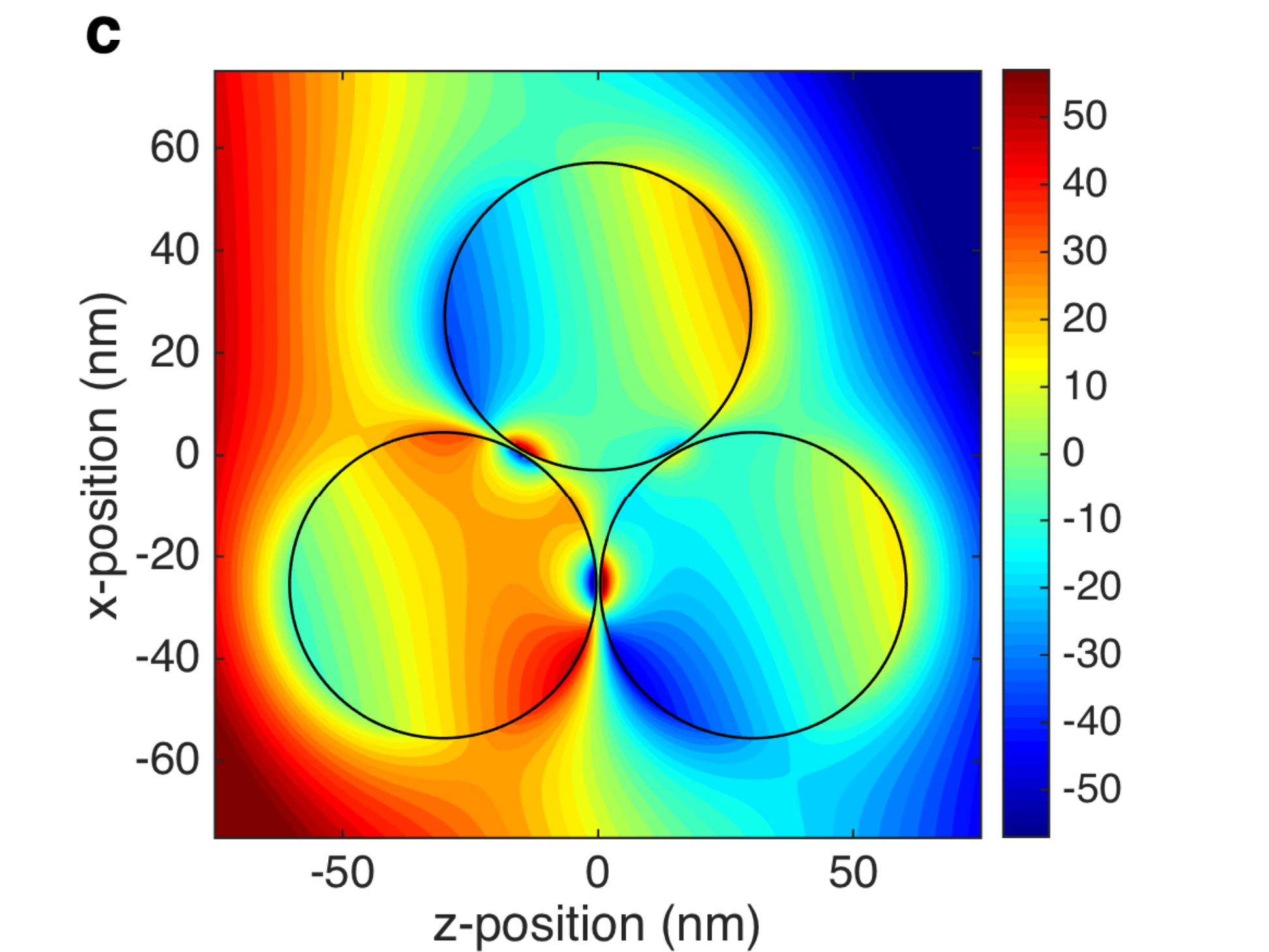}
\includegraphics[height=4.5cm]{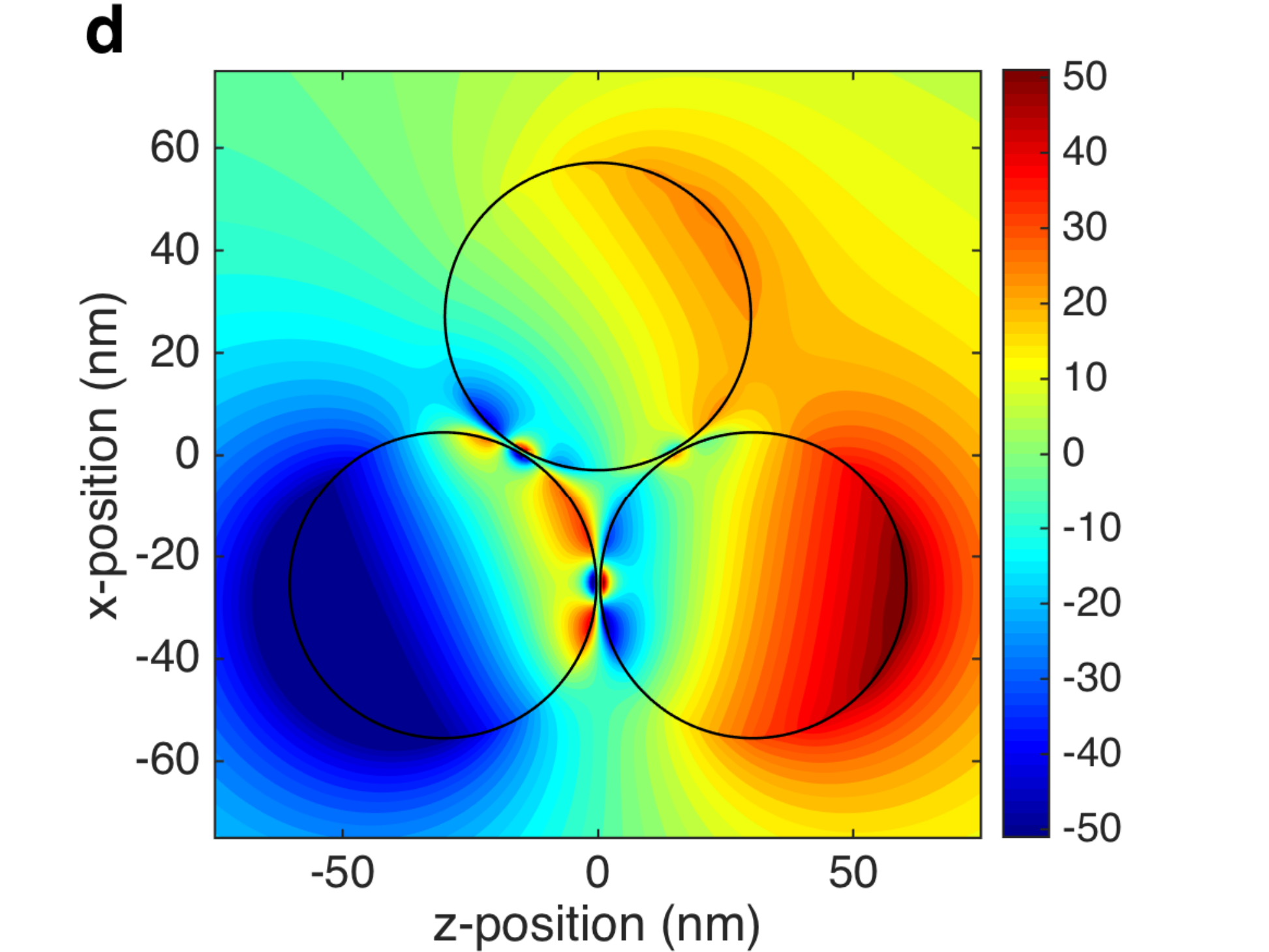}
\end{center}
\caption{ Potential distributions for the two three spheres examples. 
(a,b) Real and imaginary parts of the potential for the configuration in Figure \ref{fig5}a with $R=30$ nm, $\delta=0.25$ nm, and $\beta=80^\circ$. 
The uniform incident field $(\sin 15^{\circ},0,\cos 15^{\circ})\mbox{Re}\{ e^{i\omega t}\}$
 is applied at $\omega=3.0$ eV. 
(c,d) Same as (a,b) but for the configuration in Figure \ref{fig5}d.}
\label{fig6}
\end{figure*}

\section{Discussion} 
In this paper, we have fully characterized the singular nature of interaction between  nearly touching plasmonic nanospheres in an analytical way. Based on new analytic solutions, we also have extended Cheng and Greengard's hybrid numerical method to the case of plasmonic spheres. The extended scheme gives extreme efficiency and accuracy for computing the field generated by an arbitrary number of plasmonic spheres even when  they are  nearly touching. We have assumed that the spheres are identical only for simplicity. Our approach can be  directly extended to the case where the spheres are not equisized and have different material parameters. Moreover, by coupling with the fast multipole method, we expect that the proposed scheme will give an efficient numerical solver for a large scale problem \cite{fmm,fmm_hybrid}. In that case, we should consider the retardation effect which comes from the finite wavelength of the incident light. But the quasistatic interaction is dominant for any pair of closely spaced spheres and so our result is still useful in a large scale problem. This will be the subject of a forthcoming paper. The nonlocal effect, which has a quantum mechanical origin, is an important issue when the spheres are extremely closely spaced \cite{Luo_nonlocal,Nordlander_nonlocal,Ory2}. By adopting the shifting boundary method developed by Luo et al. \cite{Luo_nonlocal}, this effect can be easily incorporated.

\appendix

\section{Some definitions and properties} \label{sect2}

\begin{itemize}
\item Let us define the spherical harmonics $Y_{l}^m$ by
\begin{equation*}
Y_{l}^m(\theta,\phi) = \sqrt{\frac{(l-|m|)!}{(l+|m|)!}} P_l^{|m|}(\cos\theta)e^{im\phi},
\end{equation*}
where $P_l^m(x)$ is the associated Legendre polynomial given by
\begin{equation*}
P_l^m(x) = (-1)^m (1-x^2)^{m/2} \frac{d^m}{dx^m}P_l(x).
\end{equation*}
Here, $P_l(x)$ is the Legendre polynomial of degree $l$.

\item The Legendre polynomial $P_n(x)$ has the following generating function:
\beq\label{gen_P0}
\frac{1}{\sqrt{1-2ax+a^2}}=\sum_{n=0}^\infty a^n P_n(x).
\eeq
\item The associated Legendre polynomial $P_n^m(x)$ has the following generating function:
\beq\label{gen_Pnm}
(-1)^m(2m-1)!!\frac{(1-x^2)^{m/2}a^m}{[1-2ax+a^2]^{m+1/2}} = \sum_{n=m}^\infty a^n P_n^m(x).
\eeq
\item It holds that
\beq\label{Pnn}
P_{n}^n(x) =(-1)^n (2n-1)!!(1-x^2)^{n/2}.
\eeq

\item 
Let us define the solid harmonics $\mathcal{Y}_{lm}$ and $\mathcal{Z}_{lm}$ by
\begin{align*}
\mathcal{Y}_{lm}(\Br) &= r^{-(l+1)}Y_{l}^m(\theta,\phi),
\\
\mathcal{Z}_{lm}(\Br) &= r^l\,Y_{l}^m(\theta,\phi).
\end{align*}
The function $\mathcal{Y}_{lm}$ is also called the multipole source.
   
\item Let us introduce
\beq\label{def_wm_app}
w_{m}=
\begin{cases}
\ds 1, &\quad m \geq 0,
 \\
\ds (-1)^{|m|} , &\quad m < 0.
\end{cases}
\eeq

\item Let the constant $N_{lmab}$ be given by
\beq\label{def_Nlmab_app}
N_{lmab} = (-1)^{a+b}
\sqrt{
\begin{pmatrix} l+a-b+m \\ l+m\end{pmatrix}
\begin{pmatrix} l+a+b-m \\ a+b\end{pmatrix}
},
\eeq
and let the constant $N_{lm}$ be given by
\beq\label{def_Nlm_app}
\ds N_{lm}=(l-|m|)!
\sqrt{
\begin{pmatrix} l+|m| \\ l+m\end{pmatrix}
\begin{pmatrix} l+|m| \\ |m|+m\end{pmatrix}
}.
\eeq

\end{itemize}

\section{TO inversion mapping and the bispherical coordinates} \label{sect1}
Here we present the basic properties of the TO inversion mapping.
The TO inversion mapping $\Phi$ can be rewritten using the bispherical coordinates $(\xi,\eta,\varphi)$ defined by
\begin{equation} \label{def_bispherical_app}
e^{\xi-i\eta} = ({     {  z+i\rho }    +\alpha})/({      {  z+i\rho   } -\alpha }),
\end{equation}
with  $\rho=\sqrt{x^2+y^2}$ and $\varphi$ being the azimuthal angle.

By letting $\mathbf{r}'=e^{\xi}(\sin\eta\cos\varphi,\sin\eta\sin\varphi,\cos\eta)$, $\mathbf{R}_0'=(0,0,1)$, $\mathbf{R}_0 = (0,0,\alpha)$ and $R_T^2=2\alpha$, we can see that the bispherical transformation  is identical to the inversion mapping  in the TO approach. Although they are the same, it is worth mentioning that TO approach gives novel physical insights into the interaction between two plasmonic spheres. The reason why we rewrite $\Phi$ in terms of the bipsherical coordinates is that many useful properties have been derived in this coordinate system. In Figure \ref{fig1_app}, the geometry of the bispherical coordinates is described.

\begin{figure*}
\begin{center}
\includegraphics[height=6.0cm]{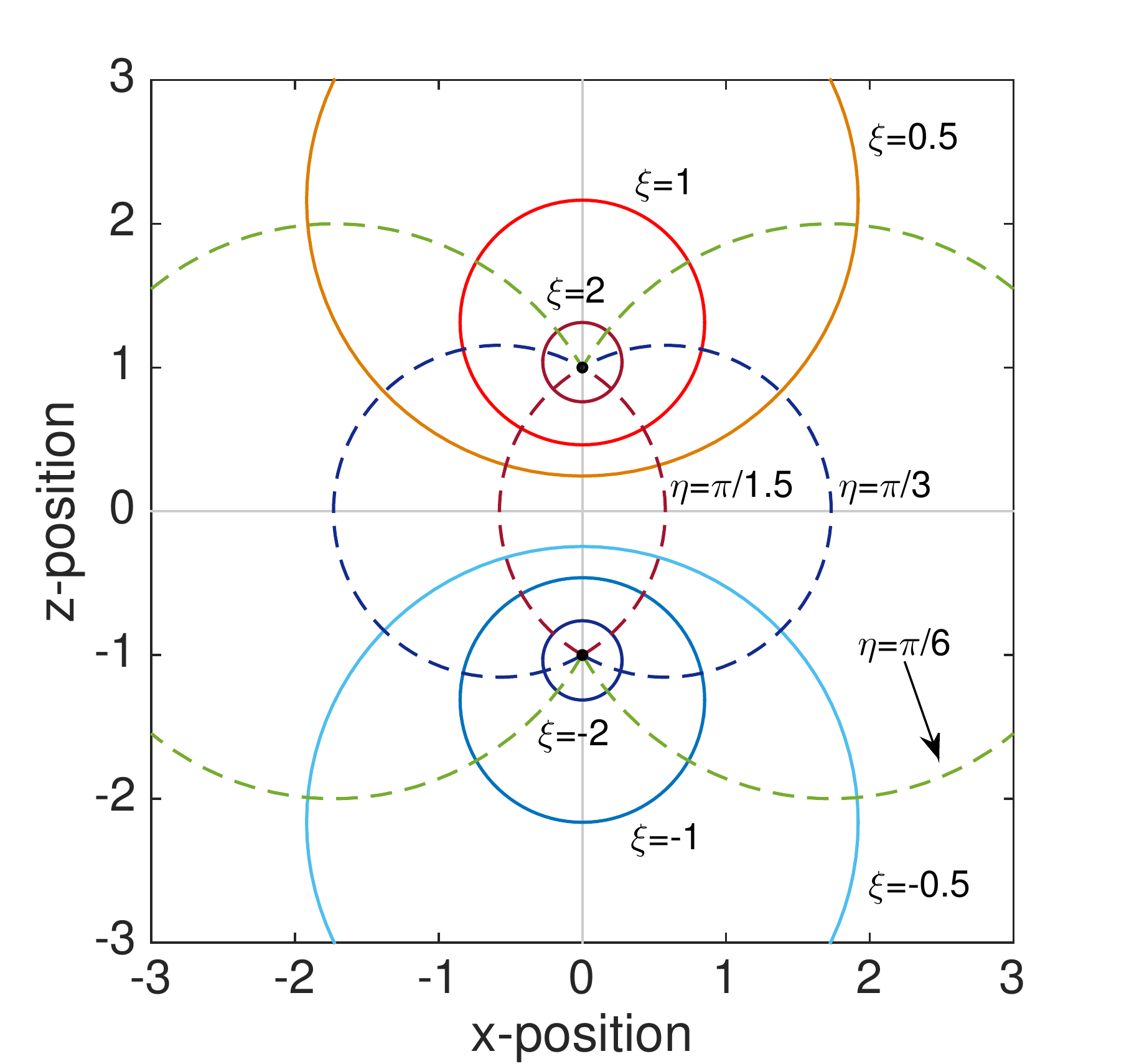}
\end{center}
\caption{Bispherical coordinates.
Coordinate level curves for the bispherical coordinate system with $\alpha=1$. The solid lines (resp. the dashed lines) represent $\xi$ (resp. $\eta$) coordinate curves.}
\label{fig1_app}
\end{figure*}

The Cartesian coordinates can be written in terms of the bispherical ones as follows: \beq\label{bisph_cartes_app}
\ds x=\frac{\alpha\sin\eta\cos\varphi}{\cosh\xi-\cos\eta},\
\ds y=\frac{\alpha\sin\eta\sin\varphi}{\cosh\xi-\cos\eta},\
\ds z=\frac{\alpha\sinh\xi}{\cosh\xi-\cos\eta}.
\eeq
Note that the origin $(0,0,0)$ corresponds to $\xi=0,\eta=\pi,\varphi=0$. The point at infinity corresponds to $(\xi,\eta)\rightarrow(0,0)$. On the other hand, it can be easily shown that the coordinate surfaces $\{\xi=c\}$ and $\{\eta=c\}$ for a nonzero $c$ are respectively the zero level set of
\begin{align}
 f^\xi(x,y,z)&=
(z-\alpha \coth c)^2 +\rho^2-\left({\alpha}/{\sinh c}\right)^2, \label{fxi}
\\
 f^\eta(x,y,z)&=
(\rho-{\alpha}{\cot c})^2+z^2 -\left({\alpha}/{\sin c}
\right)^2.
\end{align}
Note also that the $\xi$-coordinate surface is the sphere of radius $\alpha/\sinh c$ centered at $(0,0,\alpha \coth c)$. Therefore, $\xi=c$ (or $\xi=-c$) represents a sphere contained in the region $z>0$ (resp. $z<0$). Moreover, $|\xi|<c$ (resp. $|\xi|>c$) represents the region outside (resp. inside) the two spheres. 

Suppose that two spheres $B_+$ and $B_-$ of the same radius $R$ are centered at $(0,0,+d)$ and $(0,0,-d)$, respectively. We set $s=\cosh^{-1}(d/R)$ and $\alpha=R\sinh s$. Then we have $d=\alpha \coth s$ and $R=\alpha/\sinh s$. So, in view of \cref{fxi}, the surfaces $\{\partial B_\pm\}$ of the two spheres are parametrized by $\{\xi=\pm s\}$, respectively. 

Any solution to the Laplace's equation can be represented as a sum of the following bispherical harmonics which is equal to the TO basis $\mathcal{M}_{n,\pm}^m$ as follows: 
\beq\label{TO_basis_bispherical}
\mathcal{M}_{n,\pm}^m(\Br) = \sqrt{2}\sqrt{\cosh\xi-\cos\eta}\,e^{\pm(n+\frac{1}{2})\xi}Y_{n}^m(\eta,\varphi).
\eeq
The scale factors for the bispherical coordinates are
\beq
\sigma_\xi=\sigma_\eta=\frac{\alpha}{\cosh \xi-\cos\eta}\quad\mbox{and}\quad \sigma_\varphi = \frac{\alpha\sin\eta}{\cosh \xi-\cos\eta},
\nonumber
\eeq
so that the gradient for a scalar valued function $g$ can be written in the form
\beq
\nabla g = \frac{1}{\sigma_\xi}\frac{\partial g}{\partial\xi}{\mathbf{e}}_{\xi}+ \frac{1}{\sigma_\eta} \frac{\partial g}{\partial\eta}{\mathbf{e}}_{\eta}+\frac{1}{\sigma_\varphi }\frac{\partial g}{\partial\varphi}{\mathbf{e}}_{\varphi},
\nonumber
\eeq
where $\{\mathbf{e}_\xi,\mathbf{e}_\eta,\mathbf{e}_\varphi\}$ are the unit basis vectors in the bispherical coordinates. The normal derivative on the surface $\{\xi=\pm s\}$ of the sphere $B_\pm$ is given by
\beq\label{normal_deri}
\frac{\partial}{\partial\mathbf{n}}\Big|_{\partial B_\pm} = \mp{\mathbf{e}}_\xi\cdot \nabla|_{\partial B_\pm}= \mp \frac{\cosh s-\cos\eta}{\alpha}\frac{\partial}{\partial \xi}\Big|_{\xi=\pm s},
\eeq
where $\mathbf{n}$ denotes the outward unit normal vector.

If the function $g$ is of the following form:
\beq
\nonumber
g(\Br)=\sum_{n=0}^\infty \sum_{m=-n}^n c_n^m \mathcal{M}_{n,+}^m(\Br)+d_n^m \mathcal{M}_{n,-}^m(\Br),
\eeq
then $z$-component of the gradient at the origin is given by
\beq\label{grad_origin_bisph}
{\mathbf{e}}_z\cdot \nabla g (0,0,0) = \frac{2^{3/2}}{\alpha}\sum_{n=0}^\infty (c_n^0-d_n^0) (n+1/2) (-1)^n,
\eeq
where ${\mathbf{e}}_z=(0,0,1)$.

\section{Recurrence relations for $A_n$}\label{app_sec:recur_An}
In \cite{GN}, it was shown that the coefficients $A_n$ in \cref{TO_V_eqn_main} satisfy the following recurrence relations: for $n=0,1,2,\cdots$,
\begin{align*}
(\overline{n}-1/2) f_{n-1} A_{n-1}
- g_n A_n + (\overline{n}+1/2)f_{n+1} A_{n+1}
=E_0 h_n,
\end{align*}
where $\overline{n}=n+1/2$, $A_{-1}=0$ and $f_n,g_n,h_n$ are given by
\begin{align*}
f_n &= \cosh (\overline{n}-1)s + \ep_B \sinh (\overline{n}-1)s,
\\
g_n &= 2\overline{n}\cosh s(\sinh \overline{n} s + \cosh \overline{n} s) +(\ep_B-1)\sinh s \sinh \overline{n}s , 
\\
h_n &= \alpha(1-\ep_B) e^{-\overline{n}s}(ne^s - (n+1) e^{-s} ).
\end{align*}
For the reader's convenience, we recall $s=\cosh^{-1}(d/R)$ and $\alpha=R\sinh s$.

\section{Poladian's image method for two spheres (review)}\label{sec:Poladian}

Here, we present a review of Poladian's image method for two dielectric spheres. First, we explain the image method when only a single sphere is placed in the whole space. Then we discuss an image series solution for two spheres in a uniform incident field. Finally, we consider the generalized image method for the case of multipole sources.

\subsection{A single sphere}
Suppose that a single sphere of radius $R$ is centered at $(0,0,0)$. Let $\ep_B$ be the permittivity of the sphere. We also assume the background permittivity is $\ep_0=1$. Let $\tau=(\ep_B-1)/(\ep_B+1)$. When we locate a point charge $Q$ at $(0,0, c)$ with $c>R$, then it can be shown that the reaction potential is identical to the potential generated by the following two image sources \cite{Pol_thesis,Pol,Pol2}: (1) a point charge $Q'=-\tau (R/c) Q$ at $(0,0,R^2/c)$ and (2) a continuous line source along the line segment from $(0,0,0)$ to $(0,0,R^2/c)$ with a density function $\Lambda$ given by
\begin{equation*}
\Lambda(t) = \frac{\tau Q}{R(\ep+1)}  \Big(\frac{R^2}{c t}\Big)^{\frac{1}{2}(\tau+1)}, \quad t\in (0,R^2/c) .
\end{equation*}
Poladian observed that the continuous line source can be well approximated by a point charge $-Q'$ at the center of the sphere $(0,0,0)$ provided that $|\ep_B|$ is large. In fact, this approximation becomes exact when $|\ep_B|=\infty$.

Therefore, Poladian's imaging rule for a single sphere can be summarized as follows: if a sphere of radius $R$ is centered at $(0,0,0)$ and a point charge $Q$ is located at $(0,0,c)$, then the following two image charges are produced: (i) a point charge $Q'=-\tau (R/c)Q$ at $(0,0,R^2/c)$ (ii) a point charge $-Q'$ at the center of the sphere $(0,0,0)$ \cite{Pol_thesis,Pol,Pol2}. Let us call the latter image charge {\it the neutralizing charge}.

\subsection{Two spheres in a uniform field}\label{app_Poladian_uniform}

Let us now consider the two spheres $B_+$ and $B_-$. Suppose that we locate a point charge of the magnitude $\pm1$ at the point $(0,0,\pm z_0)$ in the sphere $B_\pm$, respectively. Due to the interaction between two spheres, an infinite sequence of image charges is generated along $z$-axis by Poladian's imaging rule. But it is difficult to keep track of all the image charges at each step of the recursive imaging process. Poladian found that it is much simpler to initially neglect the neutralizing charges and later introduce an additional image sources.

By ignoring the neutralizing charge in Poladian's imaging rule, we obtain an infinite sequence of the image charges as follows: for $m=0,1,2,\cdots$, $m$-th image charge $\pm u_m$ is located at the point $\pm\mathbf{z}_m=(0,0,\pm z_m)$ in the sphere $B_\pm$, respectively, where $z_m$ and $u_m$ satisfy the following recursive relations:
\beq
\nonumber
\ds d-z_{k+1} = \frac{R^2}{d+z_k},
\quad
\ds u_{k+1} = \tau\frac{R}{d+z_k} u_k . 
\eeq
These recursive relations can be solved explicitly. To state the solutions for $u_k$ and $z_k$, we introduce a parameter $t_0$ which satisfies $z_0 = \alpha \coth (s+t_0)$. Note that if the initial position is equal to the center of each sphere (that is, $z_0=d=R\cosh s$), then it holds that $t_0=0$. Using this representation for $z_0$ and the hyper-trigonometric identities, one can see that the solutions for $z_k$ and $u_k$ are given as follows:
\begin{equation*}
\ds{z_k} = \alpha \coth (ks+s+t_0),
\quad
\ds u_k = \tau^{k}  \frac{\sinh (s+t_0)}{\sinh (ks+s+t_0)}.
\end{equation*}
Then the potential $U(\Br)$ generated by all the above image charges is given by
\beq\label{image_series_U_app}
U(\Br) = \sum_{k=0}^\infty u_k ( G(\Br-\mathbf{z}_k) - G(\Br+\mathbf{z}_k) ),
\eeq
where $\mathbf{z}_k=(0,0,z_k)$.

Let  us now consider the two spheres $B_+\cup B_-$ placed in a uniform incident field  $(0,0,E_0)$. Let $p_0$ be the induced polarizability when a single sphere is subjected to  the uniform incident field, that is, $p_0 =E_0 R^3{2\tau }/({3-\tau}).$ We also let $D(\Br) ={\mathbf{e}_z \cdot \hat{\mathbf{r}}}/({|\Br|^2})$ be the potential generated by a point dipole source with a unit moment $\mathbf{e}_z$, where $\hat{\Br} = \Br/|\Br|$. The uniform incident field is first imaged in each sphere, inducing an image point dipole source with the polarizability $p_0$ at the center of each sphere. Then these initial point dipoles produce an infinite sequence of image sources. The point dipole $p_0$ can be considered as the limit of two initial charges $\pm {4\pi p_0}/{2h}$ at the points $z_0=(0,0,d\pm h)$ as $h\rightarrow 0$. It is equivalent to taking derivative $4\pi p_0\partial/{\partial z_0}$ at $z_0=d$. So we get the following expression for the image potential generated by the point dipole $p_0$ \cite{Pol_thesis,Pol,Pol2}:
\beq\label{V1_def}
V_1(\Br):= 4\pi p_0 \frac{\partial  (U(\Br))}{\partial z_0}\Big|_{z_0=d}.
\eeq
Then, from \cref{image_series_U_app} and the identity $\frac{\partial}{\partial{z_0}}|_{z_0=d} = -\frac{\sinh^2 s}{\alpha} \frac{\partial}{\partial{t_0}}|_{t_0=0}$, we can represent $V_1$ more explicitly as follows:
\begin{equation*}
\ds V_1(\Br)= \sum_{k=0}^\infty p_k D(\Br-\Br_k)- q_k G(\Br-\Br_k)
+\sum_{k=0}^\infty p_k D(\Br+\Br_k)+ q_k G(\Br+\Br_k),
\end{equation*}
where $\Br_k,p_k$ and $q_k$ are given by
\begin{align*}
\ds\Br_k &= \mathbf{z}_k|_{t_0=0}=(0,0, \alpha \coth(k+1)s),
\\[0.2em]
\ds p_k &= \tau^{k} p_0 \Big( \frac{\sinh s}{\sinh (k+1)s} \Big)^3,
\quad q_k =  \tau^{k}\frac{p_0}{R}\frac{\sinh s \sinh k s}{\sinh^2 (k+1)s}.
\end{align*}
Note that $\pm \mathbf{r}_0$ is the center of the sphere $B_\pm$, respectively.

As pointed out by Poladian in \cite{Pol_thesis}, the potential $V_1$ is unphysical because the total charge on each sphere is non-zero. It originates from the fact that we have ignored the neutralizing image charges. Now we explain Poladian's strategy for neutralizing the total charge \cite{Pol_thesis}. We introduce an additional potential by locating a point charge $\pm Q$ at the center of the sphere $B_\pm$, respectively. Then the corresponding image potential is given by
\begin{equation*}
\ds V_2(\Br):= Q U(\Br)|_{z_0=d}
=  Q \sum_{k=0}^\infty u_k^0 ( G(\Br-\Br_k)-G(\Br+\Br_k) ),
\end{equation*}
where $u_k^0$ is defined by $u_k^0 = u_k|_{t_0=0} = \tau^k {\sinh s}/{\sinh (k+1)s}$. Now we choose the constant $Q$ so that the potential $V_1+V_2$ has no net flux on each sphere. More precisely, we impose $\int_{\partial B_\pm} \frac{\partial }{\partial\mathbf{n}}(V_1+V_2) dS = 0$. Then we obtain $Q =\sum_{k=0}^\infty q_k \Big/\sum_{k=0}^\infty u_k^0$. Finally, we get the approximation for the potential $V(\Br)$ by superposing the uniform incident field and the aformentioned potentials:
\beq\label{Poladian_sol_V}
V(\Br)+E_0 z \approx V_1(\Br)+V_2(\Br)=4\pi p_0 \frac{\partial  (U(\Br))}{\partial z_0}\Big|_{z_0=d} + Q U(\Br)|_{z_0=d}.
\eeq

\subsection{Image method for general multipole sources} \label{app_Poladian_multipole}

We now consider the case when an initial image source is a multipole source $\mathcal{Y}_{lm}$. Note that, since the point charge potential $G$ and the dipole potential $D$ satisfy $G(\Br)=\frac{1}{4\pi}\mathcal{Y}_{00}(\Br)$ and $D(\Br)=\mathcal{Y}_{10}(\Br)$, the image potentials \cref{image_series_U} and \cref{V1_def} can be seen as the special cases of potentials generated by the image multipole sources.

Before considering a general multipole source $\mathcal{Y}_{lm}$, let us first consider a sectoral multipole $\mathcal{Y}_{|m|,m}$. If an initial sectoral multipole $\mathcal{Y}_{|m|,m}$ is located at $(0,0, z_0)$, the image sequence is produced by Poladian's rule \cite{Pol_thesis, Pol,Pol2} as follows: $u_m^{(2k)} \mathcal{Y}_{|m|,m}$ at $(0,0,z_{2k})$ and $-u_m^{(2k+1)}\mathcal{Y}_{|m|,m}$ at $(0,0,- z_{2k+1})$ for $k=0,1,2,\cdots$. Similarly, if an initial location is  $(0,0, -z_0)$, then the following image sequence is produced: $u_m^{(2k)} \mathcal{Y}_{|m|,m}$ at $(0,0,-z_{2k})$ and $-u_m^{(2k+1)}\mathcal{Y}_{|m|,m}$ at $(0,0,+ z_{2k+1})$ for $k=0,1,2,\cdots$. Here, $u_m^{(k)}$ satisfies a recursive relation 
\begin{equation*}
u_m^{(k+1)} = \tau\Big(\frac{R}{d+z_k}\Big)^{2|m|+1} u_m^{(k)}, \quad k=0,1,2,\cdots.
\end{equation*}
It can be explicitly solved as follows:
\begin{equation*}
u_{m}^{(k)}=\tau^k \Big(\frac{\sinh(s+t_0)}{\sinh(ks+s+t_0)}\Big)^{2|m|+1}.
\end{equation*}
Let $U_{m}^\pm$ be the potential generated by the above image sequence when the initial sectoral multipole is located at $(0,0,\pm z_0)$, respectively.
Then the potential $U_{m}^\pm$ is given by
\beq\begin{array}{l}
\ds U_{m}^{\pm}(\Br)=\sum_{k=0}^\infty u_{m}^{(2k)} \mathcal{Y}_{|m|,m}(\Br \mp \mathbf{z}_{2k})
-u_{m}^{(2k+1)} \mathcal{Y}_{|m|,m}(\Br\pm\mathbf{z}_{2k+1}).\label{Umm_image_sol}
\end{array}\eeq

Let us turn to the case of a general multipole source $\mathcal{Y}_{lm}(\Br)$. Let $U_{lm}^\pm$ be the potential due to the image sequence produced by an initial multipole source $\mathcal{Y}_{lm}$ located at the center of the sphere $B_\pm$, respectively. It was shown that a general multipole source $\mathcal{Y}_{lm}$ can be represented as a derivative of a sectoral multipole $\mathcal{Y}_{|m|,m}$\cite{Pol_thesis,Pol,Pol2}:
\beq\label{Ylm_and_Ymm}
\mathcal{Y}_{lm}(\Br\mp\Br_0) = \mathcal{D}_{lm}^\pm \big[ \mathcal{Y}_{|m|,m}(\Br\mp\mathbf{z}_0)\big],
\eeq
where the differential operator $\mathcal{D}_{lm}^\pm$ is defined by
\beq\label{eqn:D_lm_pm}
\ds \mathcal{D}_{lm}^\pm[f]=\frac{(\pm 1)^{l-|m|}}{N_{lm}} \frac{\partial^{l-|m|}}{\partial  [z_0(t_0)]^{l-|m|}}f\bigg|_{z_0=d},
\eeq
where $N_{lm}$ is defined as \cref{def_Nlm_app}. Therefore, the image potential $U_{lm}^\pm$ is also represented as $U^{\pm}_{lm}(\Br) = \mathcal{D}_{lm}^\pm\big[ U^{\pm}_{m}(\Br)\big]$. Actually, this is not the end. We need to be careful when we consider the case when $m=0$. In this case, the total charges on each sphere $B_\pm$ may be non-zero. Since this is unphysical, we have to neutralize them again. We modify the potential $U_{lm}^+$ by adding an image potential produced by the following initial charges: a point charge $-Q^{+}_{l,1}$ (and $-Q^{+}_{l,2}$) at the center of the sphere $B_+$ (and $B_-$), respectively. We also modify the potential $U_{lm}^-$ in a similar way with the initial charges $-Q^{-}_{l,i},i=1,2$.  Here, the constants $Q_{l,i}^\pm$  are chosen so that the total flux on each surface $\partial B_\pm$ is zero. Specifically, the potential $U^{\pm}_{l,m}$ is modified as follows:
\beq\label{ULM_image_sol_app}
U^{\pm}_{lm}(\Br) = \mathcal{D}_{lm}^\pm\big[ U^{\pm}_{m}(\Br)\big] -\delta_{0m} Q_{l,1}^\pm U_{0}^+(\Br)|_{z_0=d} -\delta_{0m} Q_{l,2}^\pm U_{0}^-(\Br)|_{z_0=d},
\eeq
where $\delta_{lm}$ is the Kronecker delta.

It is worth to remark that we can evaluate the derivatives $\mathcal{D}_{lm}^\pm[f]$ analytically by using the Fa\'{a} di Bruno's formula (we omit the details). Moreover, its numerical computation can be done efficiently using a recursive relation for Bell polynomials.

\section{Proof of Lemma \ref{lem_connect_charge_TO}}\label{app_single_image_TO}
From the definition \cref{def_bispherical_app} of the bispherical coordinates, we have $z+i\rho={2\alpha}/({e^{\xi-i\eta}-1})+\alpha$. Note that $\coth t={2}/({e^{2t}-1})+1$. Hence, by using these identities and letting $\mathbf{z}(t)=(0,0,\alpha \coth t)$, it follows that
\begin{align}
\ds \frac{1}{|\Br-\mathbf{z}(t)|}
&=\big|z+i\rho-\alpha\coth t\big|^{-1}
=\frac{1}{2\alpha}\left|\frac{1}{e^{\xi-i\eta}-1}-\frac{1}{e^{2t}-1}\right|^{-1}
\label{inv_distance_bisph}
\\
\ds
&=\frac{1}{2\alpha}\left|\frac{(e^{2t}-1)(e^{\xi-i\eta}-1)}{e^{2 t}(e^{\xi-2t-i\eta}-1)}\right|
=\frac{\sinh|t|}{\alpha}\frac{\sqrt{\cosh\xi-\cos\eta}}{\sqrt{\cosh(\xi-2t)-\cos\eta}}.
\nonumber
\end{align}
By letting $a=e^{-|\xi-2t|}$ and $x=\cos\eta$ in \cref{gen_P0}, it is easy to check that
\begin{equation*}
\ds \frac{1}{\sqrt{\cosh (\xi-2t)-\cos\eta}}
=\sqrt{2}\sum_{n=0}^\infty e^{-\left(n+\frac{1}{2}\right)|\xi-2t|}P_n(\cos\eta).
\end{equation*}
Then, for $\xi < 2t$, we have
\beq\label{inv_distance_TObasis}
\ds\frac{\alpha}{\sinh |t|}\frac{1}{|\Br-\mathbf{z}(t)|}=
\sqrt{2}
\sqrt{\cosh\xi-\cos\eta} 
\sum_{n=0}^\infty e^{-\left(2n+1\right)t} e^{(n+\frac{1}{2})\xi}P_n(\cos\eta).
\eeq
Using the fact that $z_0=(0,0,\alpha \coth (s+t_0))\in B_+$, it can be shown that $t_0> -s/2$. It implies that $\xi < 2(ks+s+t_0)$ for $|\xi|\leq s$. Recall that $|\xi|\leq s$ for $\Br\in \mathbb{R}^3\setminus(B_+\cup B_-)$. Hence \cref{inv_distance_TObasis} holds for $t=ks+s+t_0$ and $\Br\in \mathbb{R}^3\setminus(B_+\cup B_-)$. Note that $\mathbf{z}_k=\mathbf{z}(ks+s+t_0)$. Therefore, using \eqref{TO_basis_bispherical}  and the definitions of $u_{k}$ and $G$, the conclusion follows for $u_k G(\Br-\mathbf{z}_k)$. The other case for $u_k G(\Br+\mathbf{z}_k)$ can be considered in the same way.

\section{Proof of Theorem \ref{main_thm1}} \label{app_pf_main_thm1}
We shall prove the result by applying our connection formula to Poladian's image series solution. From the image solution \cref{Poladian_sol_V}, Theorem \ref{thm_image_to_TO_U} and the identity $\partial_{z_0}\big|_{z_0=d} = -({\sinh^2 s}/{\alpha}) \partial_{t_0}\big|_{t_0=0}$, we get
\begin{align}
\ds V(\Br) + E_0 z &\approx V_1(\Br) + V_2(\Br)=4\pi p_0 {\partial_{z_0}}\big|_{z_0=d} U(\Br) + Q U(\Br)|_{z_0=d}
\nonumber
\\[0.5em]
\ds
&= E_0\frac{2\tau \alpha}{3-\tau} \sum_{n=0}^\infty \frac{2n+1 - \coth s}{e^{(2n+1)s}-\tau} 
(\mathcal{M}_{n,+}^0(\Br)-\mathcal{M}_{n,-}^0(\Br))
\label{V12_TO}
\\[0.5em]
\ds
& \quad+ Q\sum_{n=0}^\infty \frac{\mathcal{M}_{n,+}^0(\Br)-\mathcal{M}_{n,-}^0(\Br)}{e^{(2n+1)s}-\tau}.
\nonumber
\end{align}

Now let us consider the constant $Q$. Recall the following condition:
\begin{equation*}
\int_{\partial B_+} \frac{\partial}{\partial\mathbf{n}}  (V_1+V_2) \,dS = 0.
\end{equation*}
Then, by using Theorem \ref{thm_flux}, we obtain
\begin{equation*}
\ds E_0 \frac{2\tau \alpha}{3-\tau}\sum_{n=0}^\infty \frac{2n+1-\coth s }{e^{(2n+1)s}-\tau}+Q \sum_{n=0}^\infty \frac{1}{e^{(2n+1)s}-\tau}=0.
\end{equation*}
Hence, we see that $Q=-K_0+ E_0 \frac{2\tau \alpha}{3-\tau} \coth s$. Therefore, from \cref{V12_TO}, the conclusion follows.

\section{Proof of Theorem \ref{thm_multipole_TO}} \label{app_pf_thm_multipole_TO}
We first consider the case of a sectoral multipole $\mathcal{Y}_{|m|,m}$. We can represent the image potential $u_{m}^{(k)}\mathcal{Y}_{|m|,m}(\Br\mp \mathbf{z}_k)$ using TO basis as follows.
\begin{lemma}(Converting image sectoral multipole to TO)\label{lem_sectoral_to_TO}
For $\Br\in \mathbb{R}^3\setminus(B_+\cup B_-)$, we have
\begin{equation*}
\ds u_{m}^{(k)}\mathcal{Y}_{|m|,m}(\Br\mp \mathbf{z}_k) = \sum_{n=|m|}^\infty  g_{n}^m  \lambda_{n}^m \big[\tau e^{-(2n+1)s}\big]^k 
e^{-(2n+1)s} \mathcal{M}_{n,\pm}^m(\Br),
\end{equation*}
where $\lambda_n^m$ and $g_n^m$ are given in \cref{def_gnm_lambdanm_Q}.
\end{lemma}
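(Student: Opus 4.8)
The plan is to follow the proof of Lemma~\ref{lem_connect_charge_TO} almost verbatim, upgrading the inverse-distance generating function \eqref{gen_P0} to the associated-Legendre generating function \eqref{gen_Pnm}. First I would strip the sectoral multipole down to an explicit cylindrical form. Since the shift $\mathbf{z}_k$ is along the $z$-axis, the cylindrical radius $\rho$ and the azimuthal angle $\varphi$ are unchanged by the translation, and \eqref{Pnn} gives $P_{|m|}^{|m|}(\cos\theta)=(-1)^{|m|}(2|m|-1)!!\,\sin^{|m|}\theta$. Hence the sectoral solid harmonic centred at $\mathbf{z}(t)=(0,0,\alpha\coth t)$ may be written as $\mathcal{Y}_{|m|,m}(\Br\mp\mathbf{z}(t))=c_m\,\rho^{|m|}e^{im\varphi}/|\Br\mp\mathbf{z}(t)|^{2|m|+1}$ with $c_m=(-1)^{|m|}(2|m|-1)!!/\sqrt{(2|m|)!}$. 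This isolates the only nontrivial object to expand, namely $\rho^{|m|}/|\Br-\mathbf{z}(t)|^{2|m|+1}$.

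Next I would pass to bispherical coordinates. Raising \eqref{inv_distance_bisph} to the power $2|m|+1$ and inserting $\rho=\alpha\sin\eta/(\cosh\xi-\cos\eta)$ from \eqref{bisph_cartes_app}, the powers of $(\cosh\xi-\cos\eta)$ combine to leave exactly the factor $(\cosh\xi-\cos\eta)^{1/2}$ demanded by \eqref{TO_basis_bispherical}; indeed the exponent is $-|m|+(2|m|+1)/2=1/2$. The key algebraic step is the factorization $\cosh(\xi-2t)-\cos\eta=(2a)^{-1}(1-2a\cos\eta+a^2)$ with $a=e^{\xi-2t}$, so that, with $x=\cos\eta$, the remaining factor $\sin^{|m|}\eta\,(\cosh(\xi-2t)-\cos\eta)^{-(|m|+1/2)}$ equals $(2a)^{|m|+1/2}(1-x^2)^{|m|/2}(1-2ax+a^2)^{-(|m|+1/2)}$. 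Applying \eqref{gen_Pnm} (with $m$ replaced by $|m|$) converts this into a series in $e^{(n+\frac12)\xi}P_n^{|m|}(\cos\eta)$ carrying the coefficient $e^{-(2n+1)t}$.

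I would then recognise each term as a TO basis function through \eqref{TO_basis_bispherical}: the combination $(\cosh\xi-\cos\eta)^{1/2}e^{(n+\frac12)\xi}P_n^{|m|}(\cos\eta)e^{im\varphi}$ reproduces $\mathcal{M}_{n,+}^m$ up to the normalization of $Y_n^m$, and collecting every constant (the $(-1)^{|m|}$ and $(2|m|-1)!!$ cancel) yields precisely $g_n^m$ as in \eqref{def_gnm_lambdanm_Q}, giving $\mathcal{Y}_{|m|,m}(\Br-\mathbf{z}(t))=\sinh^{2|m|+1}|t|\sum_{n\ge|m|}g_n^m\,e^{-(2n+1)t}\mathcal{M}_{n,+}^m(\Br)$. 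Specializing $t=ks+s+t_0$ so that $\mathbf{z}(t)=\mathbf{z}_k$ and multiplying by $u_m^{(k)}=\tau^k(\sinh(s+t_0)/\sinh(ks+s+t_0))^{2|m|+1}$, the $\sinh(ks+s+t_0)$ factors cancel; the surviving constants assemble into $\lambda_n^m=[\sinh(s+t_0)]^{2|m|+1}e^{-(2n+1)t_0}$, while $\tau^k e^{-(2n+1)(ks+s+t_0)}$ collapses to $[\tau e^{-(2n+1)s}]^k e^{-(2n+1)s}$, which is the claimed identity. The lower-sign case $\Br+\mathbf{z}_k$ follows from the reflection $\xi\mapsto-\xi$, which interchanges $\mathcal{M}_{n,+}^m$ and $\mathcal{M}_{n,-}^m$ by \eqref{TO_basis_bispherical}, and the case $m<0$ is identical since only $|m|$ enters the radial factors.

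I expect the main obstacle to be the constant bookkeeping: verifying that the $(\cosh\xi-\cos\eta)$ powers collapse to the single $1/2$ power and that the prefactors $(2a)^{|m|+1/2}$ together with $c_m$ and the normalization of $Y_n^m$ combine \emph{exactly} into $g_n^m$. I would also record the validity condition $\xi<2t$, which guarantees $|a|=e^{\xi-2t}<1$ and hence convergence of \eqref{gen_Pnm}; exactly as in the proof of Lemma~\ref{lem_connect_charge_TO}, this holds on $\mathbb{R}^3\setminus(B_+\cup B_-)$ because $t_0>-s/2$ forces $2t=2(ks+s+t_0)>s\ge|\xi|$.
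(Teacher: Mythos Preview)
Your proposal is correct and follows essentially the same route as the paper's own proof: both reduce $\mathcal{Y}_{|m|,m}(\Br\mp\mathbf{z}_k)$ to $c_m\,\rho^{|m|}e^{im\varphi}/|\Br\mp\mathbf{z}_k|^{2|m|+1}$ via \eqref{Pnn}, pass to bispherical coordinates using \eqref{inv_distance_bisph} and \eqref{bisph_cartes_app}, and then apply the generating function \eqref{gen_Pnm} with $a=e^{-|\xi-2t|}$, $x=\cos\eta$ to recognise the TO basis. The paper's constant $\omega_m$ is exactly your $c_m$, and your anticipated bookkeeping of the $(\cosh\xi-\cos\eta)$ exponents and the prefactors matches the paper's intermediate display \eqref{umk_CYmm_1}.
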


\begin{proof}
For simplicity, we consider only $u_{m}^{(k)}\mathcal{Y}_{|m|,m}(\Br-\mathbf{z}_k)$. From  \cref{Pnn} and the fact that $\rho = |\mathbf{r}-\mathbf{z}_k|\sin \theta_k$, we have 
\begin{align}
\ds\mathcal{Y}_{|m|,m}(\Br- \mathbf{z}_k) &= \frac{1}{\sqrt{(2|m|)!}}\frac{P_{|m|}^{|m|}(\cos \theta_k)e^{im\phi_k}}{|\Br- \mathbf{z}_k|^{|m|+1}}
\label{CYmm_1}
\\
\ds
&= \omega_m\frac{ [\sin\theta_k]^{|m|} e^{im\phi_k} }{{|\Br- \mathbf{z}_k|^{|m|+1}}}
=\omega_m\frac{ \rho^{|m|} e^{im\phi_k} }{{|\Br- \mathbf{z}_k|^{2|m|+1}}},
\nonumber
\end{align}
where  $\omega_m={(-1)^{|m|}(2|m|-1)!!}/{\sqrt{(2|m|)!}}$ and $(r_k,\theta_k,\phi_k)$ is the spherical coordinates system for $\Br-\mathbf{z}_k$. Note that $\phi_k=\varphi$ for all $k\geq 0$.

From  \cref{inv_distance_bisph} and the fact that $\mathbf{z}_k = \mathbf{z}(ks+s+t_0)$, we see that
\begin{equation*}
\frac{1}{|\Br-\mathbf{z}_k|} =  \frac{{\sin(ks+s+t_0)}\sqrt{\cosh\xi-\cos\eta}}{\alpha\sqrt{\cosh(\xi- 2(ks+s+t_0))-\cos\eta}}.
\end{equation*}
We also have from  \cref{bisph_cartes_app} that
$
\rho= {\alpha \sin\eta}/{(\cosh\xi-\cos\eta)}.
$
By substituting these expressions for $1/|\Br-\mathbf{z}_k|$ and $\rho$ into \cref{CYmm_1}, we get
\begin{align}
\ds u_m^{(k)}\mathcal{Y}_{|m|,m}(\Br-\mathbf{z}_k)
&=\tau^k \frac{\sinh^{2|m|+1}(s+t_0)}{\sqrt{(2|m|)!}\alpha^{|m|+1}}\sqrt{\cosh\xi-\cos\eta}
 \label{umk_CYmm_1}
\\
\ds & \quad \times 
 \frac{ 2^{|m|+1/2}(-1)^{|m|}(2|m|-1)!! [\sin\eta]^{|m|} e^{im\phi_k}}{[2(\cosh(\xi-2(ks+s+t_0))-\cos\eta)]^{|m|+1/2}}.
 \nonumber
\end{align}
By letting $a=e^{-|\xi-2t|}$ and $x=\cos\eta$ in \cref{gen_Pnm}, it is easy to check that
\begin{equation*}
\ds\frac{(-1)^m(2m-1)!![\sin\eta]^m }{[2(\cosh(\xi-2t)-\cos\eta)]^{m+1/2}}
= \sum_{n=m}^\infty e^{-(n+\frac{1}{2})|\xi-2t|} P_n^m(\cos\eta).
\end{equation*}
By applying this identity to \cref{umk_CYmm_1} with $t=ks+s+t_0$, we obtain that
\begin{align*}
\ds u_m^{(k)}\mathcal{Y}_{|m|,m}(\Br-\mathbf{z}_k)
&=\tau^k 2^{|m|} \frac{\sinh^{2|m|+1}(s+t_0)}{\sqrt{(2|m|)!}\alpha^{|m|+1}}
\sqrt{2}\sqrt{\cosh\xi-\cos\eta}
\\
\ds & \quad \times 
\sum_{n=|m|}^\infty e^{-(2n+1)(ks+s+t_0)}e^{(n+\frac{1}{2})\xi} P_n^{|m|}(\cos\eta)e^{im\varphi},
\end{align*}
for $|\xi|\leq s$. Then, from \cref{TO_basis_bispherical}, the conclusion follows.
\end{proof}

\smallskip

Now we are ready to prove Theorem \ref{thm_multipole_TO}. By applying Lemma \ref{lem_sectoral_to_TO} to  \cref{Umm_image_sol} and then using the following identity:
\beq\label{tau_sum_geom2}
\sum_{k=0}^\infty \big[\tau e^{-(2n+1)s}\big]^{2k}  = \frac{e^{2(2n+1)s}}{e^{2(2n+1)s}-\tau^2},
\eeq
we obtain
\begin{equation*}
\ds U^\pm_{m}(\Br) =  \sum_{n=|m|}^\infty g_n^m \lambda_{n}^m  \frac{ e^{(2n+1)s}\mathcal{M}_{n,\pm}^m(\Br)-\tau\mathcal{M}_{n,\mp}^m(\Br) }{ e^{2(2n+1)s}-\tau^2 }.
\end{equation*}
Then, by using  \cref{ULM_image_sol_app}, we get
\begin{align}
\ds U^\pm_{lm}(\Br) &=  \sum_{n=|m|}^\infty   \frac{ g_n^m \mathcal{D}_{lm}^\pm[\lambda_{n}^m] }{ e^{2(2n+1)s}-\tau^2 }
(e^{(2n+1)s}\mathcal{M}_{n,\pm}^m(\Br)-\tau\mathcal{M}_{n,\mp}^m(\Br))\nonumber
\\
\ds &\qquad - \delta_{0m} Q_{l,1}^\pm \frac{\sinh s}{\alpha}\sum_{n=0}^\infty  \frac{  e^{(2n+1)s}\mathcal{M}_{n,+}^0(\Br)-\tau\mathcal{M}_{n,-}^0(\Br)}{ e^{2(2n+1)s}-\tau^2 }
\label{eq_Ulm_TO_1}
\\ 
\ds &\qquad - \delta_{0m} Q_{l,2}^\pm \frac{\sinh s}{\alpha}\sum_{n=0}^\infty  \frac{  (-\tau)\mathcal{M}_{n,+}^0(\Br)+e^{(2n+1)s}\mathcal{M}_{n,-}^0(\Br)}{ e^{2(2n+1)s}-\tau^2 }.
\nonumber
\end{align}

Now we consider the following flux conditions:
\begin{equation*}
\int_{\partial B_+} \frac{\partial (U_{l,m}^\pm)}{\partial\mathbf{n}} dS=0, \quad \int_{\partial B_-} \frac{\partial (U_{l,m}^\pm)}{\partial\mathbf{n}} dS=0.
\end{equation*}
Then, by applying Theorem \ref{thm_flux} to the above conditions with \cref{eq_Ulm_TO_1}, we obtain 
\begin{equation*}
 Q_{l,1}^\pm \frac{\sinh s}{\alpha}= \frac{\widetilde Q_{l,1}^\pm \pm \widetilde Q_{l,2}^\pm}{2}, \quad  Q_{l,2}^\pm \frac{\sinh s}{\alpha}= (-1)^l\frac{\widetilde Q_{l,1}^\pm \mp \widetilde Q_{l,2}^\pm}{2}.
\end{equation*}
By rearranging the terms, the proof is completed.

\section{Field at the gap center and absorption cross section}\label{app_field_gap_absorb}
From \cref{grad_origin_bisph}, we can see that the magnitude of the electric field at the gap is given by
\begin{equation*}
E=-(\nabla V \cdot {\mathbf{e}}_z) (0,0,0) =E_0 - \frac{2^{3/2}}{\alpha}\sum_{n=0}^\infty A_n (2n+1)(-1)^{n}.
\end{equation*}
As mentioned in the main text, the absorption cross section $\sigma_a$ is given by $\sigma_{a} = \omega \mbox{Im}\{ {p}\}$ where $p$ is the polarizability. It was shown in \cite{GN} that the polarizability $p$ is given by $p= \sqrt{2}\alpha^2 \sum_{n=0}^\infty (2n+1)A_n$. Therefore, by replacing $A_n$ by $\widetilde{A}_n$, we can derive  approximate analytical expressions for $E$ and $\sigma_a$.

\section{Multipole expansion method}\label{app_multipole}
The classical way to solve the many-spheres problem is Rayleigh's multipole expansion method. Here, we briefly review this method. Recall that the solid harmonics $\mathcal{Y}_{lm}$ and $\mathcal{Z}_{lm}$ are defined by
\begin{equation*}
\mathcal{Y}_{lm}(\Br)=\frac{Y_{l}^m(\theta,\phi)}{r^{l+1}}, \quad \mathcal{Z}_{lm}(\Br)=r^l Y_{l}^m(\theta,\phi).
\end{equation*}
Any solution to Laplace's equation can be represented as a sum of $\mathcal{Y}_{lm}$ and $\mathcal{Z}_{lm}$.
The solution $V(\Br)$ to the problem \cref{eqn:potentialV_trans} can be represented as the following multipole expansion: for $\Br$ belonging to the region outside the spheres, we have
\beq\label{multipole_exp}
V(\Br)=-E_0 z + \sum_{j=1}^J
\sum_{l=1}^\infty \sum_{m=-l}^l C_{j,lm}\mathcal{Y}_{lm}(\Br-\Bc_j),
\eeq
where the coefficients ${C}_{j,lm}$ are unknown constants and $\Bc_j$ is the center of the sphere $B_j$. For the inner region of $B_j$, we can easily extend the above representation by imposing the continuity of the potential on the surface $\partial B_j$. For $\Br \in B_j$, we have
\begin{equation*}
V(\Br)=\sum_{l=0}^\infty \sum_{m=-l}^l C_{j,lm}\frac{\mathcal{Z}_{lm}(\Br-\Bc_j)}{R^{2l+1}}.
\end{equation*}
Then, by using the addition formula \cref{translation_multi} for $\mathcal{Y}_{lm}$ and the transmission condition, $\nabla V\cdot\mathbf{n}|_{+} = \ep_B\, \nabla V\cdot\mathbf{n}|_{-}$ on the surface $\partial B_j$, the infinite dimensional linear system for unknowns $C_{j,lm}$ can be derived. If all the spheres are well-separated, good accuracy can be achieved by truncating the linear system by a small order.  But, if some of the spheres are close to touching, the charge densities on their surfaces become nearly singular. So more harmonics are required to describe them accurately.

\section{Hybrid numerical scheme for many plasmonic spheres}\label{sec:hybrid}
Here we present our hybrid numerical scheme for computing the field generated by plasmonic spheres clusters.
We modify Cheng and Greengard's hybrid method \cite{CG,C,Gan} by using our connection formulas between TO and the image method. So we first explain their idea in detail and then explain how we modify it for the plasmonic spheres system.

\subsection{Cheng and Greengard's hybrid method}
To illustrate the idea of the hybrid numerical schemes in \cite{CG,C,Gan}, let us consider an example of three  spheres (that is, $N=3$). Suppose that the spheres $B_1$ and $B_2$ are closely located but well-separated from $B_3$. Then the charge density on $\partial B_3$ can be well represented by a low-order spherical harmonics expansion. But the charge densities both on $\partial B_1$ and $\partial B_2$ may be singular, so it is better to use the image method  to describe their associated potentials. In view of this observation, they modified the multipole expansion as follows: for $\Br$ belonging to the region outside the spheres,
\begin{equation*}
\ds V(\Br)=-E_0 z +  \sum_{j=1}^2 \sum_{l=1}^\infty\sum_{m=-l}^l C_{12,lm} U_{12,lm}(\Br)
+ \sum_{l=1}^\infty\sum_{m=-l}^l C_{3,lm} \mathcal{Y}_{lm}(\Br-\Bc_3),
\end{equation*}
where $U_{12,lm}$ is the image series solution which includes all the image sources induced from the multipoles $C_{j,lm}\mathcal{Y}_{lm}(\Br-\Bc_j),j=1,2,$ by the interaction between two spheres $B_1$ and $B_2$. This representation for $V(\Br)$ can be directly generalized to a system of an arbitrary number of spheres. The resulting scheme is extremely efficient and accurate even if the spheres are nearly touching. This is because the close-to-touching interaction is already captured in the image multipole series.

\subsection{Outline of the modified algorithm for plasmonic spheres} 
As already mentioned, the image-series-based hybrid method cannot be applied for plasmonic spheres due to the non-convergence of the image series. Our strategy for extending the hybrid method to the case of plasmonic spheres is to convert the image multipole series  to their TO-type versions using the connection formula for general multipoles (Theorem \ref{thm_multipole_TO}). As a result, we obtain the modified hybrid numerical scheme valid for plasmonic spheres clusters. Here, we present the outline of the algorithm of our proposed scheme. 
\smallskip
\begin{itemize}
\item[1.] Write down the potential $V(\Br)$ in the multipole expansion form as in \cref{multipole_exp}. 
\item[2.] If a pair of spheres, say $B_j$ and $B_k$, are closely located (if the separation distance is smaller than a given number, for example, the radius $R$), then we rotate the $xyz$-axis for both $\Br-\Bc_j$ and $\Br-\Bc_k$ so that the $+z$-axis is in the direction of the axis of the pair of spheres, that is, $\Bc_j-\Bc_k$. 
\item[3.] We also transform the multipole expansion for $B_j$ into the rotated frame using  \cref{rotation_multi}. Let us denote the coefficients in the rotated frame by $C'_{j,lm}$.
\item[4.] By using the connection formula for general multipoles (Theorem \ref{thm_multipole_TO}), we modify the multipole expansion in the rotated frame by replacing $C'_{j,lm}\mathcal{Y}_{lm}(\Br)$ with the hybrid TO multipole $C'_{j,lm}U_{lm}^+(\Br)$. 
\item[5.] Do the same as in step 4 for $B_k$ with $U^-_{lm}(\Br)$ instead of $U^+_{lm}(\Br)$.
\item[6.] We convert the TO-type expansion for $B_j$ and $B_k$ into the form of multipole expansion using Theorem \ref{thm_conv_TO_to_Multi}. 
\item[7.] Rotate the axis of the coordinate system and transform the multipole expansions into the original frame. 
\item[8.] Perform steps 2-7 for all the pairs of closely spaced spheres. 
\item[9.] We extend the resulting multipole expansion to the inner regions of $B_j$ for $j=1,2,...,N$ using  Theorem \ref{inside_multi}.
\item[10.] By applying the addition formula \cref{translation_multi} for $\mathcal{Y}_{lm}$ with the transmission conditions on $\partial B_j$, we construct the infinite dimensional linear system for unknowns $C_{j,lm}$. 
\item[11.] We solve the linear system after a truncation.
\end{itemize}

\section{Useful formulas}\label{sec:useful}
Here we collect many useful formulas.

\subsection{Potential inside two spheres}

The following theorems are useful for finding a potential inside the two spheres when we have an explicit representation of the potential in the outside region. 

\begin{theorem}\label{inside_TO} 
Suppose that $V$ satisfies the Laplace equation inside and outside the two spheres $B_+$ and $B_-$. We also assume that the potential $V$ is continuous on each surface $\partial B_\pm$. 
We also assume that, outside the spheres, the potential $V$ is given by
\begin{equation*}
V(\Br)=\sum_{n=0}^\infty\sum_{m=-n}^n a_{n,+}^m \mathcal{M}_{n,+}^m(\Br) + a_{n,-}^m \mathcal{M}_{n,-}^m(\Br),
\end{equation*}
for $ \Br \in \mathbb{R}^3\setminus(B_+\cup B_-)$ and some coefficients $a_{n,\pm}^m$. Then,  inside the spheres, the potential $V(\Br)$ for $\Br\in B_\pm$ is given by
\begin{equation*}
V(\Br)=\ds\sum_{n=0}^\infty\sum_{m=-n}^n ({a}_{n,\pm}^m e^{(2n+1)s} +  a_{n,\mp}^m   ) \mathcal{M}_{n,\mp}^m(\Br),
\end{equation*}
\end{theorem}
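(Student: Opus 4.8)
The plan is to work entirely in the bispherical coordinates $(\xi,\eta,\varphi)$ and to exploit the fact that the two coordinate surfaces $\{\xi=\pm s\}$ are exactly the sphere boundaries $\partial B_\pm$. Recall from \cref{TO_basis_bispherical} that $\mathcal{M}_{n,\pm}^m(\Br)=\sqrt{2}\sqrt{\cosh\xi-\cos\eta}\,e^{\pm(n+\frac12)\xi}Y_n^m(\eta,\varphi)$, and that the interior of $B_+$ (resp. $B_-$) is the region $\xi>s$ (resp. $\xi<-s$), with the focus $(0,0,+\alpha)$ (resp. $(0,0,-\alpha)$) corresponding to $\xi\to+\infty$ (resp. $\xi\to-\infty$).

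First I would pin down the admissible interior expansion. Inside $B_+$ the potential $V$ is harmonic with no singularity, so by the completeness of the bispherical harmonics it must be a superposition of precisely those TO basis functions that stay regular at the focus $(0,0,\alpha)\in B_+$. Since $\sqrt{\cosh\xi-\cos\eta}\sim e^{\xi/2}/\sqrt2$ as $\xi\to+\infty$, one has $\mathcal{M}_{n,+}^m\sim e^{(n+1)\xi}Y_n^m$, which blows up at the focus, whereas $\mathcal{M}_{n,-}^m$ decays there. Hence regularity forces $V(\Br)=\sum_{n,m} b_n^m \mathcal{M}_{n,-}^m(\Br)$ for $\Br\in B_+$, with coefficients $b_n^m$ yet to be determined; the analogous statement inside $B_-$ uses $\mathcal{M}_{n,+}^m$. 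Uniqueness of the interior Dirichlet problem guarantees that matching the boundary data will fix the $b_n^m$ completely.

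Next I would impose the continuity hypothesis $V|_-=V|_+$ on $\partial B_+=\{\xi=s\}$. The crucial elementary observation is that on this surface the two families collapse onto one another: evaluating \cref{TO_basis_bispherical} at $\xi=s$ gives $\mathcal{M}_{n,+}^m|_{\xi=s}=e^{(2n+1)s}\,\mathcal{M}_{n,-}^m|_{\xi=s}$, since the only difference between them is the factor $e^{\pm(n+\frac12)s}$. Substituting this into the given exterior expansion, the restriction of the outside potential to $\xi=s$ becomes $\sum_{n,m}(a_{n,+}^m e^{(2n+1)s}+a_{n,-}^m)\,\mathcal{M}_{n,-}^m|_{\xi=s}$. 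Matching this against the interior expansion $\sum_{n,m} b_n^m\mathcal{M}_{n,-}^m|_{\xi=s}$ and using the orthogonality of the spherical harmonics $Y_n^m(\eta,\varphi)$ over the coordinate surface—after cancelling the common scalar factor $\sqrt2\sqrt{\cosh s-\cos\eta}$ shared by every term—yields $b_n^m=a_{n,+}^m e^{(2n+1)s}+a_{n,-}^m$ term by term, which is precisely the claimed formula for $\Br\in B_+$. The case $\Br\in B_-$ follows identically by the reflection $\xi\mapsto-\xi$, which interchanges the roles of $\mathcal{M}_{n,+}^m$ and $\mathcal{M}_{n,-}^m$.

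I expect the only genuinely delicate point to be the regularity argument in the second step: one must be sure that $\mathcal{M}_{n,+}^m$ really is singular inside $B_+$ (equivalently, that the focus $(0,0,\alpha)$ lies strictly inside $B_+$, which follows from $d-\alpha=Re^{-s}<R$) so that it is legitimately excluded from the interior expansion. Once that is established, the remaining steps are just the collapse identity $\mathcal{M}_{n,+}^m|_{\xi=s}=e^{(2n+1)s}\mathcal{M}_{n,-}^m|_{\xi=s}$ together with harmonic orthogonality, both of which are routine.
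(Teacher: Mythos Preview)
Your proposal is correct and follows essentially the same approach as the paper: both hinge on the collapse identity $\mathcal{M}_{n,+}^m|_{\xi=\pm s}=e^{\pm(2n+1)s}\mathcal{M}_{n,-}^m|_{\xi=\pm s}$ to match the interior and exterior expansions on $\partial B_\pm$. The paper's version is terser---it simply writes down the interior formula and checks harmonicity plus boundary continuity---whereas you additionally argue, via regularity at the focus and uniqueness of the interior Dirichlet problem, why the interior expansion must involve only the $\mathcal{M}_{n,\mp}^m$; this extra justification is sound and arguably makes the argument more self-contained.
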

\begin{proof}
It is obvious that the series on the right-hand side satisfies the Laplace equation. Since $\partial B_\pm = \{\xi=\pm s\}$, we have the following identity:
\begin{align*}
\ds \mathcal{M}_{n,+}^m(\Br) |_{\partial B_{\pm}} &= \sqrt{2}\sqrt{\cosh\xi-\cos\eta}e^{\pm(n+1/2)s}Y_n^{m}(\eta,\varphi)
\\
\ds & = e^{\pm(2n+1)s}\mathcal{M}_{n,-}^m(\Br)|_{\partial B_\pm}.
\end{align*}
Then one can easily check that the potential $V$ is continuous on each surface $\partial B_\pm = \{\xi=\pm s\}$. The proof is completed. 
\end{proof}

\begin{theorem} \label{inside_multi}
Suppose that $V$ satisfies the Laplace equation inside and outside the two spheres $B_+$ and $B_-$. We also assume that the potential $V$ is continuous on each surface $\partial B_\pm$. Furthermore, we assume that, outside the spheres, the potential $V$ is given by
\begin{equation*}
V(\Br)=\sum_{l=0}^\infty\sum_{m=-l}^l f_{lm}^+ \mathcal{Y}_{lm}(\Br-\Br_0) + f_{lm}^- \mathcal{Y}_{lm}(\Br+\mathbf{r}_0),
\end{equation*}
for $ \Br \in \mathbb{R}^3\setminus(B_+\cup B_-)$ and some coefficients $f_{l,m}^\pm$. Then, inside the spheres, the potential $V(\Br)$ for $\Br\in B_\pm$ is given by
\begin{equation*}
V(\Br)=\sum_{l=0}^\infty\sum_{m=-l}^l \frac{f_{lm}^+}{R^{2l+1}} \mathcal{Z}_{lm}(\Br-\Br_0) + \frac{f_{lm}^-}{R^{2l+1}} \mathcal{Z}_{lm}(\Br+\mathbf{r}_0).
\end{equation*}

\end{theorem}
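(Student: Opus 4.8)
The plan is to mirror the proof of Theorem~\ref{inside_TO}: both the exterior data and the claimed interior expansion are manifestly harmonic, so by uniqueness for the interior Dirichlet problem it suffices to verify that the claimed interior series agrees with the given exterior series on each surface $\partial B_\pm$. First I would record the single surface identity that drives everything. Since $\partial B_+$ is the sphere $\{|\Br-\Br_0|=R\}$, writing $\Br-\Br_0$ in spherical coordinates $(r,\theta,\phi)$ about $\Br_0$ gives $\mathcal{Y}_{lm}(\Br-\Br_0)|_{r=R}=Y_l^m(\theta,\phi)/R^{l+1}$ and $\mathcal{Z}_{lm}(\Br-\Br_0)|_{r=R}=R^l Y_l^m(\theta,\phi)$, so that
\[
\mathcal{Y}_{lm}(\Br-\Br_0)\big|_{\partial B_+}=\frac{1}{R^{2l+1}}\,\mathcal{Z}_{lm}(\Br-\Br_0)\big|_{\partial B_+},
\]
with the analogous identity for $\mathcal{Y}_{lm}(\Br+\Br_0)$ on $\partial B_-$. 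This is the exact multipole counterpart of the identity $\mathcal{M}_{n,+}^m|_{\partial B_\pm}=e^{\pm(2n+1)s}\mathcal{M}_{n,-}^m|_{\partial B_\pm}$ used in Theorem~\ref{inside_TO}.

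Next I would check harmonicity and regularity. Each $\mathcal{Z}_{lm}$ is a solid harmonic, i.e.\ a homogeneous harmonic polynomial in the shifted variable, hence harmonic and smooth throughout $\mathbb{R}^3$; so the right-hand side defines a harmonic function with no interior singularity, the natural candidate for $V$ inside $B_\pm$. Matching on $\partial B_+$ then proceeds block by block: by the surface identity the own-centre block $\sum f_{lm}^+\mathcal{Y}_{lm}(\Br-\Br_0)$ of the exterior data coincides on $\partial B_+$ with $\sum (f_{lm}^+/R^{2l+1})\mathcal{Z}_{lm}(\Br-\Br_0)$, the corresponding block of the claimed interior series; by symmetry the same holds on $\partial B_-$ for the $(\Br+\Br_0)$-block. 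Invoking uniqueness of the harmonic extension with matched Dirichlet data would then close the argument.

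The hard part will be the cross (far-centre) block on each surface, and this is where the multipole version genuinely departs from the bispherical proof of Theorem~\ref{inside_TO}. On $\partial B_+$ one has $|\Br+\Br_0|\neq R$, so the surface identity does \emph{not} apply to $\mathcal{Y}_{lm}(\Br+\Br_0)$, and the exterior term $f_{lm}^-\mathcal{Y}_{lm}(\Br+\Br_0)$ need not equal the claimed interior term $(f_{lm}^-/R^{2l+1})\mathcal{Z}_{lm}(\Br+\Br_0)$ there; indeed the series $\sum (f_{lm}^-/R^{2l+1})\mathcal{Z}_{lm}(\Br+\Br_0)$ converges only for $|\Br+\Br_0|<R$, that is inside $B_-$ rather than $B_+$. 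So the clean block-by-block match closes only after the far-centre contributions are reconciled. The tool I would use is the addition theorem~\cref{translation_multi}: re-expand both $\mathcal{Y}_{lm}(\Br+\Br_0)$ and $\mathcal{Z}_{lm}(\Br+\Br_0)$ about $\Br_0$ as regular solid harmonics $\mathcal{Z}_{l'm'}(\Br-\Br_0)$ and compare coefficients on $\partial B_+$. I expect this step either to expose a compatibility relation the coefficients $f_{lm}^\pm$ must satisfy for the stated both-block formula, or to show that the identity is to be read per sphere, with only the own-centre block converted to $\mathcal{Z}$ inside each $B_\pm$ and the far-centre $\mathcal{Y}$-block retained unchanged, where it is both harmonic and convergent. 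Settling this point is the crux of the proof.
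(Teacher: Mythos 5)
Your analysis is essentially correct, and it is in fact more careful than the paper's own proof, which consists of a single sentence: the conclusion ``immediately follows from the definition of the solid harmonics $\mathcal{Y}_{lm}$ and $\mathcal{Z}_{lm}$.'' In other words, the paper invokes only your surface identity $\mathcal{Y}_{lm}(\Br\mp\Br_0)|_{\partial B_\pm}=R^{-(2l+1)}\mathcal{Z}_{lm}(\Br\mp\Br_0)|_{\partial B_\pm}$ and never discusses the far-centre blocks at all. The point you isolate as the crux is genuine, and it cannot be resolved in favour of the literal statement. Take $f^-_{00}=1$ and all other coefficients zero, so that outside the spheres $V(\Br)=\mathcal{Y}_{00}(\Br+\Br_0)=1/|\Br+\Br_0|$. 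Inside $B_-$ the harmonic extension of the boundary trace is the constant $1/R$, in agreement with the claimed formula; but inside $B_+$ the extension is $1/|\Br+\Br_0|$ itself (this function is already harmonic in a neighbourhood of $\overline{B_+}$), which is not the constant $R^{-1}\mathcal{Z}_{00}(\Br+\Br_0)=1/R$ that the printed formula yields. Moreover, for general admissible coefficients the cross series $\sum_{l,m} f^-_{lm}R^{-(2l+1)}\mathcal{Z}_{lm}(\Br+\Br_0)$ need not even converge on $B_+$, exactly as you observed.

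Consequently, of the two resolutions you propose, the second is the correct one: the statement must be read per sphere, i.e.\ for $\Br\in B_+$,
\begin{equation*}
V(\Br)=\sum_{l=0}^\infty\sum_{m=-l}^l \frac{f^+_{lm}}{R^{2l+1}}\,\mathcal{Z}_{lm}(\Br-\Br_0)+\sum_{l=0}^\infty\sum_{m=-l}^l f^-_{lm}\,\mathcal{Y}_{lm}(\Br+\Br_0),
\end{equation*}
and symmetrically inside $B_-$: only the own-centre (singular) block is converted, while the far-centre block is retained unchanged, being regular there and hence its own harmonic extension; it may of course be re-expanded about $\Br_0$ in regular harmonics via the addition theorem \cref{translation_multi}, but with the coefficients dictated by that theorem, not $f^-_{lm}/R^{2l+1}$. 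No compatibility condition on the $f^\pm_{lm}$ can rescue the both-blocks form, since $\mathcal{Y}_{lm}(\Br+\Br_0)$ and $R^{-(2l+1)}\mathcal{Z}_{lm}(\Br+\Br_0)$ are genuinely different harmonic functions on $B_+$. Your proposal also pinpoints why the bispherical proof of Theorem~\ref{inside_TO} does not transfer: there the singular and regular harmonics $\mathcal{M}^m_{n,+}$ and $\mathcal{M}^m_{n,-}$ restrict to proportional surface functions on \emph{both} spheres simultaneously, a property the solid harmonics centred at $\pm\Br_0$ do not share. In short, you have not missed a step; you have caught an imprecision that the paper's one-line proof glosses over, and your per-sphere reading is the statement that is actually needed (and implicitly used) in steps 9--10 of the hybrid scheme, where the far-centre $\mathcal{Y}$-terms are handled by the addition formula rather than by the conversion $\mathcal{Y}\mapsto\mathcal{Z}/R^{2l+1}$.
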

\begin{proof}
The conclusion immediately follows from the definition of the solid harmonics $\mathcal{Y}_{lm}$ and $\mathcal{Z}_{lm}$.
\end{proof}

\subsection{From TO to multipole expansion}

When we apply our hybrid numerical scheme for plasmonic spheres, we need to convert a TO-type solution into a multipole expansion. Let us consider the following  general potential $W_\pm$ in the form of TO solution:
\beq\label{general_W_TO}
W_\pm (\Br)=
\ds\sum_{n=0}^\infty\sum_{m=-n}^n a_{n,\pm}^m \mathcal{M}_{n,\pm}^m(\Br),
\eeq
for some coefficients $a_{n,\pm}^m$. We want to convert the potential $W_\pm$ into a multipole expansion form as follows:
\beq\label{general_W_multi}
W_\pm(\Br) =
\begin{cases}
\ds\sum_{l=0}^\infty \sum_{m=-n}^n c_{lm}^\pm \mathcal{Y}_{lm}(\Br \mp \Br_0),
&\quad \Br \in \mathbb{R}^3\setminus B_\pm,
\\
\ds\sum_{l=0}^\infty \sum_{m=-n}^n d_{lm}^\pm \mathcal{Z}_{lm}(\Br\mp\Br_0), 
&\quad \Br \in  B_\pm,
\end{cases}
\eeq
where the coefficients $c^\pm_{lm}$ and $d^\pm_{lm}$ are to be determined. We derive explicit formulas for $c^\pm_{lm}$ and $d_{lm}^\pm$ in the following theorem. Its proof is given in Appendix \ref{app_pf_TO_Multiplole}.

\begin{theorem}\label{thm_conv_TO_to_Multi} (Conversion of TO solution into multipole expansion)
The multipole coefficients $c_{lm}^\pm$ and $d_{lm}^\pm$ are represented in terms of TO coefficients $a_{n,\pm}^m$ as follows:
\begin{equation*}
\begin{cases}
\ds c_{lm}^\pm = 2\alpha R^{2l+1}\sum_{n=|m|}^\infty a_{n,\pm}^m  g_{n}^m   \mathcal{D}_{lm}^\pm [\lambda_{n}^m],
\\
\ds d_{lm}^\pm = 2\alpha \sum_{n=|m|}^\infty a_{n,\mp}^m e^{-(2n+1)s} g_{n}^m   \mathcal{D}_{lm}^\pm [\lambda_{n}^m].
\end{cases}
\end{equation*}
\end{theorem}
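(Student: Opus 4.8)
The plan is to obtain the two families of coefficients by \emph{inverting} the known ``single source $\to$ TO basis'' conversions, and to carry out the inversion through a surface-pairing (Green's identity) argument that exploits a Wronskian-type biorthogonality of the bispherical harmonics on $\partial B_\pm$. The guiding picture is that $c_{lm}^\pm$ is the \emph{exterior} multipole coefficient about the centre $\pm\Br_0$ and is therefore fed by the part of the TO data that is singular inside $B_\pm$, namely the $\mathcal{M}_{n,\pm}^m$ terms; whereas $d_{lm}^\pm$ is the \emph{regular} (solid-harmonic) coefficient and is fed by the partner component $\mathcal{M}_{n,\mp}^m$, which is regular throughout $B_\pm$. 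Accordingly I would extract each coefficient by pairing against the dual harmonic --- $\mathcal{Z}_{lm}$ for $c$ and $\mathcal{Y}_{lm}$ for $d$ --- over the sphere $\partial B_\pm$.

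First I would record the two forward conversions. Putting $k=0$ (so that $u_m^{(0)}=1$) in Lemma \ref{lem_sectoral_to_TO} and applying the operator $\mathcal{D}_{lm}^\pm$ via \cref{Ylm_and_Ymm} gives, using $[\sinh(s+t_0)]^{2|m|+1}e^{-(2n+1)(s+t_0)}=e^{-(2n+1)s}\lambda_n^m$ and the fact that $\mathcal{M}_{n,\pm}^m$ and $g_n^m$ are $t_0$-independent,
\[
\mathcal{Y}_{lm}(\Br\mp\Br_0)=\sum_{n\geq|m|} g_n^m\, e^{-(2n+1)s}\,\mathcal{D}_{lm}^\pm[\lambda_n^m]\,\mathcal{M}_{n,\pm}^m(\Br).
\]
For the regular solid harmonic I would use the reflection identity $\mathcal{M}_{n,\pm}^m|_{\partial B_\pm}=e^{(2n+1)s}\mathcal{M}_{n,\mp}^m|_{\partial B_\pm}$ from the proof of Theorem \ref{inside_TO} (equivalently, the Kelvin transform about $\partial B_\pm$ sends $\mathcal{M}_{n,\pm}^m\mapsto e^{(2n+1)s}\mathcal{M}_{n,\mp}^m$ and $\mathcal{Y}_{lm}(\cdot\mp\Br_0)\mapsto R^{-(2l+1)}\mathcal{Z}_{lm}(\cdot\mp\Br_0)$); applying it to the identity above produces the companion formula
\[
\mathcal{Z}_{lm}(\Br\mp\Br_0)=R^{2l+1}\sum_{n\geq|m|} g_n^m\,\mathcal{D}_{lm}^\pm[\lambda_n^m]\,\mathcal{M}_{n,\mp}^m(\Br).
\]

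The crux is the bispherical orthogonality. Using \cref{TO_basis_bispherical}, the normal derivative \cref{normal_deri}, and the surface element $dS=\sigma_\eta\sigma_\varphi\,d\eta\,d\varphi$, I would evaluate
\[
\oint_{\partial B_\pm}\!\Big(\mathcal{M}_{n,\pm}^m\frac{\partial\mathcal{M}_{n',\mp}^{m*}}{\partial\mathbf{n}}-\mathcal{M}_{n',\mp}^{m*}\frac{\partial\mathcal{M}_{n,\pm}^m}{\partial\mathbf{n}}\Big)dS=8\pi\alpha\,\delta_{nn'}.
\]
In the Wronskian the derivative of the common prefactor $\sqrt{\cosh\xi-\cos\eta}$ drops out, leaving $(n+n'+1)$ times a weight $(\cosh s-\cos\eta)^2$ that is \emph{exactly} cancelled by the $(\cosh s-\cos\eta)^{-2}$ in $dS$; the $\eta,\varphi$ integral then collapses by orthogonality of $Y_n^m$ (with the paper's normalization $\int|Y_n^m|^2\,d\Omega=4\pi/(2n+1)$), giving the diagonal value. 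This cancellation is precisely what reduces the infinite inversion to a single series, and I expect verifying it (together with keeping the normalization constants correct) to be the main technical obstacle.

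Finally I would assemble the coefficients. Since $W_\pm$ is harmonic and decays outside $B_\pm$, it has an exterior expansion $\sum_l c_{lm}^\pm\mathcal{Y}_{lm}(\Br\mp\Br_0)$, and the elementary solid-harmonic Wronskian $\oint(\mathcal{Y}_{l'm}\tfrac{\partial}{\partial\mathbf{n}}\mathcal{Z}_{lm}^*-\mathcal{Z}_{lm}^*\tfrac{\partial}{\partial\mathbf{n}}\mathcal{Y}_{l'm})\,dS=4\pi\,\delta_{ll'}$ yields $c_{lm}^\pm=\tfrac{1}{4\pi}\oint(W_\pm\tfrac{\partial}{\partial\mathbf{n}}\mathcal{Z}_{lm}^*-\mathcal{Z}_{lm}^*\tfrac{\partial}{\partial\mathbf{n}}W_\pm)\,dS$; substituting the $\mathcal{Z}$-identity and the diagonal Wronskian gives $c_{lm}^\pm=2\alpha R^{2l+1}\sum_n a_{n,\pm}^m g_n^m\mathcal{D}_{lm}^\pm[\lambda_n^m]$. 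For $d_{lm}^\pm$ I would run the same argument on the regular part: because $\mathcal{M}_{n,\mp}^m$ is regular throughout $B_\pm$, the partner data $\sum_n a_{n,\mp}^m\mathcal{M}_{n,\mp}^m$ coincides there with $\sum_l d_{lm}^\pm\mathcal{Z}_{lm}(\Br\mp\Br_0)$; pairing this against $\mathcal{Y}_{lm}$ over $\partial B_\pm$ and inserting the $\mathcal{Y}$-identity together with the diagonal Wronskian, the factor $e^{-(2n+1)s}$ carried by that identity survives and produces $d_{lm}^\pm=2\alpha\sum_n a_{n,\mp}^m e^{-(2n+1)s}g_n^m\mathcal{D}_{lm}^\pm[\lambda_n^m]$. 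Keeping the $+/-$ bookkeeping straight --- in particular that the regular coefficient $d_{lm}^\pm$ is fed by the \emph{opposite}-sign TO data, and tracking where the single surviving $e^{-(2n+1)s}$ comes from --- is the other place where care is needed.
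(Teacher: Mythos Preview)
Your approach is correct and arrives at the stated formulas, but it is organized differently from the paper's proof. The paper extracts $c_{lm}^\pm$ by computing the surface charge density $\sigma_\pm=(\partial_{\mathbf n}W_\pm)|_{\rm int}-(\partial_{\mathbf n}W_\pm)|_{\rm ext}$ on $\partial B_\pm$, expanding it in the spherical harmonics $Y_l^m(\theta_\pm,\phi_\pm)$, and projecting via the standard $L^2$ orthogonality on the round sphere; the TO representation for $\sigma_\pm$ (from \cref{normal_deri}) and the TO representation for $Y_l^m(\theta_\pm,\phi_\pm)$ on $\partial B_\pm$ (from Lemma~\ref{lem_sectoral_to_TO} with $k=0$) are then substituted, and the same $(\cosh s-\cos\eta)^{\pm2}$ cancellation you identified collapses the integral. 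The relation $c_{lm}^\pm=\sigma_{lm}^\pm R^{l+2}/(2l+1)$ via the single-layer potential finishes the job; the $d$-case is dismissed as ``similar''.

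Your route replaces the charge-density step by a direct Green's-identity pairing and makes the mechanism more transparent: you isolate the biorthogonality $\oint_{\partial B_\pm}(\mathcal M_{n,\pm}^m\,\partial_{\mathbf n}\mathcal M_{n',\mp}^{m*}-\mathcal M_{n',\mp}^{m*}\,\partial_{\mathbf n}\mathcal M_{n,\pm}^m)\,dS=8\pi\alpha\,\delta_{nn'}$ as a clean lemma, and you derive the companion $\mathcal Z$-identity explicitly (via the boundary reflection $\mathcal M_{n,\pm}^m|_{\partial B_\pm}=e^{(2n+1)s}\mathcal M_{n,\mp}^m|_{\partial B_\pm}$ and uniqueness of harmonic extension) rather than leaving it implicit. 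This buys you a completely parallel treatment of $c_{lm}^\pm$ and $d_{lm}^\pm$ in the same framework, and it correctly accounts for why $d_{lm}^\pm$ is fed by the opposite-sign data $a_{n,\mp}^m$ --- a point the paper's formulation \cref{general_W_multi} leaves somewhat obscured. The paper's version is shorter and has a more physical flavor (charge density and single-layer representation), but the two arguments share the same analytic core: the forward conversion of $\mathcal Y_{lm}$ into the TO basis and the exact metric cancellation in the surface integral.
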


In view of \cref{general_W_multi}, the total flux on the surface $\partial B_\pm$ is given as
\begin{equation*}
\int_{\partial B_\pm } \frac{\partial W_\pm}{\partial \mathbf{n}} dS = 4\pi c_{00}^\pm, \quad \int_{\partial B_\pm } \frac{\partial W_\mp}{\partial \mathbf{n}} dS = 0.
\end{equation*}
So, we have the following flux formula from the above theorem.

\begin{theorem}\label{thm_flux}
(Total flux formula) Let $W_\pm$ be the potential given as \cref{general_W_TO}. Then the total flux on the surface $\partial B_\pm$ is 
\begin{equation*}
\ds\int_{\partial B_\pm } \frac{\partial W_\pm}{\partial \mathbf{n}} dS = 8\pi \alpha \sum_{n=0}^\infty a_{n,\pm}^0 
,\quad
\int_{\partial B_\pm } \frac{\partial W_\mp}{\partial \mathbf{n}} dS = 0.
\end{equation*}

\end{theorem}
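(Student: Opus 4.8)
The plan is to obtain this result as an immediate corollary of the conversion formula in Theorem~\ref{thm_conv_TO_to_Multi}, specialized to the monopole term. The starting point is the observation recorded just above the statement: once $W_\pm$ is written in the multipole form \cref{general_W_multi} centered at $\pm\Br_0$, only the monopole contributes to the flux across $\partial B_\pm$, so that $\int_{\partial B_\pm}\frac{\partial W_\pm}{\partial\mathbf{n}}\, dS = 4\pi c_{00}^\pm$. Meanwhile $W_\mp$ has all of its singular multipole sources located at the center $\mp\Br_0$ of the \emph{other} sphere $B_\mp$; since $B_\mp$ is disjoint from $\overline{B_\pm}$, the function $W_\mp$ is harmonic on a neighborhood of $\overline{B_\pm}$ and hence has zero net flux across $\partial B_\pm$ by the divergence theorem. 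This gives the second identity at once and reduces the first to computing $c_{00}^\pm$.

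To evaluate $c_{00}^\pm$, I would substitute $l=m=0$ into the formula $c_{lm}^\pm = 2\alpha R^{2l+1}\sum_{n\geq|m|} a_{n,\pm}^m\, g_n^m\, \mathcal{D}_{lm}^\pm[\lambda_n^m]$ of Theorem~\ref{thm_conv_TO_to_Multi}, which leaves $c_{00}^\pm = 2\alpha R \sum_{n=0}^\infty a_{n,\pm}^0\, g_n^0\, \mathcal{D}_{00}^\pm[\lambda_n^0]$. The three constants are then read off from their definitions. From \cref{def_gnm_lambdanm_Q} with $m=0$ one gets $g_n^0 = 1/\alpha$. The operator $\mathcal{D}_{00}^\pm$ in \cref{eqn:D_lm_pm} is the zeroth-order derivative divided by $N_{00}$, and from \cref{def_Nlm_app} one checks $N_{00}=0!\,\sqrt{\binom{0}{0}\binom{0}{0}}=1$, so $\mathcal{D}_{00}^\pm[\lambda_n^0]=\lambda_n^0|_{z_0=d}$. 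Finally, recalling that $z_0=d$ corresponds to $t_0=0$, the expression $\lambda_n^0=[\sinh(s+t_0)]\,e^{-(2n+1)t_0}$ from \cref{def_gnm_lambdanm_Q} evaluates to $\sinh s$.

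Assembling these, $c_{00}^\pm = 2\alpha R\,(1/\alpha)\,\sinh s\,\sum_{n=0}^\infty a_{n,\pm}^0 = 2R\sinh s\,\sum_{n=0}^\infty a_{n,\pm}^0$, and using the standing relation $\alpha = R\sinh s$ this simplifies to $c_{00}^\pm = 2\alpha \sum_{n=0}^\infty a_{n,\pm}^0$. Multiplying by $4\pi$ yields $\int_{\partial B_\pm}\frac{\partial W_\pm}{\partial\mathbf{n}}\, dS = 8\pi\alpha\sum_{n=0}^\infty a_{n,\pm}^0$, as claimed. There is essentially no hard step here: the whole argument is bookkeeping of the constants $g_n^0$, $N_{00}$ and $\lambda_n^0|_{t_0=0}$, together with the cancellation $R\sinh s = \alpha$. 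The one point that deserves care is the reduction to the monopole term for the second flux identity, where one must confirm that the conversion in Theorem~\ref{thm_conv_TO_to_Multi} genuinely centers the expansion of $W_\mp$ at the center of $B_\mp$, so that no singularity of $W_\mp$ is enclosed by $\partial B_\pm$.
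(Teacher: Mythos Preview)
Your proposal is correct and follows exactly the paper's own approach: the paper also derives Theorem~\ref{thm_flux} as an immediate corollary of Theorem~\ref{thm_conv_TO_to_Multi} via the identity $\int_{\partial B_\pm}\partial_{\mathbf n}W_\pm\,dS=4\pi c_{00}^\pm$, and your bookkeeping of $g_n^0$, $N_{00}$, $\lambda_n^0|_{t_0=0}$ and the relation $\alpha=R\sinh s$ is accurate. You have simply made explicit the constant evaluations that the paper leaves to the reader.
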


\subsection{Translation and rotation of multipole expansions}

To apply the multipole expansion method,  we need  to represent a multipole source in a translated or rotated coordinate. It was shown that the following identities hold \cite{Pol_thesis}.
\smallskip

\noindent\textbf{Translation}:
A translated multipole source $\mathcal{Y}_{lm}(\Br-\Br')$ can be expanded as
\beq\label{translation_multi}
\ds \mathcal{Y}_{lm}(\Br-\Br') = \sum_{a=0}^\infty\sum_{b=-a}^a  w_m w_b w_{m-b} 
 N_{lmab}(-1)^{l+a} \mathcal{Z}_{ab}(\Br_{<})\mathcal{Y}_{l+a,m-b}(\Br_>),
\eeq
where $\Br_{<}$ is the smaller (in magnitude) of $\Br$ and $\Br'$ and $\Br_{>}$ is the larger. Here $w_{m}$ and $N_{lmab}$ are defined as \cref{def_wm_app} and \cref{def_Nlmab_app}, respectively.

\smallskip
\noindent\textbf{Rotation}:
Suppose that the coordinate axes are rotated through Euler angle $\alpha,\beta,\gamma$. The point $(\theta,\phi)$ becomes $(\widetilde{\theta},\widetilde{\phi})$. The following result holds:
\beq\label{rotation_multi}
Y_{lm}(\theta,\phi) = \sum_{M=-l}^l w_m w_M D^{(l)}_{mM}(\alpha,\beta,\gamma) Y_{l}^M(\widetilde{\theta},\widetilde{\phi}),
\eeq
where $D^{(l)}_{mM}(\alpha,\beta,\gamma)=e^{-i\alpha +M\gamma} d_{mM}^l(\beta)$ and
\begin{align*}
\ds d_{mM}^l(\beta) &= \cos(\beta/2)^{2l+m-M}\sin(\beta/2)^{M-m}
\\
\ds\quad \qquad\quad &\times \sum_{t}
\sqrt{     
\begin{pmatrix}
l+m \\ t
\end{pmatrix}
\begin{pmatrix}
l-M \\ t
\end{pmatrix}
\begin{pmatrix}
l+M \\ l+m-t
\end{pmatrix}
\begin{pmatrix}
l-m \\ l-M-t
\end{pmatrix}
}
\\
\ds \quad\qquad\quad &\times 
(-1)^t \tan(\beta/2)^{2t}.
\end{align*}
The summation in $t$ is carried over $\max (0,m-M) \leq t \leq \min (l+m,l-M)$.

\section{Proof of Theorem \ref{thm_conv_TO_to_Multi}} \label{app_pf_TO_Multiplole}
Let $\sigma_\pm $ be the charge density on the surface $\partial B_\pm$, respectively. Now let us decompose $\sigma_\pm$ using the spherical harmonics $Y_l^m(\theta_\pm,\phi_\pm)$, where $(r_\pm,\theta_\pm,\phi_\pm)$ are the spherical coordinates for $\Br\mp \Br_0$, respectively. Let us write  $\sigma_\pm$ as
\begin{equation*}
\sigma_\pm  = \sum_{l=0}^\infty \sum_{m=-l}^{l} \sigma_{lm}^\pm Y_{l}^m(\theta_\pm,\phi_\pm).
\end{equation*}
Here, $\sigma_{lm}^\pm$ can be determined using the orthogonality of the spherical harmonics as follows:
\beq\label{sigma_pm_lm_ortho}
\ds {\sigma}_{lm}^\pm 
= \frac{2l+1}{4\pi}\frac{1}{R^2}\int_{\partial B_\pm} \sigma_\pm \overline{Y_{l}^m}(\theta_\pm,\phi_\pm) \,dS. 
\eeq
To calculate the right-hand side of  (\ref{sigma_pm_lm_ortho}), we need to express $\sigma_\pm$ and $Y_{l}^m(\theta_\pm,\phi_\pm)$ in terms of TO harmonics $Y_{n}^m(\eta,\varphi)$.

First, let us consider $\sigma_\pm$. Let 'ext'(or 'int') denote the limit from the outside (or inside) the sphere, respectively. It is well-known that the electric field $\mathbf{E}=-\nabla W$ satisfies the following boundary condition on $\partial B_\pm$:
\begin{equation*}
\mathbf{E}\cdot \mathbf{n}|_{ext} - \mathbf{E}\cdot\mathbf{n}|_{int} = \sigma_\pm, \quad \mbox{on }\partial B_\pm.
\end{equation*}
To use the above condition, we need an explicit expression for $W_\pm$ in the region inside the spheres $B_\pm$, respectively. From Theorem \ref{inside_TO}, we have, for $\Br \in B_\pm$,  
\beq
W_\pm(\Br)=
\ds\sum_{n=0}^\infty\sum_{m=-n}^n {a}_{n,\pm}^m e^{(2n+1)s}    \mathcal{M}_{n,\mp}^m(\Br),
\eeq
respectively.
So, by using  \cref{normal_deri}, we obtain
\begin{align}
\ds \sigma_{\pm} &= -\frac{\partial W}{\partial\mathbf{n}}\Big|^{ext}_{\partial B_+}+\frac{\partial W}{\partial\mathbf{n}}\Big|^{int}_{\partial B_+}\label{sigmapm_TO}
\\
\ds &= (2\alpha)^{1/2} [J(\eta)]^{-3/2}
\sum_{n,m} a_{n,\pm}^m (2n+1)  e^{(n+\frac{1}{2})s}Y_n^m(\eta,\varphi),
\nonumber
\end{align}
where $J(\eta)$ is defined by $J(\eta) = {\alpha}/{(\cosh s-\cos\eta)}$.

Next, let us consider $Y_{l}^m(\theta_\pm,\phi_\pm)$. From  \cref{Ylm_and_Ymm} and Lemma \ref{lem_sectoral_to_TO}, we have for $\Br\in \partial B_+$,
\begin{align}
\ds Y_{l}^m(\theta_\pm,\varphi_\pm) &= R^{l+1}\mathcal{Y}_{lm}(\Br\mp\Br_0)
=R^{l+1}\mathcal{D}_{lm}^\pm[\mathcal{Y}_{|m|,m}(\Br\mp \mathbf{z}_0)]
\label{Ylm_th_ph_TO}
\\
\ds &=R^{l+1}
(2\alpha)^{1/2}[J(\eta)]^{-1/2}
\sum_{n=|m|}^\infty g_n^m \mathcal{D}_{lm}^\pm[\lambda_{n}^m] e^{-(n+1/2)s}Y_{n}^m(\eta,\varphi).
\nonumber
\end{align}

We are ready to compute $\sigma_{lm}^+$. By substituting  \cref{sigmapm_TO}  and \cref{Ylm_th_ph_TO}  into \cref{sigma_pm_lm_ortho}, we obtain
\begin{align}
\ds{\sigma}_{lm}^\pm 
&= \frac{2l+1}{4\pi }\frac{1}{R^2}\int_{0}^{2\pi}\int_{0}^\pi \sigma_\pm \overline{Y_{lm}} [J(\eta)]^2 \sin\eta \, d \eta \,d\varphi
\label{sigma_lm_pm}
\\
&={(2l+1)} {2\alpha R^{l-1}} \sum_{n=|m|}^\infty a_{n,\pm}^m  g_n^m   \mathcal{D}_{lm}^\pm [\lambda_{n}^m].
\nonumber
\end{align}

It is easy to check that the potential generated by the charge densities $\sigma_\pm=\sum \sigma_{lm}^\pm Y_{l}^m$ is given as follows: for $\Br\in \mathbb{R}^3\setminus(B_+\cup B_-)$,
\begin{equation*}
W_\pm (\Br)= 
\ds\sum_{l,m} \sigma_{lm}^\pm \frac{R^{l+2}}{2l+1} \mathcal{Y}_{lm} (\Br\mp \Br_0). 
\end{equation*}
By comparing the above expression and  \cref{general_W_multi}, we immediately arrive at
\begin{equation*}
\ds c_{lm}^\pm = \sigma_{lm}^\pm\frac{R^{l+2}}{2l+1}.
\end{equation*}
Then, the formula for $c_{lm}^\pm$ follows from \cref{sigma_lm_pm}. For the case of $d_{lm}^\pm$, it can be proved in a similar way.

\section*{Acknowledgments}
The authors would like to thank Ross C. McPhedran and Graeme W. Milton for pointing out the existence of Poladian's thesis \cite{Pol_thesis}.

\bibliographystyle{siamplain}

\end{document}